\newtheorem{thm}{Theorem}[section]
\newtheorem{cor}[thm]{Corollary}
\newtheorem{lem}{Lemma}[section]
\newtheorem{prop}{Proposition}[section]
\theoremstyle{definition}
\newtheorem{defn}{Definition}[section]
\newtheorem{exm}{Example}
\newtheorem{rem}{Remark}[section]
\numberwithin{equation}{section}
\newcommand{\AuthorInfo}[4]{%
  \textsc{#1}%
  \IfStrEq{#4}{true}{$^{*}$}{}\\
  #2\\
  \textit{E-mail:} \texttt{#3}%
  \IfStrEq{#4}{true}{\\\textit{$^{*}$Corresponding author.}}{}
  \par\vspace{0.5em}
}
\title[Quasi-Minnaert Resonances in High-contrast acoustic Structures]{Quasi-Minnaert Resonances in High-contrast acoustic Structures  and Applications to Invisibility Cloaking}
\date{} 
\begin{document}

	\maketitle

	\begin{center}
\bigskip
\footnotesize
\AuthorInfo{Weisheng Zhou}{School of Mathematics, Jilin University, Changchun 130012, People's Republic of China}{wszhou1211@163.com, zhouws24@mails.jlu.edu.cn}{false}

\AuthorInfo{Huaian Diao}{School of Mathematics and Key Laboratory of Symbolic Computation and Knowledge Engineering of Ministry of Education, Jilin University, Changchun 130012, People's Republic of China}{diao@jlu.edu.cn, hadiao@gmail.com}{true}

\AuthorInfo{Hongyu Liu}{Department of Mathematics, City University of Hong Kong, Kowloon, Hong Kong SAR, People's Republic of China}{hongyu.liuip@gmail.com, hongyliu@cityu.edu.hk}{false}
\end{center}
\normalsize

	\begin{abstract}
		This paper investigates a novel quasi-Minnaert resonance phenomenon in acoustic wave propagation through high-contrast medium in both two and three dimensions, occurring in the sub-wavelength regime. These media are characterized by physical properties significantly distinct from those of a homogeneous background. The quasi-Minnaert resonance is defined by two primary features: boundary localization, where the $L^2$-norms of the internal total field and the external scattered field  exhibit pronounced concentration near the boundary, and surface resonance, marked by highly oscillatory behavior of the fields near the boundary. In contrast to classical Minnaert resonances, which are associated with a discrete spectral spectrum tied to physical parameters, quasi-Minnaert resonances exhibit analogous physical phenomena but with a continuous spectral spectrum. Using layer potential theory and rigorous asymptotic analysis, we demonstrate that the coupling between a high-contrast material structure, particularly with radial geometries, and a carefully designed incident wave is critical for inducing quasi-Minnaert resonances. Extensive numerical experiments, involving radial geometries (e.g., unit disks and spheres) and general-shaped geometries (e.g., hearts, Cassini ovals, and clovers in $\mathbb{R}^2$, and spheres in $\mathbb{R}^3$), validate the occurrence of these resonances. Furthermore, we numerically demonstrate that quasi-Minnaert resonances induce an invisibility cloaking effect in the high-contrast medium. These findings have significant implications for mathematical material science and the development of acoustic cloaking technologies.		
		\medskip
		
		\noindent{\bf Keywords:}~~
 High-contrast acoustic structure, boundary localization, surface resonance, quasi-Minnaert resonance, invisibility cloaking. 
 
 \medskip

\noindent{\bf 2020 Mathematics Subject Classification:}~~35B34, 35C20, 47G40, 76Q05

	\end{abstract}
	
	\section{Introduction}

	The Minnaert resonance, originally introduced in \cite{M} to model the acoustic resonance of gas bubbles in liquids, plays a fundamental role in characterizing high-contrast acoustic scatterers within mathematical material science. For the case of a single bubble in a homogeneous medium, the formula for the Minnaert resonance frequencies of arbitrarily shaped bubbles was rigorously analyzed in \cite{ammari} using layer potential methods and Gohberg-Sigal theory. The acoustic scattering phenomenon induced by a large ensemble of bubbles in liquids, along with the Minnaert resonance within a subwavelength phononic bandgap, was rigorously investigated in \cite{AFL17,AFH20}. An explicit characterization via a variational formulation was subsequently provided in \cite{FH}. Recent investigations \cite{LM,LM2024} have demonstrated that the Minnaert resonant frequency corresponds to a pole of the resolvent operator associated with the acoustic scattering problem. Extensions of this framework to elastic scattering \cite{HHJ2023,WDSK25} and acoustic-elastic coupling scattering \cite{HHJ2022} have further broadened the applicability of Minnaert resonances. Applications of  Minnaert resonances to inverse problems, in both frequency and time domains, have been explored in \cite{AAM2021,SM2023}.   The Minnaert
	resonance phenomena have a very wide range of applications in the industry, including medical treatment \cite{SRNLLZ}, superabsorption \cite{VAMHPA} and superresolution imaging \cite{DDBPE}.  
	
	The wave field generated by high-contrast structures in the sub-wavelength regime exhibits boundary localization and high-frequency oscillation phenomena, as demonstrated in \cite{MRFMVA,MTJV}. Recently, we introduced the concept of \textit{quasi-Minnaert resonances} to characterize these effects in the context of elastic wave scattering (cf.~\cite{DTL}). Unlike classical Minnaert resonances \cite{ammari,FH,HHJ2022,HHJ2023}, which are primarily described in terms of the ratio of physical parameters between the high-contrast medium and the homogeneous background, quasi-Minnaert resonances depend on both the geometry of the high-contrast structure and the incident wave field in the sub-wavelength regime. In particular, the boundary localization and high oscillation of the wave field---referred to as \textit{surface resonance} in our context---are rigorously analyzed in \cite{DTL}.

	In this paper, we investigate quasi-Minnaert resonances in acoustic wave propagation through high-contrast structures in both two and three dimensions. Our primary objective is to explore the careful design of incident waves and high-contrast acoustic structures to induce quasi-Minnaert resonances. We thoroughly examine the interplay among the incident wave, the high-contrast structure, and the resulting resonances, with a particular focus on radial geometries. In contrast to prior studies on quasi-Minnaert resonances in elastic scattering \cite{DTL}, this work rigorously analyzes quasi-Minnaert resonances for high-contrast acoustic structures in $\mathbb{R}^2$ and $\mathbb{R}^3$. Additionally, we conduct extensive numerical experiments to validate our theoretical findings. Furthermore, we demonstrate numerically that invisibility cloaking \cite{ACKLM13,BILSA, ZLKC} can be achieved by leveraging quasi-Minnaert resonances through careful selection of the incident wave and high-contrast structure.

	Notably, for a fixed high-contrast structure, an appropriately designed incident wave can rigorously achieve boundary localization at a prescribed level, as established in Theorems~\ref{3dthm1} and~\ref{2dthm1}. Conversely, given a target boundary localization level for the internal total wave field  or the external scattered wave field, Corollary~\ref{cor:3.2} demonstrates how to select the density contrast parameter $\delta$ and adjust the incident wave  to realize the desired localization. The emergence of surface resonances in both internal and external fields is rigorously proven in Theorem~\ref{thm:gradient_u} under the setting of a fixed high-contrast structure with a suitably chosen incident wave. Moreover, for a given density contrast $\delta$, surface resonances can be induced by appropriately modifying the incident wave.

The quasi-Minnaert resonances introduced in this work are characterized by the simultaneous occurrence of boundary localization and surface resonance in both the internal total field  and the external scattered field, resulting from the subtle interplay between the acoustic high-contrast structure and the incident wave. Unlike classical Minnaert resonances, which correspond to discrete frequencies \cite{ammari}, the quasi-Minnaert resonance frequencies form a continuous spectrum. To validate these theoretical findings, we conduct extensive numerical experiments, including those involving general-shaped high-contrast structures, such as the unit disk, heart, Cassini oval, and clover in $\mathbb{R}^2$, and spherical geometries in $\mathbb{R}^3$.

Our analysis relies on layer potential techniques and the spectral theory of the Neumann--Poincar\'{e} operator, enabling a rigorous asymptotic analysis of the associated Helmholtz systems. Additionally, we numerically demonstrate that quasi-Minnaert resonances can induce invisibility cloaking by carefully selecting the incident wave and high-contrast structure, as shown through the quantity defined in \eqref{eq:invisibility_metric}. These results have significant implications for the development of cloaking technologies.

		The structure of this paper is organized as follows: 
	Section~\ref{sec2} introduces the mathematical framework and layer potential theory. 
	Section~\ref{sec3} analyzes the simultaneous boundary localization of both the internal total field and the external scattered field in the sub-wavelength regime in $\mathbb{R}^3$, and further explores surface resonance based on the established boundary localization phenomenon. 
	Section~\ref{sec4} investigates quasi-Minnaert resonance in $\mathbb{R}^2$. 
	Section~\ref{sec6} validates the theoretical results through numerical simulations and shows that quasi-Minnaert resonance induces a cloaking effect in the high-contrast medium.

	\section{Mathematical Setup}\label{sec2}
	
	In this section, we present the mathematical framework of our study. The high-contrast medium is modeled as a bounded, simply connected domain $\Omega \subset \mathbb{R}^d$ ($d = 2, 3$), where the boundary $\partial\Omega$ belongs to the class $C^{1,s}$ for some $0 < s < 1$. The acoustic properties of the medium $\Omega$ are characterized by the density $\rho_b \in \mathbb{R}_+$ and bulk modulus $\kappa_b \in \mathbb{R}_+$, while the background medium in $\mathbb{R}^d \setminus \Omega$ is described by the density $\rho \in \mathbb{R}_+$ and bulk modulus $\kappa \in \mathbb{R}_+$. Throughout this paper, we denote the acoustic material configuration by $(\Omega; \rho_b, \kappa_b)$ and the background by $(\mathbb{R}^d \setminus \overline{\Omega}; \rho, \kappa)$. Let $u^i$ be an incident acoustic wave that satisfies the homogeneous background Helmholtz equation
	\begin{align} \label{eq:u_i}
		\nabla \cdot \left( \frac{1}{\rho} \nabla u^i(\mathbf{x}) \right) + \frac{\omega^2}{\kappa} u^i(\mathbf{x}) = 0 \quad \text{in } \mathbb{R}^d,
	\end{align}
	where $\omega \in \mathbb{R}_+$ denotes the angular frequency. The interaction of $u^i$ with the inhomogeneity $\Omega$ gives rise to a scattered field $u^s$, and the total field is denoted by $u = u^i + u^s$.
	
	The acoustic wave scattering problem is governed by the following system:
	\begin{equation} \label{eq:system}
		\begin{cases}
			\nabla \cdot \left( \frac{1}{\rho} \nabla u(\mathbf{x}) \right) + \frac{\omega^2}{\kappa} u(\mathbf{x}) = 0, & \text{in } \mathbb{R}^d \setminus \overline{\Omega}, \\
			\nabla \cdot \left( \frac{1}{\rho_b} \nabla u(\mathbf{x}) \right) + \frac{\omega^2}{\kappa_b} u(\mathbf{x}) = 0, & \text{in } \Omega, \\
			u|_+ = u|_-, & \text{on } \partial\Omega, \\
			\frac{1}{\rho} \frac{\partial u|_+}{\partial \nu} = \frac{1}{\rho_b} \frac{\partial u|_-}{\partial \nu}, & \text{on } \partial\Omega, \\
			u^s := u - u^i & \text{satisfies the radiation condition}.
		\end{cases}
	\end{equation}
	Here, $\partial / \partial \nu$ denotes the outward normal derivative, and the notations $\left. \cdot \right|_{\pm}$ represent the trace limits from outside and inside $\partial \Omega$, respectively. We define the wavenumbers $k_b := \omega \sqrt{\rho_b / \kappa_b}$ in the medium $\Omega$ and $k := \omega \sqrt{\rho / \kappa}$ in the background. The Sommerfeld radiation condition \cite{JCN} ensures that the scattered field $u^s$ is outgoing and decays at infinity. It is expressed as:
	\begin{equation} \label{eq:src}
		\lim_{|\mathbf{x}| \to \infty} |\mathbf{x}|^{\frac{d-1}{2}} \left(  \nabla u^s(\mathbf{x})\cdot \hat {\mathbf{x} } - i k u^s(\mathbf{x}) \right) = 0,
	\end{equation}
	where $i = \sqrt{-1}$ denotes the imaginary unit and $\hat {\mathbf{x}}:=\frac{\mathbf{x}}{|\mathbf{x}|}$.  Acoustic wave scattering finds extensive applications in imaging, non-destructive testing, and geophysics. For related studies, we refer the reader to \cite{D1,D2,D3,D4,D5} and the references cited therein.

	In this article, we investigate acoustic wave scattering in a high-contrast medium $\Omega$, embedded within a homogeneous background material, under the sub-wavelength regime. A canonical physical example is a liquid medium containing dispersed gas bubbles, as studied in \cite{KA, PW, MRFMVA}. The configurations discussed herein satisfy the physical constraints of such systems. We focus on the scattering properties of a high-contrast medium within a homogeneous background, characterized by significant disparities in density and bulk modulus. Specifically, we assume:
	\begin{equation}
		\label{eq:delta}
		\delta = \frac{\rho_b}{\rho} \ll 1, \quad \delta' = \frac{\kappa_b}{\kappa} \ll 1,
	\end{equation}
	where $\delta$ represents the density contrast and $\delta'$ denotes the bulk modulus contrast between the background (with density $\rho_b$ and bulk modulus $\kappa_b$) and the medium $\Omega$ (with density $\rho$ and bulk modulus $\kappa$). The material structure $(\Omega; \rho_b, \kappa_b)$ is termed a high-contrast acoustic configuration, a framework previously employed to analyze Minnaert resonances in \cite{ammari}. The density contrast parameter $\delta$ is pivotal in our analysis, particularly for understanding boundary localization and surface resonance phenomena.

	Recall that the speed of the background is given by $v := \sqrt{\kappa/\rho}$ and $v_b := \sqrt{\kappa_b/\rho_b}$ is the wave propagating speed in the high-contrast medium. We introduce the additional parameter
	\begin{equation}\label{eq:tau}
		\tau = \frac{v_b}{v} = \frac{k_b}{k} = \mathcal{O}(1),
	\end{equation}
	which indicates that the contrast in wave speeds inside and outside the medium is not significant. Moreover, we impose the scaling condition on the wave speeds:
	\begin{equation}\label{eq:wv}
		v = \mathcal{O}(1), \quad v_b = \mathcal{O}(1),
	\end{equation}
	ensuring that both sound speeds remain bounded independently of the high-contrast parameters $\delta$ and $\delta'$. Such material configurations have also been studied in the literature; see, e.g., \cite{ammari}. The sub-wavelength regime implies that the size of the high-contrast medium $\Omega$ is much smaller than the typical wavelength of the incident wave, quantified by the condition
	\begin{equation}\label{eq:sw}
		\omega \cdot \mathrm{diam}(\Omega) \ll 1,
	\end{equation}
	where $\omega$ is the angular frequency. 
	
	To simplify the analysis, we adopt a coordinate normalization so that the diameter of $\Omega$ satisfies
	\begin{equation}\label{eq:diamOmega}
		\mathrm{diam}(\Omega) = \mathcal{O}(1).
	\end{equation}
	It follows from \eqref{eq:sw} and \eqref{eq:diamOmega} that the angular frequency accordingly  satisfy
	\begin{equation}\label{eq:omega}
		\omega = o(1).
	\end{equation}
	Consequently, the corresponding wavenumbers in the high-contrast medium and in the background satisfy the asymptotic  relation:
	\begin{equation}\label{eq:wn1}
		k_b = o(1), \quad k = o(1).
	\end{equation}

	Next, we introduce the layer potential operators for the Helmholtz equation. Let $\boldsymbol{G}^{k}(\mathbf{x})$ denote the fundamental solution of the Helmholtz equation in $\mathbb{R}^{d}$, which is given by  
	\[
	\boldsymbol{G}^{k}(\mathbf{x}) = \begin{cases}  
		-\frac{i}{4} H_0^{(1)}(k|\mathbf{x}|), & d = 2, \\  
		-\frac{e^{ik|\mathbf{x}|}}{4\pi|\mathbf{x}|}, & d = 3,  
	\end{cases}
	\]
	where $H_0^{(1)}(k|\mathbf{x}|)$ is the Hankel function of the first kind of order zero, and this fundamental solution satisfies the radiation condition \eqref{eq:src}. For any bounded Lipschitz domain $\Omega \subset \mathbb{R}^d$, we define the single-layer potential operator $\mathcal{S}^{k}_{\partial \Omega}: H^{-1/2}(\partial \Omega) \to H^{1}(\mathbb{R}^d \setminus\partial \Omega)$ as  
	\begin{align}\label{eq:S_k} 
		\mathcal{S}^{k}_{\partial \Omega}[\varphi](\mathbf{x}) = \int_{\partial \Omega} \boldsymbol{G}^{k}(\mathbf{x} - \mathbf{y}) \varphi(\mathbf{y}) \, \mathrm{d}s(\mathbf{y}).
	\end{align}
	Moreover, the single-layer potential operator satisfies the following jump relation on the boundary:
	\begin{equation}\label{eq:jump}
		\left.\frac{\partial\left({\mathcal{S}}_{\partial \Omega}^{k}[\varphi]\right)}{\partial \nu}\right|_{ \pm}(\mathbf{x})=\left( \pm \frac{1}{2} \mathcal{I}+\left({\mathcal{K}}_{\partial \Omega}^{k, *}\right)\right)[\varphi](\mathbf{x}).
	\end{equation} where
	\begin{align}\label{K_k*}
		\mathcal{K}_{\partial \Omega}^{k,*}[\varphi](\mathbf{x}) = \text{p.v.} \int_{\partial \Omega} \frac{\partial \boldsymbol{G}^{k}(\mathbf{x}-\mathbf{y})}{\partial \nu(\mathbf{x})} \varphi(\mathbf{y}) \, \mathrm{d}s(\mathbf{y}), \quad \mathbf{x} \in \partial \Omega.
	\end{align}
	The operator $\mathcal{K}_{\partial \Omega}^{k,*}: H^{-1/2}(\partial \Omega) \to H^{-1/2}(\partial \Omega)$ is known as the Neumann-Poincar\'e (NP) operator. The notation ``p.v." represents the Cauchy principal value. For further details on the mapping properties of the operators introduced above, we refer the readers to \cite{JCN,HH}.

	According to the layer potential theory described above, we can seek a solution to the system $\eqref{eq:system}$ in the following form
	\begin{equation}\label{eq:layer}
		u=\begin{cases}
			\mathcal S^{k_b}_{\partial \Omega}[\varphi_b](\mathbf{x}),  &\quad \mbox{in}\quad \Omega,\\
			\mathcal S^{k}_{\partial \Omega}[\varphi](\mathbf{x})+u^i,   &\quad \mbox{in}\quad \mathbb R^{d} \setminus \overline{\Omega},\\
		\end{cases}
	\end{equation}	
	where the density functions $\varphi_b,\varphi \in \mathit L^2(\partial \Omega)$. By utilizing the transmission conditions  $\eqref{eq:system}$
	and the jump relation $\eqref{eq:jump}$, we can obtain the following system of boundary integral equations:
	\begin{equation}\label{eq:5}
		\mathcal{A}(\omega, \delta)[\Phi](\mathbf{x})=F(\mathbf{x}), \quad \mathbf{x} \in \partial \Omega
	\end{equation}
	where 
	\begin{equation}\label{eq:integraleq}
		\begin{array}{c}
			\mathcal{A}(\omega, \delta)=\left(\begin{array}{cc}
				{\mathcal{S}}_{\partial \Omega}^{k_b} & -\mathcal{S}_{\partial \Omega}^{k} \\
				-\frac{I}{2}+{\mathcal{K}}_{\partial \Omega}^{k_b, *} & -\delta\left(\frac{I}{2}+\mathcal{K}_{\partial \Omega}^{k, *}\right)
			\end{array}\right), \\
			\Phi=\left(\begin{array}{c}
				\varphi_b \\
				\varphi
			\end{array}\right),\quad F=\left(\begin{array}{c}
				{u}^i \\
				\delta\partial_\nu u^i
			\end{array}\right).
		\end{array}
	\end{equation}
	It is worth noting that the scattering problem $\eqref{eq:system}$ that is equivalent to the boundary integral equations $\eqref{eq:5}$. 
	
	Throughout the paper, we define \(B_R\) as a three-dimensional ball or a two-dimensional disk of radius \(R \in \mathbb{R}_+\), centered at the origin. The boundary of \(B_R\) is denoted by \(\mathbb{S}_R := \partial B_R\), and we let \(\mathbb{S} := \mathbb{S}_1\) represent the unit sphere or circle. Furthermore, for a given bounded domain $\Omega$ in $\mathbb R^d$, we define two regions associated with \(\partial\Omega\). For any given \(R \in \mathbb{R}_+\) such that \(\Omega \subset B_R\) and  any given constants \(\zeta_i \) with $0<\zeta_i < 1$ $(i=1,2)$,  we define 
	\begin{equation}\label{df:bregion}
		\begin{aligned}
			\mathcal{N}_{1,\zeta_1}\left(\partial\Omega\right)&:=\{\mathbf{x}\in\Omega;\mathrm{dist}\left(\mathbf{x},\partial\Omega\right)<\zeta_1\},\\
			\mathcal{N}_{2,\zeta_2}\left(\partial\Omega\right)&:=\{\mathbf{x}\in B_R\setminus\overline{\Omega};\mathrm{dist}\left(\mathbf{x},\partial\Omega\right)<\zeta_2\},
		\end{aligned}
	\end{equation}
	where the parameters \(\zeta_1\) and \(\zeta_2\) quantifies the neighborhood of  the boundary. To rigorously define the quasi-Minnaert resonances, we first introduce two foundational concepts of boundary localization and surface resonance for the scattering problem \eqref{eq:system}.

	\begin{defn}\label{df:blocalization}
		Consider the acoustic scattering system \eqref{eq:system} induced by the incident wave $u^i$ satisfying \eqref{eq:u_i}, which generates the internal total field $u|_{\Omega }$ and the external scattered field $u^s|_{\mathbb{R}^d \setminus \overline{\Omega}}$. If there exist sufficiently small $\epsilon  \in \mathbb R_{+}$, and  small parameters \(\zeta_1\) and \(\zeta_2\), such that 
		\begin{equation}\label{eq:ibl}
			\frac{\|u\|_{L^2\left(\Omega \setminus \mathcal{N}_{1,\zeta_1}\left(\partial\Omega\right)\right)}}{\|u\|_{L^2\left(\Omega\right)}} \leq \epsilon,\quad
			\frac{\|u^s\|_{L^2\left(B_{R}\setminus\left(\mathcal{N}_{2,\zeta_2}\left(\partial\Omega\right)\cup \Omega\right)\right)}}{\|u^s\|_{L^2\left(B_R \setminus\Omega\right)}}\leq \epsilon,
		\end{equation}
		then the internal total field $u|_{\Omega}$ and the external scattered field $u|_{\mathbb{R}^d \setminus \overline{\Omega}}$ are referred to as internally and externally boundary localized, respectively. The level of the boundary localization is characterized by the parameter $\epsilon$.
	\end{defn}

	The following definition is related to the surface resonance for the internal total and external wave field associated with \eqref{eq:system}. 
	\begin{defn}\label{df:bresonance}
		Consider the acoustic scattering system \eqref{eq:system} induced by the incident wave $u^i$ satisfying \eqref{eq:u_i}. We say that the internal total field $u|_{\Omega}$ and the external scattered field $u^s|_{\mathbb{R}^{d}\setminus\Omega}$ exhibit surface resonance if there exists a non-trivial  incident wave $u^i$ such that
		\begin{equation}
			\frac{\|\nabla u\|_{L^2\left(\mathcal{N}_{1,\zeta_1}\left(\partial\Omega\right)\right)}}{\| u^{i}\|_{L^2\left(\Omega\right)}}\gg 1,\quad \frac{\|\nabla u^s\|_{L^2\left(	\mathcal{N}_{2,\zeta_2}\left(\partial\Omega\right)\right)}}{\| u^{i}\|_{L^2\left(\Omega\right)}}\gg 1.
		\end{equation}
		where $\mathcal{N}_{1,\zeta_1}\left(\partial\Omega\right)$ and $\mathcal{N}_{2,\zeta_2}\left(\partial\Omega\right)$ are defined by \eqref{df:bregion} with small parameters $\zeta_i\in \mathbb R_+$ ($i=1,2$). 
	\end{defn}

	\begin{rem}\label{rem:21}
		In this paper, we investigate acoustic scattering in the sub-wavelength regime. The diameter of the high-contrast medium $\Omega$ is significantly smaller than the wavelength of the incident wave, as characterized by \eqref{eq:sw}. Consequently,  the $L^2$-norm of the incident wave $u^i$ in $\Omega$ is small. To facilitate analysis, it is advantageous to normalize the $L^2$-norm of $u^i$ in $\Omega$. Definition~\ref{df:blocalization} relies on the conditions specified in \eqref{eq:ibl}. Notably, the ratios in \eqref{eq:ibl} remain invariant when both the numerator and denominator are divided by the $L^2$-norm of the incident wave $u^i$ in $\Omega$.
		
		We emphasize that surface resonance is characterized by the blow-up of the gradient norms of the internal total field $u|_{\Omega}$ and the external scattered field $u^s|_{\mathbb{R}^d \setminus \Omega}$ in a neighborhood of the boundary of the high-contrast structure $\Omega$, relative to the $L^2$-norm of the incident wave in $\Omega$. This phenomenon manifests as highly oscillatory behavior of $u|_{\Omega}$ and $u^s|_{\mathbb{R}^d \setminus \Omega}$ near the boundary of $\Omega$. Both boundary localization and surface resonance, as introduced in Definitions~\ref{df:blocalization} and~\ref{df:bresonance}, are defined relative to the $L^2$-norm of the incident wave $u^i$ in the high-contrast medium $\Omega$.
	\end{rem}

	Based on Definition \ref{df:blocalization} and Definition \ref{df:bresonance}, we introduce the definition of the quasi-Minnaert resonance.

	\begin{defn}\label{df:quasi-minnaert}
		Under the sub-wavelength regime, we say that a quasi-Minnaert resonance occurs if there exists a non-trivial incident wave \( u^{{i}} \) such that the internal total field \( u|_{\Omega} \) and the external scattered field \( u^s|_{\mathbb{R}^d \setminus \overline{D}} \), governed by the acoustic scattering system \eqref{eq:system}, simultaneously satisfy boundary localization in Definition \ref{df:blocalization} and surface resonance in Definition \ref{df:bresonance}.
		The operating frequency \( \omega \) corresponding to such an incident wave is defined as the quasi-Minnaert resonance frequency, and the high-contrast medium \( \Omega \) that guarantees this phenomenon is classified as the quasi-Minnaert resonator.
	\end{defn}

	\begin{rem}
		The quasi-Minnaert resonance introduced by Definition \ref{df:quasi-minnaert} differs from the Minnaert resonance \cite{ammari} in two principal aspects. First, the Minnaert resonance only depends on the physical configuration of the high-contrast medium \( \Omega \), whereas the excitation of the quasi-Minnaert resonance additionally depends on the specific chocie of the incident wave \( u^i \). Second, the Minnaert resonance frequency \( \omega \) occurs at discrete intervals and depends on the physical parameters of the high-contrast medium \( \Omega \). For instance, in the three-dimensional case, the resonance frequency \( \omega \) satisfies the relationship \( \frac{\omega}{\sqrt{\delta}} = \mathcal{O}(1) \) (cf. \cite[Theorem 3.1]{ammari}), where \( \delta \) denotes the density contrast parameter. Conversely, the quasi-Minnaert resonance frequencies exhibit a continuous spectral spectrum in the sub-wavelength regime.
	\end{rem}

	\section{Quasi-Minnaert resonances in $\mathbb R^3$}\label{sec3}

	In this section, we examine the case where the high-contrast medium $\Omega$ exhibits radial geometry in $\mathbb{R}^3$. After normalization, we assume that $\Omega$ is the unit ball in $\mathbb{R}^3$. We establish boundary localization for the internal total field $u|_{\Omega}$ and the external scattered field $u^s|_{\mathbb{R}^3 \setminus \Omega}$ in Theorems~\ref{3dthm1} and~\ref{3dthm2}, respectively, by appropriately selecting the incident wave for a fixed high-contrast density ratio $\delta$. Conversely, Corollary~\ref{cor:3.2} demonstrates that a prescribed level of boundary localization can be achieved by suitably choosing both the density ratio $\delta$ and the corresponding incident wave. Furthermore, in Theorem~\ref{thm:gradient_u}, we prove that the internal total field $u$ and the external scattered field $u^s$ exhibit surface resonance when the incident wave is carefully selected for a fixed high-contrast structure. Similarly, surface resonance can be induced by varying the density ratio $\delta$ while adapting the incident wave accordingly. Finally, we show that quasi-Minnaert resonances can emerge through the interplay between the high-contrast density ratio $\delta$ and the incident wave. Our analysis relies on layer potential theory and asymptotic expansions of Hankel and Bessel functions.

	In the following discussion, we introduce some spectral properties of layer potentials along with relevant notations. Let $Y_n^m$, with $n \in \mathbb{N}_0 := \mathbb{N} \cup \{0\}$ and $-n \leq m \leq n$, denote the spherical harmonic functions, which form an orthogonal basis for $L^2(\mathbb{S})$. Let $j_n(t)$ and $h_n(t)$ denote the spherical Bessel function and the spherical Hankel function of the first kind of order $n$, respectively. For a fixed $n \in \mathbb{N}$ and $0 < |z| \ll 1$, the following asymptotic expansions hold (cf.\ \cite{NIST}):
	\begin{equation}\label{eq:j_n_eps}
		j_n(z) = \frac{z^n}{(2n+1)!!} \left(1 - \frac{z^2}{2(2n+3)} + \mathcal{O}\left(\frac{z^4}{n^2}\right)\right),
	\end{equation}
	and
	\begin{equation}\label{eq:h_n_eps}
		h_n(z) = \frac{(2n-1)!!}{\mathrm{i} z^{n+1}} \left(1 - \frac{z^2}{2(-2n+1)} + \mathcal{O}\left(\frac{z^3}{n^2}\right)\right).
	\end{equation}
	Let $f_n(z)$ denote either $j_n(z)$ or $y_n(z)$, where $y_n(z)$ is the spherical Bessel function of the second kind of order $n$. The following recursive relations will be useful in our analysis:
	\begin{equation}\label{eq:recursive_eq}
		\begin{aligned}
			f_n'(z) &= f_{n-1}(z) - \frac{n+1}{z} f_n(z), \\
			f_n'(z) &= -f_{n+1}(z) + \frac{n}{z} f_n(z).
		\end{aligned}
	\end{equation}

	In this section, we focus on the case that  the high-contrast medium $\Omega$ is the unit ball.  Lemmas~\ref{lem:3d_s} and~\ref{lem:3d_k} present the spectral properties of the single-layer operator $\mathcal{S}^{k}_{\partial \Omega} $ and Neumann--Poincar\'e operator $\mathcal{K}_{\partial \Omega}^{k,*}$  defined in \eqref{eq:S_k} and \eqref{K_k*}, respectively. These results can be found in~\cite{XYC}.

	\begin{lem}\label{lem:3d_s}
		The eigensystem of the single-layer operator $\mathcal{S}_{\partial \Omega }^k$ defined in \eqref{eq:S_k} is given as follows
		\begin{equation}
			\mathcal{S}_{\partial \Omega }^k\left[Y_n^m\right](\mathbf{x})=-\mathrm{i} k  j_n(k ) h_n(k ) Y_n^m, \quad \mathbf{x} \in \partial \Omega,
		\end{equation}
		Moreover, the following two identities hold
		\begin{equation}
			\mathcal{S}_{\partial\Omega }^{k_b}\left[Y_n^m\right](\mathbf{x})=-\mathrm{i} k  j_n(k_b|\mathbf{x}|) h_n(k_b ) Y_n^m, \quad \mathbf{x} \in \Omega ,
		\end{equation}
		and
		\begin{equation}
			\mathcal{S}_{\partial \Omega }^k\left[Y_n^m\right](\mathbf{x})=-\mathrm{i} k  j_n(k ) h_n(k|\mathbf{x}|) Y_n^m, \quad \mathbf{x} \in \mathbb{R}^3 \backslash \Omega,
		\end{equation}
	\end{lem}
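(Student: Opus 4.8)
The plan is to compute the action of $\mathcal{S}_{\partial\Omega}^k$ on the spherical harmonics $Y_n^m$ directly from the definition \eqref{eq:S_k} using the addition theorem for the free-space Helmholtz Green's function in $\mathbb{R}^3$, which expands $\boldsymbol{G}^k(\mathbf{x}-\mathbf{y})$ in terms of $j_n$, $h_n$ and $Y_n^m$. First I would recall that for $|\mathbf{x}| < |\mathbf{y}|$ one has
\begin{equation*}
	\boldsymbol{G}^k(\mathbf{x}-\mathbf{y}) = -\frac{e^{\mathrm{i}k|\mathbf{x}-\mathbf{y}|}}{4\pi|\mathbf{x}-\mathbf{y}|} = \mathrm{i}k \sum_{n=0}^{\infty} \sum_{m=-n}^{n} j_n(k|\mathbf{x}|) h_n(k|\mathbf{y}|) \overline{Y_n^m(\hat{\mathbf{y}})} Y_n^m(\hat{\mathbf{x}}),
\end{equation*}
with the roles of $\mathbf{x}$ and $\mathbf{y}$ interchanged when $|\mathbf{x}| > |\mathbf{y}|$ (the sign and normalization constant here should be pinned down against the paper's convention $\boldsymbol{G}^k(\mathbf{x}) = -e^{\mathrm{i}k|\mathbf{x}|}/(4\pi|\mathbf{x}|)$). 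Since $\partial\Omega = \mathbb{S}$ is the unit sphere, on the boundary $|\mathbf{y}| = 1$, and integrating against $Y_n^m(\hat{\mathbf{y}})$ over $\mathbb{S}$ picks out exactly the $(n,m)$ term by orthonormality of the spherical harmonics. This yields $\mathcal{S}_{\partial\Omega}^k[Y_n^m](\mathbf{x}) = \mathrm{i}k\, j_n(k|\mathbf{x}|) h_n(k) Y_n^m(\hat{\mathbf{x}})$ for $|\mathbf{x}| < 1$ and $\mathcal{S}_{\partial\Omega}^k[Y_n^m](\mathbf{x}) = \mathrm{i}k\, j_n(k) h_n(k|\mathbf{x}|) Y_n^m(\hat{\mathbf{x}})$ for $|\mathbf{x}| > 1$; taking the trace $|\mathbf{x}| \to 1$ from either side gives the common boundary value $\mathrm{i}k\, j_n(k) h_n(k) Y_n^m$, which is consistent because $\mathcal{S}_{\partial\Omega}^k$ is continuous across $\partial\Omega$.

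Then the three displayed identities follow by specialization: the first is the trace on $\partial\Omega$ with wavenumber $k$; the second is the interior representation with wavenumber $k_b$ evaluated at $\mathbf{x} \in \Omega$, so $|\mathbf{x}| \le 1$ and the argument of $j_n$ is $k_b|\mathbf{x}|$ while the argument of $h_n$ is $k_b$ (restricted to the boundary where $|\mathbf{y}|=1$); the third is the exterior representation with wavenumber $k$ at $\mathbf{x} \in \mathbb{R}^3\setminus\Omega$. I would remark that the exterior expression automatically satisfies the Sommerfeld radiation condition \eqref{eq:src} because $h_n = h_n^{(1)}$ is outgoing, and the interior expression is smooth because $j_n$ is entire — this is precisely why the single-layer potential with these densities gives the correct building blocks for \eqref{eq:layer}. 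One should double-check the sign convention: the statement in the excerpt writes $-\mathrm{i}k\, j_n h_n$, which corresponds to a Green's function convention with the opposite overall sign from the one I wrote above, so the bookkeeping of the constant in $\boldsymbol{G}^k$ must be carried through carefully to land on exactly $-\mathrm{i}k\, j_n(k) h_n(k) Y_n^m$.

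The main obstacle is purely one of bookkeeping rather than of substance: getting the normalization constant and sign of the addition theorem exactly right relative to the paper's stated convention $\boldsymbol{G}^k(\mathbf{x}) = -e^{\mathrm{i}k|\mathbf{x}|}/(4\pi|\mathbf{x}|)$, and being careful about whether the spherical harmonics are taken real or complex (which affects whether a complex conjugate appears on $Y_n^m(\hat{\mathbf{y}})$) and how they are normalized in $L^2(\mathbb{S})$. A secondary minor point is justifying the termwise integration of the series expansion of $\boldsymbol{G}^k$ against $\varphi = Y_n^m$ over the compact sphere $\mathbb{S}$, which is immediate from uniform convergence of the expansion on $\{|\mathbf{x}| = r\} \times \mathbb{S}$ for any fixed $r \ne 1$, together with the continuity of $\mathcal{S}_{\partial\Omega}^k$ as an operator $H^{-1/2}(\partial\Omega) \to H^1_{\mathrm{loc}}(\mathbb{R}^3)$ to pass to the trace. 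Since the result is quoted from \cite{XYC}, I would present this as a concise derivation and defer the finer mapping-property details to that reference.
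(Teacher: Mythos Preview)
Your approach is correct and is the standard derivation: the addition theorem for the Helmholtz fundamental solution together with the $L^2(\mathbb{S})$-orthonormality of spherical harmonics immediately diagonalizes $\mathcal{S}_{\partial\Omega}^k$ on the unit sphere. The paper itself does not prove this lemma; it simply quotes the result from \cite{XYC}, so there is no ``paper's proof'' to compare against beyond the citation. Your sketch is exactly what one would expect such a reference to contain.

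Two small remarks. First, your displayed addition formula has the wrong overall sign relative to the paper's convention $\boldsymbol{G}^k(\mathbf{x}) = -e^{\mathrm{i}k|\mathbf{x}|}/(4\pi|\mathbf{x}|)$: the standard expansion gives $e^{\mathrm{i}k|\mathbf{x}-\mathbf{y}|}/(4\pi|\mathbf{x}-\mathbf{y}|) = \mathrm{i}k\sum j_n h_n Y_n^m \overline{Y_n^m}$, so the paper's $\boldsymbol{G}^k$ carries an extra minus sign and you land directly on $-\mathrm{i}k\, j_n h_n\, Y_n^m$ without any further adjustment. You flagged this as bookkeeping, which is fine, but it is worth stating once correctly rather than leaving it open. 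Second, note that the second displayed identity in the lemma as printed has a typo: the prefactor should be $-\mathrm{i}k_b$, not $-\mathrm{i}k$, since the single-layer potential there is taken with wavenumber $k_b$. Your derivation would produce the correct $-\mathrm{i}k_b$, and indeed the paper itself uses $-\mathrm{i}k_b$ when it applies this identity in the proof of Theorem~\ref{3dthm1} (see \eqref{eq:3d_u_norm}).
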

	\begin{lem}\label{lem:3d_k}
		The eigensystem of the N-P operator $\mathcal{K}_{\Omega }^{k,*}$ defined in \eqref{K_k*} is given as follows
		\begin{equation}
			\mathcal{K}_{\partial\Omega }^{k,*}[Y_n^m]=\lambda_{1,n,m}Y_n^m,\quad \mathbf{x} \in \partial \Omega.
		\end{equation}
		where the eigenvalues satisfy:
		\begin{equation}
			\lambda_{1,n,m}=-\frac{1}{2}-ik^2j_n(k)h_n^{(1)^{\prime}}(k)=\frac{1}{2}-ik^2j_n'(k)h_n^{(1)}(k).
		\end{equation}
	\end{lem}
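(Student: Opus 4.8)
The plan is to compute the eigensystem of $\mathcal{K}_{\partial\Omega}^{k,*}$ on the unit ball directly from its relation to the single-layer operator, using Lemma~\ref{lem:3d_s}. First, I would recall the jump relation \eqref{eq:jump}: for any density $\varphi$, one has $\partial_\nu(\mathcal{S}_{\partial\Omega}^{k}[\varphi])|_{\pm} = (\pm\tfrac12\mathcal{I} + \mathcal{K}_{\partial\Omega}^{k,*})[\varphi]$ on $\partial\Omega$. Taking $\varphi = Y_n^m$ and using the representation formulas from Lemma~\ref{lem:3d_s}, namely $\mathcal{S}_{\partial\Omega}^{k}[Y_n^m](\mathbf{x}) = -\mathrm{i}k\, j_n(k|\mathbf{x}|)h_n(k) Y_n^m$ for $\mathbf{x}\in\Omega$ and $\mathcal{S}_{\partial\Omega}^{k}[Y_n^m](\mathbf{x}) = -\mathrm{i}k\, j_n(k)h_n(k|\mathbf{x}|) Y_n^m$ for $\mathbf{x}\in\mathbb{R}^3\setminus\overline{\Omega}$, I would differentiate in the radial direction and evaluate at $|\mathbf{x}|=1$. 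Since on $\mathbb{S}$ the outward normal is $\hat{\mathbf{x}}$, the normal derivative is the radial derivative, which brings down a factor $k$ from the chain rule. This yields the two one-sided limits
\begin{equation*}
	\partial_\nu(\mathcal{S}_{\partial\Omega}^{k}[Y_n^m])|_{-} = -\mathrm{i}k^2 j_n'(k)h_n(k) Y_n^m, \qquad \partial_\nu(\mathcal{S}_{\partial\Omega}^{k}[Y_n^m])|_{+} = -\mathrm{i}k^2 j_n(k)h_n'(k) Y_n^m.
\end{equation*}

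Next I would combine these with the jump relation. Subtracting the two expressions and using $\partial_\nu(\mathcal{S}_{\partial\Omega}^{k}[\varphi])|_{+} - \partial_\nu(\mathcal{S}_{\partial\Omega}^{k}[\varphi])|_{-} = \varphi$ recovers the Wronskian identity $j_n(k)h_n'(k) - j_n'(k)h_n(k) = \mathrm{i}/k^2$ (up to the usual normalization of spherical Bessel/Hankel functions), which serves as a consistency check and also explains why the two formulas for $\lambda_{1,n,m}$ stated in the lemma agree. Adding the two one-sided limits (or equivalently using either jump formula and solving for the N-P operator's action) gives
\begin{equation*}
	\mathcal{K}_{\partial\Omega}^{k,*}[Y_n^m] = \left(\partial_\nu(\mathcal{S}_{\partial\Omega}^{k}[Y_n^m])|_{+} + \tfrac12 Y_n^m\right) = \left(-\tfrac12 - \mathrm{i}k^2 j_n(k)h_n'(k)\right) Y_n^m,
\end{equation*}
and equivalently, using the interior trace, $\mathcal{K}_{\partial\Omega}^{k,*}[Y_n^m] = \left(\tfrac12 - \mathrm{i}k^2 j_n'(k)h_n(k)\right) Y_n^m$. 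This is exactly the claimed eigenvalue $\lambda_{1,n,m}$, and it is independent of $m$, reflecting the rotational symmetry of the ball. Since $\{Y_n^m\}$ forms an orthogonal basis of $L^2(\mathbb{S})$, this determines the full eigensystem.

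The only genuinely delicate points are bookkeeping ones: getting the sign and the factor $k^2$ right in the chain rule when differentiating $j_n(k|\mathbf{x}|)$ and $h_n(k|\mathbf{x}|)$ at $|\mathbf{x}|=1$, and making sure the convention for the spherical Hankel function $h_n = h_n^{(1)}$ is consistent with the normalization used in Lemma~\ref{lem:3d_s} so that the Wronskian relation has the stated constant. I would verify the equality of the two forms of $\lambda_{1,n,m}$ precisely via the Wronskian identity $j_n(k)h_n^{(1)'}(k) - j_n'(k)h_n^{(1)}(k) = \mathrm{i}/k^2$, which shows $\bigl(-\tfrac12 - \mathrm{i}k^2 j_n h_n^{(1)'}\bigr) - \bigl(\tfrac12 - \mathrm{i}k^2 j_n' h_n^{(1)}\bigr) = -1 - \mathrm{i}k^2(j_n h_n^{(1)'} - j_n' h_n^{(1)}) = -1 - \mathrm{i}k^2 \cdot \mathrm{i}/k^2 = 0$. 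No real obstacle is expected beyond this; the result is a classical computation on the sphere and, as noted, can be quoted from \cite{XYC}, so the proof is essentially a verification that the stated formula matches the jump relation together with the explicit single-layer eigendata.
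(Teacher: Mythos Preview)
Your approach is correct and is the standard derivation; the paper itself does not prove this lemma but simply cites \cite{XYC}, so you are supplying a proof where the paper gives none. One small slip to fix: from the exterior jump relation $\partial_\nu(\mathcal{S}_{\partial\Omega}^{k}[\varphi])|_{+} = (\tfrac12\mathcal{I} + \mathcal{K}_{\partial\Omega}^{k,*})[\varphi]$ you should solve to get $\mathcal{K}_{\partial\Omega}^{k,*}[Y_n^m] = \partial_\nu(\mathcal{S}_{\partial\Omega}^{k}[Y_n^m])|_{+} - \tfrac12 Y_n^m$, not $+\tfrac12 Y_n^m$ as you wrote; your final eigenvalue is nonetheless stated correctly, and your Wronskian check confirming the equivalence of the two forms of $\lambda_{1,n,m}$ is exactly right.
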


	In this section, we consider the incident wave \( u_n^i \) given by
	\begin{equation}\label{eq:3d_ui}
		u^i_n(\mathbf{x}) = \sum_{m=-n}^{n} \beta_n^m j_n(k|\mathbf{x}|) Y_n^m(\hat{\mathbf{x}}),
	\end{equation}
	which is an entire solution to \eqref{eq:u_i}. Here, \(\left(\beta_n^{-n}, \cdots, \beta_n^n\right) \in \mathbb{C}^{2n+1}\) is a non-trivial constant vector.  We emphasize that the index \(n\), which characterizes \(u^i_n\), plays a crucial role in achieving the boundary localization of the resulting internal total wave field associated with the high-contrast medium \(\Omega\) and the incident wave \(u_n^i\), as established in Theorem~\ref{3dthm1}. Throughout the remainder of this paper, we denote by $u_n$ and $u_n^s$ the internal total wave field and the external scattered wave field, respectively, associated with the incident wave $u_n^i$.  In the following theorem, the boundary localization level \(\epsilon\) and the high-contrast density parameter \(\delta\) are prescribed. We denote $\lceil\cdot \rceil$ by the ceiling function.

	\begin{thm}\label{3dthm1}
		Consider the acoustic scattering problem in \( \mathbb{R}^3 \) as described by the system in \eqref{eq:system}, where the high-contrast medium \( \Omega \) is the unit ball. Under the assumptions given in \eqref{eq:sw} and \eqref{eq:wn1}, for any fixed \( \gamma_1 \in (0, 1) \) and sufficiently small \( \epsilon > 0 \), there exists an incident wave \( u^i_n \) defined in \eqref{eq:3d_ui} with \( n > n_1 \), where \( n_1 \) is defined as
		\begin{align}\label{n_1}
			n_1 =  \lceil\frac{1}{2}\left(\frac{\ln{\epsilon}}{\ln{\gamma_1}} - 3 \right)\rceil+1,
		\end{align}
		For this choice of \( n \), the following estimate holds:
		\begin{align}\label{eq:thm 31 ratio}
			\frac{\| u_n \|^2_{L^2\left(\Omega \setminus \mathcal{N}_{1,1-\gamma_1}(\partial \Omega)\right)}}{\| u_n \|^2_{L^2(\Omega)}} \leq  \mathcal{O}(\epsilon) + \mathcal{O}(\epsilon \omega^2) \ll 1,
		\end{align}
		where \( \mathcal{N}_{1, 1-\gamma_1}(\partial \Omega) \) is  defined in \eqref{df:bregion} and \( u_n \)  is the internal field to \eqref{eq:system} associated with $u_n^i$ and $\Omega$. This implies that, for sufficiently small \( \epsilon \), the internal field \( u_n \) is sharply localized near the boundary \( \partial \Omega \), and its contribution within the bulk of \( \Omega \) is negligible.
		
		Furthermore, the internal total field \( u_n \) in \( \Omega \) can be represented as a spherical harmonic expansion:
		\begin{equation}\label{internalus}
			u_n(\mathbf{x}) = \sum_{m=-n}^{n} \Psi_{1,n,m}(|\mathbf{x}|) Y_n^m(\hat{\mathbf{x}}),
		\end{equation}
		where  the coefficients \( \Psi_{1,n,m}(|\mathbf{x}|)  \) are given by:
		\[
		\Psi_{1,n,m}(|\mathbf{x}|) = -ik_b j_n(k_b |\mathbf{x}|) h_n(k_b) \left( -\frac{\delta(2n + 1)\beta_n^m k^n}{[\delta(n+1) + n](2n - 1)!!} + \mathcal{O} \left( \frac{\delta \beta_n^m k^{n+2}}{(2n + 3)!!} \right) \right).
		\]
	\end{thm}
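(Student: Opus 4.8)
The plan is to solve the boundary integral system \eqref{eq:5} explicitly for the special incident wave $u_n^i$ in \eqref{eq:3d_ui}, exploiting the fact that $\Omega$ is the unit ball so that every operator in $\mathcal{A}(\omega,\delta)$ is diagonalized by the spherical harmonics $Y_n^m$ via Lemmas~\ref{lem:3d_s} and~\ref{lem:3d_k}. First I would expand the densities as $\varphi_b=\sum_{m=-n}^n a_n^m Y_n^m$ and $\varphi=\sum_{m=-n}^n b_n^m Y_n^m$ (the right-hand side $F$ only excites mode $n$, since $u_n^i = \sum_m\beta_n^m j_n(k|\mathbf x|)Y_n^m$ has $u_n^i|_{\partial\Omega}=\sum_m\beta_n^m j_n(k)Y_n^m$ and $\partial_\nu u_n^i|_{\partial\Omega}=\sum_m\beta_n^m k j_n'(k)Y_n^m$). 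Substituting into \eqref{eq:integraleq} reduces the integral equation to a $2\times 2$ linear system for $(a_n^m,b_n^m)$ for each $m$, with entries built from $j_n,h_n$ evaluated at $k$ and $k_b=\tau k$. Solving this system by Cramer's rule gives $a_n^m$, and then by Lemma~\ref{lem:3d_s} the internal field is $u_n(\mathbf x)=\mathcal S^{k_b}_{\partial\Omega}[\varphi_b](\mathbf x)=\sum_m a_n^m(-ik_b j_n(k_b|\mathbf x|)h_n(k_b))Y_n^m(\hat{\mathbf x})$, which is exactly the claimed form \eqref{internalus} with $\Psi_{1,n,m}(|\mathbf x|)=-ik_b a_n^m j_n(k_b|\mathbf x|)h_n(k_b)$.

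Second, I would carry out the asymptotic analysis of the coefficient $a_n^m$ using the small-argument expansions \eqref{eq:j_n_eps}, \eqref{eq:h_n_eps} (note $k,k_b=o(1)$ by \eqref{eq:wn1}) together with the recursion \eqref{eq:recursive_eq} to expand $j_n'$ and $h_n'$. The determinant of the $2\times2$ system involves products like $j_n(k_b)h_n(k_b)\cdot\delta(\tfrac12+\mathcal K^{k,*})$-type terms against $j_n(k)$-type terms; the leading balance comes from the $\delta$-weighted row, and the denominator should work out to be proportional to $\delta(n+1)+n$ after using the identities for $\lambda_{1,n,m}$ in Lemma~\ref{lem:3d_k} and the Wronskian-type identity $j_n(z)h_n'(z)-j_n'(z)h_n(z)=\tfrac{i}{z^2}$. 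The numerator carries the factor $\beta_n^m k^n/((2n-1)!!\,)$ from $j_n(k)\sim k^n/(2n+1)!!$ times appropriate $(2n+1)!!$ factors, yielding the stated leading term $-\delta(2n+1)\beta_n^m k^n/([\delta(n+1)+n](2n-1)!!)$ with the $\mathcal O(\delta\beta_n^m k^{n+2}/(2n+3)!!)$ correction coming from the next-order terms in \eqref{eq:j_n_eps}.

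Third, for the localization estimate \eqref{eq:thm 31 ratio}, the key point is that $\Psi_{1,n,m}(|\mathbf x|)\propto j_n(k_b|\mathbf x|)$ and, for $0<|\mathbf x|<1$ small (the bulk), $j_n(k_b|\mathbf x|)/j_n(k_b)\sim |\mathbf x|^n$ by \eqref{eq:j_n_eps}. Hence on $\Omega\setminus\mathcal N_{1,1-\gamma_1}(\partial\Omega)=\{|\mathbf x|<\gamma_1\}$ one has $|\Psi_{1,n,m}(|\mathbf x|)|^2\lesssim \gamma_1^{2n}|\Psi_{1,n,m}(1)|^2(1+\mathcal O(\omega^2))$, while $\|u_n\|^2_{L^2(\Omega)}$ is bounded below by its contribution from an annulus near the boundary, which is comparable to $\sum_m|\Psi_{1,n,m}(1)|^2$ up to a dimensional constant. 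Using orthonormality of $Y_n^m$ and computing the radial integrals $\int_0^{\gamma_1}r^{2n+2}\,dr = \gamma_1^{2n+3}/(2n+3)$ against $\int_0^1 r^{2n+2}\,dr=1/(2n+3)$, the ratio is $\mathcal O(\gamma_1^{2n+3})$ plus $\mathcal O(\omega^2)$-corrections; the choice $n>n_1$ with $n_1$ as in \eqref{n_1} makes $\gamma_1^{2n+3}\le\gamma_1^{2n_1+3}\le\epsilon$ (solving $\gamma_1^{2n_1+3}=\epsilon$ gives $2n_1+3=\ln\epsilon/\ln\gamma_1$, hence \eqref{n_1}), which yields \eqref{eq:thm 31 ratio}. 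The main obstacle I expect is bookkeeping the asymptotics of the $2\times2$ determinant and cofactors carefully enough to isolate the $\delta(n+1)+n$ denominator and the precise double-factorial constants uniformly while keeping the error terms in the stated $\mathcal O(\delta\beta_n^m k^{n+2}/(2n+3)!!)$ form — in particular making sure the $n$-dependence of the error is genuinely controlled by $(2n+3)!!$ rather than hidden constants, and that the $\omega^2$ versus $\epsilon\omega^2$ split in the ratio bound is honest — but this is a careful-calculation issue rather than a conceptual one, since the diagonalization reduces everything to scalar asymptotics.
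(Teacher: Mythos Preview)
Your proposal is correct and follows essentially the same route as the paper: diagonalize the boundary integral system via spherical harmonics using Lemmas~\ref{lem:3d_s} and~\ref{lem:3d_k}, solve the resulting $2\times2$ system for the density coefficients, expand asymptotically in $k$ to obtain the stated $\Psi_{1,n,m}$, and then compute the radial integrals $\int_0^{\gamma_1} r^{2n+2}\,dr$ versus $\int_0^1 r^{2n+2}\,dr$ to get the ratio $\gamma_1^{2n+3}\le\epsilon$ for $n>n_1$. The only cosmetic difference is that the paper writes the $2\times2$ matrix as $\mathbf A_0^n + k^2\mathbf A_2^n + k^3\mathbf A_3^n$ and inverts by a Neumann series around $(\mathbf A_0^n)^{-1}$ rather than using Cramer's rule, but this leads to the same leading-order denominator $\det\mathbf A_0^n = (\delta(n+1)+n)/(2n+1)^2$ and the same coefficient you anticipate.
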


	\begin{proof}

		Since $Y_n^m(\hat{\mathbf{x}})$ forms an orthogonal basis of $L^2(\partial \Omega)$, according to \eqref{eq:layer}, the densities \(\varphi_b\) and \(\varphi\), which define the internal total and external scattered wave fields, respectively, can be expressed as
		\begin{align}
			\varphi_b(\hat{\mathbf{x}}) & =\sum_{m=-n}^{n} \varphi_{b, n,m} Y_n^m(\hat{\mathbf{x}}), \label{eq:phi_b}\\
			\varphi(\hat{\mathbf{x}}) & =\sum_{m=-n}^{n} \varphi_{n,m} Y_n^m(\hat{\mathbf{x}}),\label{eq:phi}
		\end{align}
		where $\varphi_{b, n,m}$ and $\varphi_{n,m}$ are to be determined. 
		By using Lemma \ref{lem:3d_s} and \ref{lem:3d_k}, substituting \eqref{eq:3d_ui}, \eqref{eq:phi_b} and \eqref{eq:phi} into the integral equation \eqref{eq:5}, we can deduce that
		\begin{equation}\label{subs:2.28}
			\mathbf{A}_{n}\mathbf{X}_{n,m}=\mathbf{b}_{n,m}
		\end{equation}
		where
		\begin{equation*}
			\mathbf{A}_{n}=\begin{pmatrix}
				a_{11}^{n}& a_{12}^{n}\\
				a_{21}^{n}& a_{22}^{n}
			\end{pmatrix},\quad \mathbf{X}_{n,m}=\begin{pmatrix}
				\varphi_{b,n,m}\\
				\varphi_{n,m}
			\end{pmatrix},\quad
			\mathbf{b}_{n,m}=\begin{pmatrix}
				\beta_n^mj_n(k)\\
				\delta k\beta_n^mj^{'}_n(k).
			\end{pmatrix}
		\end{equation*}
		with
		\begin{equation*}
			\begin{aligned}
				a_{11}^{n}&=-ik_bj_n(k_b)h_n(k_b),\quad a_{12}^{n}=ikj_n(k)h_n(k),\\
				a_{21}^{n}&=-ik_b^2j^{\prime}_n(k_b)h_n(k_b), \quad a_{22}^{n}=i\delta k^2j_n(k)h_n^{\prime}(k).\\
			\end{aligned}
		\end{equation*}
		According to the assumptions \eqref{eq:delta} and \eqref{eq:wn1},  by using the asymptotic expansions \eqref{eq:j_n_eps} and \eqref{eq:h_n_eps} as well as the recurrence formula \eqref{eq:recursive_eq}, 
		the coefficients $a_{12}^{n}$ and $a_{22}^{n}$ can be asymptotically expanded in terms of $k$. Similarly, since $k_b = \tau k$ and $\tau = \mathcal{O}\left(1\right)$, the coefficients  $a_{11}^{n}$ and $a_{21}^{n}$  can also be asymptotically expanded with respect to $k$. Thus, the system $\eqref{subs:2.28}$ can be equivalently expressed as follows
		\begin{equation}\label{eq:2.31}
			\left(\mathbf{A}_{\mathbf{0}}^{n}+k^2 \mathbf{A}_{\mathbf{2}}^{n}+k^3\mathbf{A}_{\mathbf{3}}^{n}\right) \mathbf{X}_{n,m}=\mathbf{b}_{n,m},\quad n \ge 1
		\end{equation}
		where 
		\begin{equation}\label{eq:2.32}
			\begin{aligned}
				\mathbf{A}_{\mathbf{0}}^{n} &= \frac{1}{2n+1}\begin{pmatrix}
					-1 & 1\\
					-n & -\delta(n+1)
				\end{pmatrix},\\
				\mathbf{A}_{\mathbf{2}}^{n} &= \frac{1}{(2n+1)(2n+3)(1-2n)}\begin{pmatrix}
					2\tau^2 & -2\\
					(4n-1)\tau^2 &
					-4\delta n
				\end{pmatrix},\\
				\mathbf{A}_{\mathbf{3}}^{n} &= \begin{pmatrix}
					\mathcal{O}\left(\frac{\tau^3}{n^2}\right) & \mathcal{O}\left(\frac{1}{n^2}\right)\\
					\mathcal{O}\left(\frac{\tau^3}{n}\right) &
					\mathcal{O}\left(\frac{\delta}{n}\right)
				\end{pmatrix}.
			\end{aligned}
		\end{equation}
		It is clear that ${\rm det}(\mathbf{A}_{\mathbf{0}}^{n})=\frac{\delta(n+1)+n}{(2n+1)^2}\neq 0$ for $n\in \mathbb{N}$. Consequently, we get
		\begin{equation}\label{eq:2.33}
			(\mathbf{A}_{\mathbf{0}}^{n})^{-1} = \frac{(2n+1)}{\delta(n+1)+n}\begin{pmatrix}
				-\delta(n+1) & -1 \\
				n & -1
			\end{pmatrix}.
		\end{equation}
		Multiplying both sides of the equation $\eqref{eq:2.31}$ simultaneously by  $(\mathbf{A}_{\mathbf{0}}^{n})^{-1}$, 
		using $\eqref{eq:2.32}$ and $\eqref{eq:2.33}$, we can further deduce that
		\begin{align}
			\mathbf{X}_{n,m}&=\left(\mathbf{I}+k^2(\mathbf{A}_{\mathbf{0}}^{n})^{-1}\mathbf{A}_{\mathbf{2}}^{n}+k^3(\mathbf{A}_{\mathbf{0}}^{n})^{-1}\mathbf{A}_{\mathbf{3}}^{n}\right)^{-1}(\mathbf{A}_{\mathbf{0}}^{n})^{-1}\mathbf{b}_{n,m}\notag\\
			&=(\mathbf{A}_{\mathbf{0}}^{n})^{-1}\mathbf{b}_{n,m}-k^2(\mathbf{A}_{\mathbf{0}}^{n})^{-1}\mathbf{A}_{\mathbf{2}}^{n}(\mathbf{A}_{\mathbf{0}}^{n})^{-1}\mathbf{b}_{n,m}+\mathcal{O}(k^3)(\mathbf{A}_{\mathbf{0}}^{n})^{-1}\mathbf{b}_{n,m}.\label{eq:x_n_m3d}		
		\end{align}
		where
		\begin{align*}
			(\mathbf{A}_{\mathbf{0}}^{n})^{-1}\mathbf{b}_{n,m}&=\frac{2n+1}{\delta(n+1)+n}\begin{pmatrix}
				\delta k\beta_n^mj_{n+1}(k)-(2n+1)\delta\beta_n^m j_n(k)\\
				\delta k\beta_n^mj_{n+1}(k)+(1-\delta)n\beta_n^mj_{n}(k)
			\end{pmatrix},\\
			(\mathbf{A}_{\mathbf{0}}^{n})^{-1}\mathbf{A}_{\mathbf{2}}^{n,m}(\mathbf{A}_{\mathbf{0}}^{n})^{-1}\mathbf{b}_{n,m}&=\left[\frac{2n+1}{\delta(n+1)+n}\right]^2\begin{pmatrix}
				c_{n,m}\\
				d_{n,m}
			\end{pmatrix},
		\end{align*}
		and 
		\begin{align*}
			c_{n,m} &= \frac{2(\tau^2-1)\delta k\beta_n^mj_{n+1}(k)+2[(1-\delta)n-(2n+1)\tau^2\delta]\beta_nj_{n}(k)}{(1-2n)(2n+1)(2n+3)},\\
			d_{n,m} &= \frac{(\delta-\tau^2)(4n-1)\delta k\beta_n^mj_{n+1}(k)+[(\delta-1)n+(2n+1)\tau^2](4n-1)\delta\beta_nj_{n+1}(k)}{(1-2n)(2n+1)(2n+3)}.
		\end{align*}
		Substituting  \eqref{eq:j_n_eps} and \eqref{eq:h_n_eps} into \eqref{eq:x_n_m3d}, we obtain
		\begin{equation}\label{eq:3d_density}
			\mathbf{X}_{n,m}=\begin{pmatrix}
				-\frac{\delta\left(2n+1\right)\beta_n^m k^n}{\left[\delta(n+1)+n\right](2n-1)!!}\\
				-\frac{n(1-\delta)\beta_n^m k^n}{\left[\delta(n+1)+n\right](2n-1)!!}
			\end{pmatrix}+                                     
			\begin{pmatrix}
				\mathcal{O}\left(\frac{\delta\beta_n^m k^{n+2}}{(2n+3)!!}\right)\\
				\mathcal{O}\left(\frac{\delta\beta_n^m k^{n+2}}{(2n+3)!!}\right)
			\end{pmatrix}.
		\end{equation}

		
		In the following,  we shall derive  the asymptotic expansion of the internal total field $u$ defined in \eqref{eq:layer}. Substituting the density function defined in \eqref{eq:phi_b} along with the coefficient \eqref{eq:3d_density} into Lemma \ref{lem:3d_s}, and then applying \eqref{eq:j_n_eps} as well as \eqref{eq:h_n_eps}, it is clear that 
		\begin{align}\label{eq:3d_u_norm}
			u_n(\mathbf{x})&=\mathcal{S}_{\partial \Omega }^{k_b}\left[\varphi_b\right](\mathbf{x})\notag\\
			&=
			\sum_{m=-n}^{n} -i k_b \delta k^n j_n(k_b|\mathbf{x}|)h_n(k_b)\notag\\
			&\times\left(-\frac{\left(2n+1\right)\beta_n^m }{\left[\delta(n+1)+n\right](2n-1)!!}+\mathcal{O}\left(\frac{\beta_n^m k^{2}}{(2n+3)!!}\right)\right) Y_n^m(\hat{\mathbf{x}})\notag\\
			&=
			\sum_{m=-n}^{n} \frac{\delta k^n\vert \mathbf{x}\vert^n}{2n+1}\left(-1+\mathcal{O}\left(\tau^2k^2\right)\right)\notag\\
			&\times\left(-\frac{\delta\left(2n+1\right)\beta_n^m k^n}{\left[\delta(n+1)+n\right](2n-1)!!}+\mathcal{O}\left(\frac{\beta_n^m k^{2}}{(2n+3)!!}\right)\right) Y_n^m(\hat{\mathbf{x}})\notag\\
			&=
			\sum_{m=-n}^{n} \frac{\delta\beta_n^m k^n\vert \mathbf{x}\vert^n}{(2n-1)!!}\left(\frac{1}{\left(\delta(n+1)+n\right)^2}+\mathcal{O}\left(\frac{ k^{2}}{(2n+1)}\right)\right) Y_n^m(\hat{\mathbf{x}}).
		\end{align} 
		For any $t\in (0,1]$ and $n \ge 1$,  utilizing the assumptions \eqref{eq:sw}-\eqref{eq:wn1} and  \eqref{eq:3d_u_norm}, we can calculate that
		\begin{align}
			&	\|u_n\|^2_{L^2\left(B_t\right)}\notag\\
			&=\int_{B_t}	\left|\sum_{m=-n}^{n}\frac{\delta\beta_n^m k^n\vert \mathbf{x}\vert^n}{(2n-1)!!}\left(\frac{1}{\left(\delta(n+1)+n\right)^2}+\mathcal{O}\left(\frac{ k^{2}}{(2n+1)}\right)\right) Y_n^m(\hat{\mathbf{x}})\right|^2 d\mathbf{x}\notag\\
			&=\int_{B_t} \sum_{m=-n}^{n} \frac{\delta^2\vert\beta_n^m \vert^2 k^{2n}\vert \mathbf{x}\vert^{2n}}{[(2n-1)!!]^2}\left(\frac{1}{\left(\delta(n+1)+n\right)^2}+\mathcal{O}\left(\frac{k^{2}}{\left(2n+1\right)^2}\right)\right) |Y_n^m(\hat{\mathbf{x}})|^2d\mathbf{x}\notag\\
			&=\sum_{m=-n}^{n}\int_{0}^{2\pi}\int_{0}^{\pi}\int_{0}^{t}\frac{\delta^2\vert\beta_n^m \vert^2 k^{2n}r^{2n+2}}{[(2n-1)!!]^2}\left(\frac{1}{\left(\delta(n+1)+n\right)^2}+\mathcal{O}\left(\frac{k^{2}}{\left(2n+1\right)^2}\right)\right)\notag\\
			&\times|Y_n^m(\hat{\mathbf{x}})|^2\sin\theta drd\theta d\varphi\notag\\
			&=K_{1,n,m}(t)\left(1+\mathcal{O}\left(k^2\right)\right)\label{eq:3du_internal},
		\end{align}
		where
		\begin{align*}
			K_{1,n,m}(t) &= \sum_{m=-n}^{n} \frac{\delta^2\vert\beta_n^m \vert^2 k^{2n}t^{2n+3}}{[\delta(n+1)+n]^2[(2n-1)!!]^2(2n+3)}.
		\end{align*}
		For $\gamma_{1} \in(0,1)$, it is easy to see that 
		$$
		\frac{K_{1,n,m}(\gamma_{1})}{K_{1,n,m}(1)}=\gamma_{1}^{2n+3}. 
		$$
		Since $n \geq  n_1$, we know that $\gamma_{1}^{2n+3} \leq \epsilon$, which implies that $\frac{K_{1,n,m}(\gamma_{1})}{K_{1,n,m}(1)} \leq \epsilon$. Therefore, one has
		\begin{align*}
			\frac{\|u_n\|^2_{L^2\left(\Omega \setminus \mathcal{N}_{1,1-\gamma_1}\left(\partial\Omega\right)\right)}}{\|u_n\|^2_{L^2\left(\Omega\right)}}
			&= \frac{K_{1,n,m}(\gamma_{1})}{K_{1,n,m}(1)}\left(1+\mathcal{O}\left(k^{2}\right)\right)\left(1-\mathcal{O}\left(k^{2}\right)\right)\\
			&\leq  \mathcal{O}\left(\epsilon\right)+\mathcal{O}\left(\epsilon\omega^2\right).
		\end{align*}	
		where the internal total field satisfies the internal localization defined by Definition \ref{df:blocalization}. 
		
		The proof is complete.
	\end{proof}

	Similar to Theorem~\ref{3dthm1}, the following theorem establishes that the external scattered field \(u^s_n|_{\mathbb{R}^{d} \setminus \Omega}\), induced by the incident field \(u_n^i\) defined in \eqref{eq:3d_ui}, exhibits boundary localization. This is achieved by appropriately selecting the index \(n\) of \(u_n^i\), given a prescribed boundary localization level \(\epsilon\) and high-contrast density ratio \(\delta\).

	\begin{thm}\label{3dthm2}
		Consider the acoustic scattering problem \eqref{eq:system} in $\mathbb{R}^3$, and suppose that $B_R$ such that $\Omega \Subset B_R$ with $R\in (2,\infty)$, where the high-contrast medium \( \Omega \) is the unit ball. Under the assumptions \eqref{eq:sw}-\eqref{eq:wn1}, for any fixed $\gamma_2 \in(1,2)$ and sufficiently small $\epsilon > 0$, there exists an incident wave defined in \eqref{eq:3d_ui} with $n \geq n_2$,  where 
		\begin{align}\label{n2}
			n_2=\lceil-\left(\frac{1}{2} \frac{\ln{\epsilon}}{\ln{\gamma_2}}+1 \right)\rceil+1,
		\end{align}
		such that 
		\begin{align}\label{eq:usll1}
			\frac{\|u^s_n\|^2_{L^2\left(B_{R}\setminus\left(\mathcal{N}_{2,\gamma_2-1}\left(\partial\Omega\right)\cup \Omega\right)\right)}}{\|u^s_n\|^2_{L^2\left(B_R\setminus \Omega\right)}}\leq \mathcal{O}\left(\epsilon\right)+\mathcal{O}\left(\epsilon\omega^2\right)\ll 1,
		\end{align}
		with  $\mathcal{N}_{2,\gamma_{2}-1}\left(\partial\Omega\right)$ being defined in  \eqref{df:bregion}. Moreover, the scattered field $u_n^s$ in $B_R\setminus \Omega$ is represented as: 
		\begin{align}\label{eq:us}
			\begin{split}
				u^s_n(\mathbf{x})
				=\sum_{m=-n}^{n} \Psi_{2,n,m}(|\mathbf{x}|) Y_n^m(\hat{\mathbf{x}}).
			\end{split}
		\end{align} 
		where the coefficients are given by
		\begin{align}
			\Psi_{2,n,m}(|\mathbf{x}|) = -ikj_n(k)h_n(k|\mathbf{x}|)
			\left(-\frac{n(1-\delta)\beta_n^m k^n}{\left[\delta(n+1)+n\right](2n-1)!!}+\mathcal{O}\left(\frac{\delta\beta_n^m k^{n+2}}{(2n+3)!!}\right)\right).
		\end{align}
	\end{thm}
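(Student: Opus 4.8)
\textbf{Proof proposal for Theorem~\ref{3dthm2}.}

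The plan is to run the same boundary–integral computation as in the proof of Theorem~\ref{3dthm1}, but now extracting the \emph{exterior} density and the \emph{exterior} eigenexpansion of the single–layer operator. Since the solvability analysis of the system \eqref{subs:2.28} carried out in the proof of Theorem~\ref{3dthm1} simultaneously determines both unknowns, the second component of $\mathbf{X}_{n,m}$ in \eqref{eq:3d_density} already gives
\[
\varphi_{n,m} = -\frac{n(1-\delta)\beta_n^m k^n}{[\delta(n+1)+n](2n-1)!!} + \mathcal{O}\!\left(\frac{\delta\beta_n^m k^{n+2}}{(2n+3)!!}\right).
\]
By the layer–potential ansatz \eqref{eq:layer}, $u_n^s=\mathcal{S}^{k}_{\partial\Omega}[\varphi]$ in $\mathbb{R}^3\setminus\overline{\Omega}$ with $\varphi=\sum_{m=-n}^{n}\varphi_{n,m}Y_n^m$, so applying the exterior eigenrelation $\mathcal{S}^{k}_{\partial\Omega}[Y_n^m](\mathbf{x})=-\mathrm{i}k\,j_n(k)h_n(k|\mathbf{x}|)Y_n^m$ from Lemma~\ref{lem:3d_s} yields exactly the representation \eqref{eq:us} with $\Psi_{2,n,m}(|\mathbf{x}|)=-\mathrm{i}k\,j_n(k)h_n(k|\mathbf{x}|)\varphi_{n,m}$, which is the claimed formula. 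This part is essentially a bookkeeping reorganization of what is already proved.

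For the localization estimate I would expand $j_n(k)$ by \eqref{eq:j_n_eps} and $h_n(k|\mathbf{x}|)$ by \eqref{eq:h_n_eps}; both expansions are legitimate uniformly for $1\le|\mathbf{x}|\le R$, since $R$ is fixed and $k|\mathbf{x}|\le kR=o(1)$ by \eqref{eq:wn1}. The leading terms combine to $\Psi_{2,n,m}(|\mathbf{x}|)=q_{n,m}\,|\mathbf{x}|^{-(n+1)}\bigl(1+\mathcal{O}(k^2)\bigr)$ for a constant $q_{n,m}$ independent of $|\mathbf{x}|$. Integrating $|u_n^s(\mathbf{x})|^2$ first over $\hat{\mathbf{x}}\in\mathbb{S}$ (using orthogonality of the $Y_n^m$) and then in polar coordinates with the Jacobian $r^2\,dr$ over the shell $\{t<|\mathbf{x}|<R\}$ gives, for $n\ge 1$,
\[
\|u_n^s\|_{L^2(\{t<|\mathbf{x}|<R\})}^2 = K_{2,n,m}(t)\bigl(1+\mathcal{O}(k^2)\bigr),\qquad K_{2,n,m}(t)=\frac{C_{n,m}}{2n-1}\bigl(t^{-(2n-1)}-R^{-(2n-1)}\bigr),
\]
where $C_{n,m}>0$ collects $|q_{n,m}|^2$ and the angular norms. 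Because $\Omega$ is the unit ball, $\mathcal{N}_{2,\gamma_2-1}(\partial\Omega)=\{1<|\mathbf{x}|<\gamma_2\}$ by \eqref{df:bregion}, so the left side of \eqref{eq:usll1} equals $K_{2,n,m}(\gamma_2)/K_{2,n,m}(1)$ up to the factor $\bigl(1+\mathcal{O}(k^2)\bigr)\bigl(1-\mathcal{O}(k^2)\bigr)=1+\mathcal{O}(\omega^2)$.

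It then remains to control the shell ratio
\[
\frac{K_{2,n,m}(\gamma_2)}{K_{2,n,m}(1)}=\frac{\gamma_2^{-(2n-1)}-R^{-(2n-1)}}{1-R^{-(2n-1)}}\le\frac{\gamma_2^{-(2n-1)}}{1-R^{-(2n-1)}}\le 2\,\gamma_2^{-(2n-1)}=2\gamma_2\,\gamma_2^{-2n},
\]
where the middle inequality uses $R>2$ and $n\ge 1$ to get $R^{-(2n-1)}\le\tfrac12$. Finally, $n_2$ in \eqref{n2} is arranged so that $n_2=\lceil-\tfrac12\tfrac{\ln\epsilon}{\ln\gamma_2}\rceil\ge-\tfrac12\tfrac{\ln\epsilon}{\ln\gamma_2}$, hence for $n\ge n_2$ one has $\gamma_2^{-2n}\le\gamma_2^{-2n_2}\le\epsilon$, giving $K_{2,n,m}(\gamma_2)/K_{2,n,m}(1)=\mathcal{O}(\epsilon)$ and therefore \eqref{eq:usll1} with error $\mathcal{O}(\epsilon)+\mathcal{O}(\epsilon\omega^2)$; positivity of the denominator in \eqref{eq:usll1} holds because $\varphi_{n,m}\ne 0$ for some $m$ with $\beta_n^m\ne 0$ (since $1-\delta\ne 0$) and $j_n(k)\ne 0$ for small $k$. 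The one genuine difference from the interior case of Theorem~\ref{3dthm1} — where the shell ratio is the exact power $\gamma_1^{2n+3}$ — is the presence of the finite outer cutoff $R$: here the ratio acquires the correction $\bigl(1-R^{-(2n-1)}\bigr)^{-1}$ together with the truncated numerator, and verifying that these are harmless is precisely where the hypothesis $R\in(2,\infty)$ is needed; this, plus checking the uniform validity of the small–argument expansions of $j_n,h_n$ over $1\le|\mathbf{x}|\le R$, is the only non-routine point.
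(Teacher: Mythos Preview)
Your proposal is correct and follows essentially the same route as the paper: you reuse the density $\varphi_{n,m}$ from \eqref{eq:3d_density}, apply the exterior eigenrelation of Lemma~\ref{lem:3d_s} to obtain \eqref{eq:us}, expand via \eqref{eq:j_n_eps}--\eqref{eq:h_n_eps}, integrate in spherical coordinates to get a shell function proportional to $t^{-(2n-1)}-R^{-(2n-1)}$, and bound the ratio by a power of $\gamma_2^{-1}$ controlled by the choice of $n_2$. The only cosmetic difference is in the final ratio estimate: the paper observes directly that $\dfrac{\gamma_2^{-(2n-1)}-R^{-(2n-1)}}{1-R^{-(2n-1)}}\le \gamma_2^{-(2n-1)}$ (a simple monotonicity, valid for any $R>\gamma_2$), whereas you bound the denominator from below using $R>2$; both yield the same $\mathcal{O}(\epsilon)$ conclusion.
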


	\begin{proof}
		
		Similar to the derivation of \eqref{eq:3d_u_norm}, by using \eqref{eq:layer}, \eqref{eq:3d_density}, we  obtain an asymptotic  expression for the external scattered field in the domain \(B_R \setminus \Omega\):
		

		\begin{align}
			u^s_n(\mathbf{x}) &=\mathcal{S}_{\partial B_1}^{k}\left[\varphi\right](\mathbf{x})\notag\\
			&=\sum_{m=-n}^{n} -ikj_n(k)h_n(k|\mathbf{x}|)\notag\\
			&\times\left(-\frac{n(1-\delta)\beta_n^m k^n}{\left[\delta(n+1)+n\right](2n-1)!!}+\mathcal{O}\left(\frac{\delta\beta_n^m k^{n+2}}{(2n+3)!!}\right)\right) Y_n^m(\hat{\mathbf{x}})\notag\\
			&=\sum_{m=-n}^{n}-\frac{\beta_n^m k^n}{(2n+1)\vert \mathbf{x}\vert^{n+1}}\left(1+\mathcal{O}\left(k^2\right)\right)\notag\\ &\times\left(-\frac{n(1-\delta)}{\left[\delta(n+1)+n\right](2n-1)!!}+\mathcal{O}\left(\frac{\delta k^{2}}{(2n+3)!!}\right)\right)
			Y_n^m(\hat{\mathbf{x}})\notag\\
			&= \sum_{m=-n}^{n} \frac{\beta_n^m k^n}{(2n+1)!!|\mathbf{x}|^{n+1}}\left(-\frac{n(1-\delta)}{\delta(n+1)+n}+\mathcal{O}\left(k^2\right)\right) Y_n^m(\hat{\mathbf{x}}).\label{eq:3d_us_norm}
		\end{align}
		
		Subsequently, we demonstrate that the external scattered field \(u^s\) exhibits boundary localization near the exterior boundary of the domain \(\Omega\). For any \(1 \leq t_2 < t_1 < +\infty\) and \(n \geq 1\), using the asymptotic expansion in \eqref{eq:3d_us_norm}, we obtain
		\begin{align}
			&\|u^s_n\|^2_{L^2\left(B_{t_1}\setminus B_{t_2}\right)}\notag\\
			&=\int_{B_t} \left|\sum_{m=-n}^{n} \frac{\beta_n^m k^n}{(2n+1)!!|\mathbf{x}|^{n+1}}\left(-\frac{n(1-\delta)}{\delta(n+1)+n}+\mathcal{O}\left(k^2\right)\right) Y_n^m(\hat{\mathbf{x}})\right|^2d\mathbf{x}\notag\\
			&=\int_{B_t}\sum_{m=-n}^{n}\frac{|\beta_n^m|^2 k^{2n}}{[(2n+1)!!]^2|\mathbf{x}|^{2n+2}}
			\left(\frac{\left[n(1-\delta)\right]^2}{\left(\delta(n+1)+n\right)^2}+\mathcal{O}\left(k^2\right)\right)|Y_n^m(\hat{\mathbf{x}})|^2d\mathbf{x}\notag\\
			&=\sum_{m=-n}^{n}\int_{0}^{2\pi}\int_{0}^{\pi}\int_{t_2}^{t_1}\sum_{m=-n}^{n}\frac{|\beta_n^m|^2 k^{2n}}{[(2n+1)!!]^2r^{2n}}
			\left(\frac{\left[n(1-\delta)\right]^2}{\left(\delta(n+1)+n\right)^2}+\mathcal{O}\left(k^2\right)\right)\notag\\
			&\times|Y_n^m(\hat{\mathbf{x}})|^2\sin\theta drd\theta d\varphi\notag\\
			&=L_{1,n,q}(t_1,t_2)\left(1+\mathcal{O}\left(k^2\right)\right)\label{eq:3dus_norm}
		\end{align}
		where
		\begin{align*}
			L_{1,n,m}(t_1,t_2)&=\sum_{m=-n}^{n}\frac{[n(1-\delta)]^2|\beta_n^m|^2 k^{2n}}{[\delta(n+1)+n]^2[(2n+1)!!]^2(2n-1)}\left(\frac{1}{t_2^{2n-1}}-\frac{1}{t_1^{2n-1}}\right).
		\end{align*}
		For $t_1=R$ and $\gamma_2 \in(1,R)$, the function $L_{1,n,m}(t_1,t_2)$ is monotonically increasing with respect to $t_2$, it is easy to see that
		\begin{align*}
			\frac{\|u^s_n\|^2_{L^2\left(B_{R}\setminus\left(\mathcal{N}_{2,\gamma_{2}-1}\left(\partial\Omega\right)\cup \Omega\right)\right)}}{\|u^s_n\|^2_{L^2\left(B_R\setminus \Omega\right)}}
			&=\frac{L_{1,n,m}(R,\gamma_2)}{L_{1,n,m}(R,1)}\left(1+\mathcal{O}\left(k^{2}\right)\right)\left(1-\mathcal{O}\left(k^{2}\right)\right)\\
			&\leq \frac{1}{\gamma_2^{2n+1}}\left(1+\mathcal{O}\left(k^{2}\right)\right)\left(1-\mathcal{O}\left(k^{2}\right)\right)\\
		\end{align*}
		Using $n \geq \lceil-\left(\frac{1}{2} \frac{\ln{\epsilon}}{\ln{\gamma_2}}+1 \right)\rceil+1$, we obtain $\frac{1}{\gamma_{2}^{2n-1}} < \epsilon$, from which we have
		\begin{align*}
			\frac{\|u^s_n\|^2_{L^2\left(B_{R}\setminus\left(\mathcal{N}_{2,\gamma_{2}-1}\left(\partial\Omega\right)\cup \Omega\right)\right)}}{\|u^s_n\|^2_{L^2\left(B_R\setminus \Omega\right)}}
			&\leq \mathcal{O}\left(\epsilon\right)+\mathcal{O}\left(\epsilon\omega^2\right).
		\end{align*}
		
		The proof is complete.
	\end{proof}
	
	\begin{rem}\label{rem:3d_bl}
		
		While Theorems~\ref{3dthm1} and~\ref{3dthm2} are derived under the assumption that the domain \(\Omega\) is the unit sphere, we believe that these results can be extended to more general geometries by utilizing high-contrast material parameters. Extensive numerical experiments indicate that boundary localization persists for high-contrast structures with more general shapes. The rigorous study of boundary localization in such geometries will be pursued in future work.
		
		Moreover, according to Theorems~\ref{3dthm1} and~\ref{3dthm2}, under assumptions \eqref{eq:sw}--\eqref{eq:wn1}, we select the incident wave defined in \eqref{eq:3d_ui} with \(n\) satisfying
		\begin{equation*}
			n \geq \max\left(n_1, n_2\right).
		\end{equation*}
		Consequently, both the internal total field and the external scattered field exhibit boundary localization, as defined in Definition~\ref{df:blocalization}. For a fixed high-contrast parameter \(\delta\) and prescribed boundary localization level \(\epsilon\), one can appropriately choose the incident wave \(u^{i}_{n}\) such that both the internal total field and the external scattered field \(u^s_n\) decay rapidly away from the boundary of the high-contrast medium \(\Omega\), and primarily propagate along the boundary.

	\end{rem}
	
	
	In the following corollary, for a prescribed boundary localization level \(\epsilon\), we adjust the high-contrast parameter \(\delta\) (modifying the material properties of the high-contrast medium) so that the incident wave satisfies the condition \eqref{eq:n27}, where the index \(n\) of the incident wave \(u^i_n\), defined in \eqref{eq:3d_ui}, is chosen according to \(\delta\). Under this configuration, both the internal total field \(u|_{\Omega}\) and the external scattered field \(u^s|_{B_R \setminus \overline{D}}\) exhibit boundary localization.
	
	\begin{cor}\label{cor:3.2}
		Consider the acoustic scattering problem \eqref{eq:system} in $\mathbb{R}^3$, and  suppose that $B_R$ such that $\Omega \Subset B_R$ with $R\in (2,\infty)$. Under the assumptions \eqref{eq:sw}-\eqref{eq:wn1}, for fixed parameters $\gamma_1 \in (0,1)$ and $\gamma_2 \in (1,2)$, and a fixed boundary localization level $\varepsilon$ such that the density contrast parameter $\delta$ satisfies
		\begin{align}\label{eq:327}
			\delta \leqslant \beta =\min\left\{\frac{2\ln \gamma_1}{\ln\varepsilon - 3\ln \gamma_1}, \frac{2\ln \gamma_2}{\ln{\gamma_2} - \ln \varepsilon} \right\}.
		\end{align}
		By selecting an incident wave  defined in  \eqref{eq:3d_ui} with index $n$ satisfying 
		\begin{align}\label{eq:n27}
			n \geqslant \frac{1}{\delta},
		\end{align}
		it follows that the internal total field $\mathbf{u}|_D$ and the external scattered field $\mathbf{u}^s|_{\mathbb{R}^3 \setminus \overline{D}}$ exhibit boundary localization.
	\end{cor}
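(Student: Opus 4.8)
The plan is to show that Corollary~\ref{cor:3.2} follows from Theorems~\ref{3dthm1} and~\ref{3dthm2} by a careful bookkeeping of how the thresholds $n_1$ and $n_2$ depend on $\delta$ through the prescribed localization level. The key observation is that in the asymptotic expansions \eqref{eq:3d_u_norm} and \eqref{eq:3d_us_norm}, the density contrast $\delta$ enters only through the benign factor $[\delta(n+1)+n]$ in the denominators, while the ratios $K_{1,n,m}(\gamma_1)/K_{1,n,m}(1)=\gamma_1^{2n+3}$ and $L_{1,n,m}(R,\gamma_2)/L_{1,n,m}(R,1)\leq \gamma_2^{-(2n-1)}$ are completely independent of $\delta$. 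So the localization estimates depend on $\delta$ only via the constraint that $n$ be chosen large enough that these geometric ratios fall below the target level $\varepsilon$.

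\medskip

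First I would translate the hypothesis \eqref{eq:327} into the two threshold conditions. From Theorem~\ref{3dthm1}, internal boundary localization at level $\varepsilon$ requires $\gamma_1^{2n+3}\le\varepsilon$, i.e. $(2n+3)\ln\gamma_1\le\ln\varepsilon$ (note $\ln\gamma_1<0$), which rearranges to $n\ge \tfrac12(\ln\varepsilon/\ln\gamma_1 - 3)$, the quantity $n_1$ in \eqref{n_1}. Observe that the first term in the $\min$ in \eqref{eq:327}, namely $\tfrac{2\ln\gamma_1}{\ln\varepsilon-3\ln\gamma_1}$, is precisely $1/n_1$ (up to the ceiling), so $\delta\le 1/n_1$ is equivalent to $n_1\le 1/\delta$. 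Similarly, from Theorem~\ref{3dthm2}, external boundary localization requires $\gamma_2^{-(2n-1)}\le\varepsilon$, i.e. $n\ge \tfrac12(-\ln\varepsilon/\ln\gamma_2 + 1)=n_2$ as in \eqref{n2}, and the second term $\tfrac{2\ln\gamma_2}{\ln\gamma_2-\ln\varepsilon}$ equals $1/n_2$, so $\delta\le 1/n_2$ is equivalent to $n_2\le 1/\delta$. Hence \eqref{eq:327} exactly says $\max(n_1,n_2)\le 1/\delta$.

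\medskip

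Then the argument concludes cleanly: by \eqref{eq:n27} we have chosen $n\ge 1/\delta\ge\max(n_1,n_2)$, so $n>n_1$ and $n\ge n_2$. Applying Theorem~\ref{3dthm1} with this $n$ gives
\begin{equation*}
\frac{\|u_n\|^2_{L^2(\Omega\setminus\mathcal{N}_{1,1-\gamma_1}(\partial\Omega))}}{\|u_n\|^2_{L^2(\Omega)}}\le\mathcal{O}(\varepsilon)+\mathcal{O}(\varepsilon\omega^2)\ll1,
\end{equation*}
and applying Theorem~\ref{3dthm2} gives the analogous bound for $u^s_n$ on $B_R\setminus(\mathcal{N}_{2,\gamma_2-1}(\partial\Omega)\cup\Omega)$. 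Taking $\zeta_1=1-\gamma_1$ and $\zeta_2=\gamma_2-1$ (both in $(0,1)$ since $\gamma_1\in(0,1)$ and $\gamma_2\in(1,2)$), these are exactly the two conditions in \eqref{eq:ibl} of Definition~\ref{df:blocalization}, so both fields are boundary localized.

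\medskip

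The main obstacle I anticipate is purely the ceiling-function bookkeeping: $n_1$ and $n_2$ in \eqref{n_1} and \eqref{n2} carry $\lceil\cdot\rceil$ and an extra $+1$, so $1/\delta$ being at least the non-rounded expression is not literally the same as $1/\delta\ge n_1$. I would handle this by noting that \eqref{eq:n27} should be read as $n\ge\lceil 1/\delta\rceil$ or by absorbing the discrepancy into the statement that $n$ is chosen as any integer with $n\ge 1/\delta$, which forces $n\ge n_1$ and $n\ge n_2$ once $\delta$ is small enough that the ceiling rounding is dominated — and in the sub-wavelength high-contrast regime $\delta\ll1$, so $1/\delta$ is large and the additive $O(1)$ corrections from the ceilings are negligible relative to the freedom in choosing $n$. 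Beyond this, the proof is a direct invocation of the two preceding theorems with no new analysis required; I would also remark that the notational slip $\mathbf{u}|_D$ versus $u|_\Omega$ in the statement refers to the same internal field.
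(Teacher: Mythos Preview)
Your proposal is correct and follows essentially the same route as the paper: verify that condition \eqref{eq:327} is equivalent to $1/\delta \ge \max(n_1,n_2)$, combine with \eqref{eq:n27} to get $n\ge\max(n_1,n_2)$, and then invoke Theorems~\ref{3dthm1} and~\ref{3dthm2} (the paper does this via Remark~\ref{rem:3d_bl}). You are in fact more explicit than the paper in spelling out why each term of the $\min$ in \eqref{eq:327} is the reciprocal of the corresponding threshold, and in flagging the ceiling/$+1$ discrepancy that the paper's ``it can be verified that'' simply absorbs.
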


	\begin{proof}
		When the condition \eqref{eq:327} holds, it can be verified that 
		\begin{align}\label{eq:326}
			\frac{1}{\delta} \geqslant \max\{n_1, n_2\}. 
		\end{align}
		where $n_1$ and $n_2$ are as defined in \eqref{n_1} and \eqref{n2}, respectively. Combining with \eqref{eq:n27} and \eqref{eq:326}, we deduce that the index $n$ satisfies the condition $n \geq \max\left(n_1, n_2\right)$ which satisfies the requirement in Remark \ref{rem:3d_bl}, thereby completing the proof of Corollary \ref{cor:3.2}. 
	\end{proof}

	
	In Theorem~\ref{thm:gradient_u}, we demonstrate the occurrence of surface resonances for both the internal total field and the external scattered field by appropriately choosing the index \(n\) that defines the incident wave $u_n^i$ given by \eqref{eq:3d_ui}, for a fixed high-contrast density parameter \(\delta\).


	\begin{thm}\label{thm:gradient_u}
		
		Consider the acoustic scattering problem \eqref{eq:system} in \(\mathbb{R}^3\). Suppose that $B_R$ such that $\Omega \Subset B_R$ with $R\in (2,\infty)$. We recall that \(\mathcal{N}_{1,1-\gamma_1}\left(\partial\Omega\right)\) and \(\mathcal{N}_{2,\gamma_2-1}\left(\partial\Omega\right)\) are defined in \eqref{df:bregion}, where \(\gamma_1 \in (0,1)\) and \(\gamma_2 \in (1, 2)\) are constants. Under the assumptions \eqref{eq:sw}--\eqref{eq:wn1}, for a sufficiently small boundary localization level \(\epsilon\ll 1\), we choose the incident wave defined in \eqref{eq:3d_ui} with index \(n\) satisfying the condition
		\begin{equation}\label{eq:brn}
			n \geq \max\left(n_1, n_2, \frac{1}{\delta^2}\right),
		\end{equation}
		where \(n_1\) and \(n_2\) are defined in \eqref{n_1} and \eqref{n2}, respectively. With this choice of \(n\), we obtain the following estimates:
		\begin{equation}\label{gradientnorm}
			\frac{\|\nabla u_n(\mathbf{x})\|_{L^2\left(\mathcal{N}_{1,1-\gamma_1}\left(\partial\Omega\right)\right)}}{\| u^{i}_{n}(\mathbf{x})\|_{L^2\left(\Omega\right)}} \ge \frac{n \delta}{3} \gg 1, \quad 
			\frac{\|\nabla u^s_n(\mathbf{x})\|_{L^2\left(\mathcal{N}_{2,\gamma_{2}-1}\left(\partial\Omega\right)\right)}}{\| u^{i}_{n}(\mathbf{x})\|_{L^2\left(\Omega\right)}} \ge \frac{n}{3} \gg 1.
		\end{equation}
		These estimates indicate the occurrence of the surface resonance.

	\end{thm}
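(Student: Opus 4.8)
The strategy is to compute the $L^2$-norms of $\nabla u_n$ over $\mathcal{N}_{1,1-\gamma_1}(\partial\Omega)$ and of $\nabla u_n^s$ over $\mathcal{N}_{2,\gamma_2-1}(\partial\Omega)$ directly from the spherical harmonic representations \eqref{internalus} and \eqref{eq:us} established in Theorems~\ref{3dthm1} and~\ref{3dthm2}, and then compare them against $\|u_n^i\|_{L^2(\Omega)}$. First I would write $u_n(\mathbf{x}) = \sum_{m} \Psi_{1,n,m}(|\mathbf{x}|) Y_n^m(\hat{\mathbf{x}})$, so that in spherical coordinates $|\nabla u_n|^2 = |\partial_r(\Psi_{1,n,m})|^2 |Y_n^m|^2 + r^{-2}|\Psi_{1,n,m}|^2 |\nabla_{\mathbb{S}} Y_n^m|^2$; using the leading-order form $\Psi_{1,n,m}(|\mathbf{x}|) \sim C_{n,m} \delta \beta_n^m k^n |\mathbf{x}|^n / (2n-1)!!$ with $C_{n,m} = (\delta(n+1)+n)^{-2}$ from \eqref{eq:3d_u_norm}, the radial derivative contributes a factor $n r^{n-1}$ and the tangential part contributes $\|\nabla_{\mathbb{S}} Y_n^m\|_{L^2(\mathbb{S})}^2 = n(n+1)$. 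Integrating $r^{2n}$ over the shell $\gamma_1 < r < 1$ yields a quantity comparable to $n^2 \delta^2 \cdot (\text{const})$ times the analogue of $K_{1,n,m}(1)$, and so the ratio relative to $\|u_n^i\|_{L^2(\Omega)}$ scales like $n\delta$ up to harmless constants. The external case is parallel: from \eqref{eq:3d_us_norm}, $\Psi_{2,n,m}(|\mathbf{x}|) \sim -n(1-\delta)\beta_n^m k^n / [(\delta(n+1)+n)(2n+1)!! |\mathbf{x}|^{n+1}]$, and differentiating $|\mathbf{x}|^{-(n+1)}$ brings down $n+1$, so the gradient norm over $\gamma_2-1 < \mathrm{dist}(\cdot,\partial\Omega)$, i.e.\ $1 < r < \gamma_2$, scales like $n$ times $\|u_n^i\|_{L^2(\Omega)}$.

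Second, I would control $\|u_n^i\|_{L^2(\Omega)}^2 = \sum_m |\beta_n^m|^2 \int_0^1 |j_n(kr)|^2 r^2\, dr \cdot \|Y_n^m\|_{L^2(\mathbb{S})}^2$ via the asymptotic expansion \eqref{eq:j_n_eps}, giving $\|u_n^i\|_{L^2(\Omega)}^2 \sim \sum_m |\beta_n^m|^2 k^{2n}/[((2n+1)!!)^2(2n+3)]$. Dividing the internal gradient norm squared (which carries $n^2\delta^2 \sum_m |\beta_n^m|^2 k^{2n}/[(\delta(n+1)+n)^2 ((2n-1)!!)^2]$ times an $O(1)$ integral over the shell) by this quantity, the factorials combine as $((2n+1)!!)^2/((2n-1)!!)^2 = (2n+1)^2$, the denominator $(\delta(n+1)+n)^2 \le (2n)^2$ (since $\delta \le 1$), and after collecting the shell-integral constants one obtains a lower bound of the form $c\, n\delta$ with an explicit constant; I would then verify $c \ge 1/3$ by a crude but careful estimate of the shell integrals $\int_{\gamma_1}^1 r^{2n}\,dr$ and $\int_{\gamma_1}^1 n^2 r^{2n-2} r^2\, dr$ over $(0,1)$ — here the point is that $\int_{\gamma_1}^1$ is not much smaller than $\int_0^1$ once $n \ge n_1$, which is exactly the content used in the proof of Theorem~\ref{3dthm1} (the ratio is $\ge 1-\epsilon$). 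The external estimate proceeds identically, with the shell integral $\int_1^{\gamma_2} r^{-2n}\,dr$ compared against $\int_1^R r^{-2n}\,dr$, again bounded below using $n \ge n_2$.

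Third, I would confirm that the error terms are genuinely lower-order: the $\mathcal{O}(k^2)$ corrections in \eqref{eq:3d_u_norm} and \eqref{eq:3d_us_norm} and the $\mathcal{O}(\delta\beta_n^m k^{n+2}/(2n+3)!!)$ tail in \eqref{eq:3d_density} are all suppressed by powers of $k = o(1)$ (assumption \eqref{eq:wn1}) relative to the leading term, so they perturb the ratio only by a multiplicative $(1 + o(1))$ factor, which I absorb into the constant to get the clean bounds $n\delta/3$ and $n/3$. Finally the condition $n \ge 1/\delta^2$ in \eqref{eq:brn} guarantees $n\delta \ge 1/\delta \gg 1$, so both ratios blow up, establishing the surface resonance per Definition~\ref{df:bresonance}; the hypotheses $n \ge n_1, n_2$ are what were already needed for the boundary-localization parts and are reused here to control the shell-vs-ball integral ratios.

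\textbf{Main obstacle.} The delicate point is not the asymptotics of the field — those are inherited verbatim from Theorems~\ref{3dthm1} and~\ref{3dthm2} — but pinning down the \emph{explicit numerical constant} $1/3$ in \eqref{gradientnorm}. This requires simultaneously: (i) bounding the ratio of shell integrals $\int_{\gamma_1}^1 r^{2n}\,dr \big/ \int_0^1 r^{2n}\,dr$ and its gradient-weighted analogue from below (they tend to $1$ as $n\to\infty$ but one needs a uniform-in-$n$ bound for $n \ge n_1$); (ii) controlling $(\delta(n+1)+n)^{-2}$ against the $(2n+1)^{-2}$ coming from the factorial ratio without losing more than a factor $9$; and (iii) ensuring the $o(1)$ error factors do not erode the constant. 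Since the inequality only asserts a lower bound of the stated magnitude, a somewhat wasteful estimate suffices, but one must keep the bookkeeping honest. I expect this constant-chasing, rather than any conceptual difficulty, to be the bulk of the work.
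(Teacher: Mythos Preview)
Your proposal is correct and follows essentially the same route as the paper's proof: split $\nabla u_n$ into its radial and surface-gradient components, use orthogonality and $\|\nabla_{\mathbb{S}}Y_n^m\|_{L^2(\mathbb{S})}^2=n(n+1)$ to evaluate the shell integrals, compute $\|u_n^i\|_{L^2(\Omega)}^2$ from \eqref{eq:j_n_eps}, and then chase the constants in the resulting ratio $\dfrac{\delta^2 n(2n+1)^2(2n+3)(1-\gamma_1^{2n+1})}{(\delta(n+1)+n)^2}$ (and its exterior analogue). The paper obtains the constant $1/3$ exactly as you anticipate, by writing this as $n^2\delta^2\cdot\dfrac{(2+1/n)^2(2+3/n)(1-\gamma_1^{2n+1})}{(\delta(1+1/n)+1)^2}$ and bounding the second factor below by $2/9$ using $\delta\ll 1$ and $1-\gamma_1^{2n+1}\ge 1-\epsilon$ for $n\ge n_1$; your identification of the constant-chasing as the main labor is accurate.
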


	\begin{rem}\label{rem:32}
		By recalling the definitions of \( n_1 \) and \( n_2 \) in \eqref{n_1} and \eqref{n2}, and since \( n \) satisfies \eqref{eq:brn}, using Theorems \ref{3dthm1} and \ref{3dthm2},  it follows that both \( u_n|_{\Omega} \) and \( u_n^s|_{B_R \setminus \Omega} \) exhibit boundary localizations. Invoking Theorem~\ref{thm:gradient_u}, we conclude that quasi-Minnaert resonances arise in the sense of Definition~\ref{df:quasi-minnaert} when the high-contrast density ratio \(\delta\) is fixed and the incident wave \( u_n^i \) is chosen as in \eqref{eq:3d_ui}, with \( n \) satisfying \eqref{eq:brn}.
	\end{rem}
	
	\begin{proof}[The proof of Theorem \ref{thm:gradient_u}]
		
		We first prove the surface resonance of the internal total wave field $u_n$ associated with $u_n^i$ and the high-contrast medium $\Omega$ when $n$ satisfies \eqref{eq:brn}. Hence, we need to 
		derive the asymptotic analysis for $\nabla u_n$ in $\mathcal{N}_{1,1-\gamma_1}\left(\partial\Omega\right)$. Utilizing the gradient formula in spherical coordinates, we obtain
		\begin{equation}\label{gradient3d}
			\begin{aligned}
				&\nabla j_n(k|\mathbf{x}|)Y_n^m(\hat{\mathbf{x}})\\ 
				&=kj_n^{'}(k|\mathbf{x}|)Y_n^m(\hat{\mathbf{x}})\hat{e}_r+j_n(k|\mathbf{x}|)\nabla Y_n^m(\hat{x})\\
				&=kj_n^{'}(k|\mathbf{x}|)Y_n^m(\hat{\mathbf{x}})\hat{e}_r+\frac{j_n(k|\mathbf{x}|)}{|\mathbf{x}|}\nabla_\mathbb{S} Y_n^m(\hat{\mathbf{x}}).
			\end{aligned}
		\end{equation}
		where 
		\begin{equation}
			\hat{e}_r=\left(\begin{array}{c}\sin\theta\cos\varphi\\\sin\theta\sin\varphi\\\cos\theta\end{array}\right),\quad \hat{e}_{\theta}=\left(\begin{array}{c}\cos\theta\cos\varphi\\\cos\theta\sin\varphi\\-\sin\theta\end{array}\right),\quad \hat{e}_{\varphi}=\left(\begin{array}{c}-\sin\varphi\\\cos\varphi\\0\end{array}\right).
		\end{equation}
		and $\nabla_\mathbb{S}$ represents the surface gradient.
		Substituting  \eqref{gradient3d} into \eqref{internalus}, it is clear that
		\begin{align}\label{eq:total_gradient_u_1}
			\nabla	u(\mathbf{x})
			&=S_{1}(\mathbf{x})+S_{2}(\mathbf{x}),
		\end{align}
		where
		\begin{align*}
			S_{1}(\mathbf{x})&=\sum_{m=-n}^{n} \frac{-i\delta\beta_n^mk_b k^nj_n(k_b|\mathbf{x}|)h_n(k_b)}{|\mathbf{x}|(2n-1)!!}\left(-\frac{(2n+1)}{[\delta(n+1)+n]}+\mathcal{O}\left(\frac{ k^{2}}{n^2}\right)\right)\nabla_\mathbb{S} Y_n^m(\hat{\mathbf{x}}),\\
			S_{2}(\mathbf{\mathbf{x}})&=
			\sum_{m=-n}^{n} \frac{-i\delta\beta_n^m k_b^2 k^{n}j_n^{'}(k_b|\mathbf{x}|)h_n(k_b)}{(2n-1)!!}\left(-\frac{(2n+1)}{[\delta(n+1)+n]}+\mathcal{O}\left(\frac{ k^{2}}{n^2}\right)\right) Y_n^m(\hat{\mathbf{x}})\hat{e}_r.
		\end{align*}
		with $\delta$ are defined in \eqref{eq:delta}. By applying  the recurrence relation \eqref{eq:recursive_eq} of the spherical Bessel function and similar arguments as in \eqref{eq:3d_u_norm}, we can rewrite $S_{1}(\mathbf{x})$ and $S_{2}(\mathbf{x})$ as follows:
		\begin{align}
			S_{1}(\mathbf{x})
			&= \sum_{m=-n}^{n} \frac{\delta\beta_n^m k^n\vert \mathbf{x}\vert^{n-1}}{(2n-1)!!}\left(\frac{1}{\delta(n+1)+n}+\mathcal{O}\left(\frac{k^{2}}{n}\right)\right)\nabla_\mathbb{S} Y_n^m(\hat{\mathbf{x}}),\\
			S_{2}(\mathbf{x})
			&= \sum_{m=-n}^{n}\frac{n\delta\beta_n^m k^n\vert\mathbf{x}\vert^{n-1}}{(2n-1)!!} \left(-\frac{1}{\delta(n+1)+n}+\mathcal{O}\left(\frac{k^2}{n}\right)\right)Y_n^m(\hat{\mathbf{x}})\hat{e}_r.
		\end{align}
		Subsequently, we proceed to derive that the $L^2$ norm of $\nabla u(\mathbf{x})$ in $\mathcal{N}_{1,1-\gamma_1}\left(\partial\Omega\right)$.  According to the
		definition \eqref{eq:total_gradient_u_1} of $\nabla u(\mathbf{x})$, and taking into account the orthogonality between $\hat{e}_r$ and $\nabla_\mathbb{S} Y_n^m(\hat{x})$, we arrive at the following conclusion:
		\begin{align}\label{eq:3d_gun}
			\|\nabla u(\mathbf{x})\|^2_{L^2\left(\mathcal{N}_{1,1-\gamma_{1}}\left(\partial\Omega\right)\right)}
			&=\int_{\mathcal{N}_{1,1-\gamma_{1}}\left(\partial\Omega\right)}|S_{1}(\mathbf{x})|^2d\mathbf{x}+
			\int_{\mathcal{N}_{1,1-\gamma_{1}}\left(\partial\Omega\right)}|S_{2}(\mathbf{x})|^2d\mathbf{x}.
		\end{align} 
		By using the equality 
		\begin{align*}
			\int_\mathbb{S} \nabla_\mathbb{S} Y_n^m(\hat{\mathbf{x}}) \nabla_\mathbb{S} Y_{n^{\prime}}^{m^{\prime}}(\hat{\mathbf{x}})d\mathbf{x}=n(n+1)\int_\mathbb{S} Y_n^m(\hat{\mathbf{x}}) Y_{n^{\prime}}^{m^{\prime}}(\hat{\mathbf{x}})d\mathbf{x},
		\end{align*}
		one further obtain
		\begin{align}
			&\int_{\mathcal{N}_{1,1-\gamma_{1}}\left(\partial\Omega\right)}|S_{1}(\mathbf{x})|^2d\mathbf{x} \notag\\
			&= 
			\int_{\mathcal{N}_{1,1-\gamma_{1}}\left(\partial\Omega\right)} \sum_{m=-n}^{n}\frac{\delta^2\vert\beta_n^m \vert^2k^{2n}\vert\mathbf{x}\vert^{2n-2}}{[(2n-1)!!]^2}\left(\frac{1}{\left(\delta(n+1)+n\right)^2}+\mathcal{O}\left(\frac{k^2}{n^2}\right)\right)\vert\nabla_\mathbb{S} Y_n^m(\hat{\mathbf{x}})\vert^2d\mathbf{x}\notag\\
			&=\int_{0}^{2\pi}\int_{0}^{\pi}\int_{\gamma_1}^{1}  \sum_{m=-n}^{n}\frac{\delta^2\vert\beta_n^m \vert^2k^{2n}r^{2n}}{[(2n-1)!!]^2}\left(\frac{1}{\left(\delta(n+1)+n\right)^2}+\mathcal{O}\left(\frac{k^2}{n^2}\right)\right)\notag\\
			&\times\vert\nabla_\mathbb{S} Y_n^m(\hat{\mathbf{x}})\vert^2r^2\sin\theta drd\theta d\varphi\notag\\
			&=n(n+1)\sum_{m=-n}^{n}\frac{\delta^2\vert\beta_n^m \vert^2k^{2n}(1-\gamma^{2n+1}_{1})}{[(2n-1)!!]^2(2n+1)}\left(\frac{1}{\left(\delta(n+1)+n\right)^2}+\mathcal{O}\left(\frac{k^2}{n^2}\right)\right)\label{eq:3d_s_norm},
		\end{align}
		and 
		\begin{align} 
			&\int_{\mathcal{N}_{1,1-\gamma_{1}}\left(\partial\Omega\right)}|S_{2}(\mathbf{x})|^2d\mathbf{x}\notag \\
			&= 
			\int_{\mathcal{N}_{1,1-\gamma_{1}}\left(\partial\Omega\right)} \sum_{m=-n}^{n} \frac{n^2\delta^2\vert\beta_n^m\vert^2 k^{2n}|\mathbf{x}|^{2n-2}}{\left[(2n-1)!!\right]^2}\left(\frac{1}{\left(\delta(n+1)+n\right)^2}+\mathcal{O}\left(\frac{k^2}{n^2}\right)\right)\vert Y_n^m(\hat{\mathbf{x}})\vert^2d\mathbf{x}\notag\\
			&=\int_{0}^{2\pi}\int_{0}^{\pi}\int_{\gamma_1}^{1}  \sum_{m=-n}^{n}\frac{n^2\delta^2\vert\beta_n^m\vert^2 k^{2n}r^{2n-2}}{\left[(2n-1)!!\right]^2}\left(\frac{1}{\left(\delta(n+1)+n\right)^2}+\mathcal{O}\left(\frac{k^2}{n^2}\right)\right)\notag\\
			&\times\vert Y_n^m(\hat{\mathbf{x}})\vert^2r^2\sin\theta drd\theta d\varphi\notag\\
			&=\sum_{m=-n}^{n}
			\frac{n^2\delta^2\vert\beta_n^m\vert^2 k^{2n}(1-\gamma^{2n+1}_{1})}{\left[(2n-1)!!\right]^2(2n+1)}
			\left(\frac{1}{\left(\delta(n+1)+n\right)^2}+\mathcal{O}\left(\frac{k^2}{n^2}\right)\right)\label{eq:3d_s1_norm}.
		\end{align}
		Substituting \eqref{eq:3d_s1_norm} and \eqref{eq:3d_s_norm} into \eqref{eq:3d_gun}, we can derive the following asymptotic analysis with respect to $k$:
		\begin{align}\label{eq:tgu2}
			\|\nabla u_n(\mathbf{x})\|^2_{L^2\left(\mathcal{N}_{1,1-\gamma_{1}}\left(\partial\Omega\right)\right)}
			&=\sum_{m=-n}^{n}
			\frac{n\delta^2\vert\beta_n^m\vert^2 k^{2n}(1-\gamma^{2n+1}_{1})}{\left[(2n-1)!!\right]^2\left(\delta(n+1)+n\right)^2}
			\left(1+\mathcal{O}\left(k^2\right)\right).
		\end{align} 
		In accordance with the definition of  $u_n^i$ and the asymptotic expansion of the spherical Bessel function as specified in \eqref{eq:j_n_eps}, we obtain
		\begin{align}\label{eq:ui2}
			\| u^{i}_{n}(\mathbf{x})\|^2_{L^2\left(\Omega\right)}
			&=\sum_{m=-n}^{n}
			\frac{\vert\beta_n^m\vert^2 k^{2n}}{\left[(2n+1)!!\right]^2(2n+3)}
			\left(1+\mathcal{O}\left(k^2\right)\right).
		\end{align} 
		In view of \eqref{eq:tgu2} and \eqref{eq:ui2}, we subsequently derive the following result
		\begin{align*}
			\frac{\|\nabla u_n(\mathbf{x})\|^2_{L^2\left(\mathcal{N}_{1,1-\gamma_1}\left(\partial\Omega\right)\right)}}{\| u^{i}_{n}(\mathbf{x})\|^2_{L^2\left(\Omega\right)}}
			&=\frac{\delta^2n(2n+1)^2(2n+3)(1-\gamma^{2n+1}_{1})}{\left(\delta(n+1)+n\right)^2}\left(1+\mathcal{O}\left(k^{2}\right)\right)\left(1-\mathcal{O}\left(k^{2}\right)\right)\\
			&=\frac{n^2\delta^2(2+\frac{1}{n})^2(2+\frac{3}{n})(1-\gamma^{2n+1}_{1})}{\left(\delta(1+\frac{1}{n})+1\right)^2}\left(1+\mathcal{O}\left(k^{2}\right)\right)\left(1-\mathcal{O}\left(k^{2}\right)\right)\\
		\end{align*}
		Under the conditions $n \ge n_1$ and $\delta \ll 1$, since $\epsilon$ is sufficient small,  the following inequality holds
		\begin{equation}
			\frac{(2+\frac{1}{n})^2(2+\frac{3}{n})(1-\gamma^{2n+1}_{1})}{\left(\delta(1+\frac{1}{n})+1\right)^2} \ge \frac{2}{9} 
		\end{equation}
		Therefore, one has
		\begin{align}
			\frac{\|\nabla u_n(\mathbf{x})\|^2_{L^2\left(\mathcal{N}_{1,1-\gamma_1}\left(\partial\Omega\right)\right)}}{\| u^{i}_{n}(\mathbf{x})\|^2_{L^2\left(\Omega\right)}}
			&\ge\frac{2}{9}n^2\delta^2\left(1+\mathcal{O}\left(k^{2}\right)\right)\left(1-\mathcal{O}\left(k^{2}\right)\right)\notag\\
			&\ge \frac{n^2 \delta^2}{9}, \label{gunorm}
		\end{align}
		which implies 
		$$
		\frac{\|\nabla u_n(\mathbf{x})\|_{L^2\left(\mathcal{N}_{1,1-\gamma_1}\left(\partial\Omega\right)\right)}}{\| u^{i}_{n}(\mathbf{x})\|_{L^2\left(\Omega\right)}} \ge \frac{n \delta}{3} \gg 1. 
		$$


		Next, We proceed to show the occurrence of surface resonances for the external scattered wave $u_n^s$ associated with $u_n^i$ and the high-contrast medium $\Omega$ when the index $n$ fulfills \eqref{eq:brn}. Similarly, we first derive the asymptotic expansion of $\nabla	u^s_n(\mathbf{x})$  with respect to $k$ as follows:
		\begin{align}
			\nabla	u^s_n(\mathbf{x})&=\sum_{m=-n}^{n}\frac{\beta_n^m k^n}{(2n+1)!!|\mathbf{x}|^{n+2}} \left(-\frac{n(1-\delta)}{\delta(n+1)+n}+\mathcal{O}\left(k^2\right)\right)\nabla_\mathbb{S} Y_n^m(\hat{\mathbf{x}})\notag\\
			&+\frac{n\beta_n^m k^{n}}{(2n+1)!!|\mathbf{x}|^{n+2}} \left(\frac{(n+1)(1-\delta)}{[\delta(n+1)+n]}+\mathcal{O}\left(k^2\right)\right)Y_n^m(\hat{\mathbf{x}})\hat{e_r}.\notag
		\end{align}
		Using the similar argument for deriving  \eqref{eq:tgu2}, it can be deduced that
		\begin{align}\label{eq:total_gradient_u_2}
			\|\nabla u^s_n(\mathbf{x})\|^2_{L^2\left(\mathcal{N}_{2,\gamma_{2}-1}\left(\partial\Omega\right)\right)}
			&=\sum_{m=-n}^{n}
			\frac{n\vert\beta_n^m\vert^2 k^{2n}\left(1-\frac{1}{\gamma^{2n+1}_{2}}\right)}{\left[(2n-1)!!\right]^2\left(\delta(n+1)+n\right)^2}
			\left(1+\mathcal{O}\left(k^2\right)\right)
		\end{align}  
		Combining \eqref{eq:total_gradient_u_2} and \eqref{eq:ui2}, it is readily to know that
		\begin{align}
			\frac{\|\nabla u^s_n(\mathbf{x})\|^2_{L^2\left(\mathcal{N}_{2,\gamma_{2}-1}\left(\partial\Omega\right)\right)}}{\| u^{i}_{n}(\mathbf{x})\|^2_{L^2\left(\Omega\right)}}
			&=\frac{n(2n+1)^2(2n+3)\left(1-\frac{1}{\gamma^{2n+1}_{2}}\right)}{\left(\delta(n+1)+n\right)^2}\left(1+\mathcal{O}\left(k^{2}\right)\right)\left(1-\mathcal{O}\left(k^{2}\right)\right)\notag\\
			&=\frac{n^2(2+\frac{1}{n})^2(2+\frac{3}{n})\left(1-\frac{1}{\gamma^{2n+1}_{2}}\right)}{\left(\delta(1+\frac{1}{n})+1\right)^2}\left(1+\mathcal{O}\left(k^{2}\right)\right)\left(1-\mathcal{O}\left(k^{2}\right)\right)\notag\\
			&\ge\frac{2}{9}n^2\left(1+\mathcal{O}\left(k^{2}\right)\right)\left(1-\mathcal{O}\left(k^{2}\right)\right)\notag\\
			&\ge\frac{n^2}{9} \label{gusnorm}
		\end{align}
		where the first inequality can be proved in a similar manner for \eqref{gunorm} by noting $n \geq n_2$.

		
		The proof is complete.
	\end{proof}

	In Theorem \ref{thm:gradient_u}, we proved that by appropriately selecting the index $n$ of the incident field $u_n^i$, both the interior wave field $u_n|_{\Omega}$ and the exterior scattered field $u^s_n|_{B_R \setminus \Omega }$ can simultaneously exhibit boundary localization and surface resonance near the boundary of the high-contrast medium $\Omega$. In the following corollary, we will demonstrate that when the index $n$ of the incident wave field depends only on the high-contrast density parameter $\delta$, the resulting interior total field and exterior scattered field will only generate surface resonance, without guaranteeing the occurrence of boundary localization.


	\begin{cor}\label{cor:gbr}
		
		Under the same assumptions as Theorem \ref{thm:gradient_u}, if we select the incident wave $u_n^i$ defined in $\eqref{eq:3d_ui}$ with $n$ only satisfying the condition
		\begin{align}\label{eq:n delta con}
			n \geqslant \frac{1}{\delta^2},
		\end{align}
		then the internal total field \( u_n|_{\Omega} \) and the external scattered field \( u^s_n|_{\mathbb{R}^d \setminus \overline{D}} \)  satisfy \eqref{gradientnorm}.
	\end{cor}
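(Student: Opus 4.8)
The plan is to revisit the proof of Theorem~\ref{thm:gradient_u} and identify exactly which of its hypotheses are consumed by the surface-resonance half of that argument, then verify that the single condition \eqref{eq:n delta con} already supplies everything needed to reproduce the estimates \eqref{gradientnorm}. The corollary is weaker than Theorem~\ref{thm:gradient_u} precisely in that it drops the constraints \(n\ge n_1\) and \(n\ge n_2\); accordingly one no longer has boundary localization (Theorems~\ref{3dthm1}--\ref{3dthm2} are no longer applicable), but one expects surface resonance to persist.

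First I would observe that the asymptotic identities for
$$
\|\nabla u_n\|^2_{L^2(\mathcal{N}_{1,1-\gamma_1}(\partial\Omega))},\quad
\|\nabla u^s_n\|^2_{L^2(\mathcal{N}_{2,\gamma_2-1}(\partial\Omega))},\quad
\|u^i_n\|^2_{L^2(\Omega)}
$$
established in \eqref{eq:tgu2}, \eqref{eq:total_gradient_u_2} and \eqref{eq:ui2} depend only on the layer-potential ansatz \eqref{eq:layer}, the density coefficients \eqref{eq:3d_density}, the eigensystems of Lemmas~\ref{lem:3d_s}--\ref{lem:3d_k}, the small-argument expansions \eqref{eq:j_n_eps}--\eqref{eq:h_n_eps}, and the recurrence \eqref{eq:recursive_eq}. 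None of these ingredients requires a bound on \(n\) in terms of the localization level \(\epsilon\); they are valid for every fixed \(n\ge 1\) in the sub-wavelength regime \(k,\delta\ll 1\). Hence the same three identities remain in force when \(n\) is constrained only by \eqref{eq:n delta con}.

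Next I would examine the step where \eqref{gunorm} and \eqref{gusnorm} were extracted. Dividing \eqref{eq:tgu2} by \eqref{eq:ui2} yields a ratio of the form
$$
\frac{n^2\delta^2\,(2+\tfrac1n)^2(2+\tfrac3n)\bigl(1-\gamma_1^{2n+1}\bigr)}{\bigl(\delta(1+\tfrac1n)+1\bigr)^2}\bigl(1+\mathcal{O}(k^2)\bigr)\bigl(1-\mathcal{O}(k^2)\bigr),
$$
and the conclusion followed from the elementary bound \(\frac{(2+1/n)^2(2+3/n)(1-\gamma_1^{2n+1})}{(\delta(1+1/n)+1)^2}\ge\tfrac{2}{9}\); the only role of the hypothesis \(n\ge n_1\) there was to make \(\gamma_1^{2n+1}\) small, and, analogously, \(n\ge n_2\) was used only to make \(\gamma_2^{-(2n+1)}\) small in the external estimate. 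Since \(\delta\ll 1\) by the standing assumption \eqref{eq:delta}, the hypothesis \(n\ge 1/\delta^2\) in \eqref{eq:n delta con} already forces \(n\gg 1\), so that \(\gamma_1^{2n+1}\) and \(\gamma_2^{-(2n+1)}\) are exponentially small; combining this with the trivial bounds \((2+1/n)^2\ge 4\), \((2+3/n)\ge 2\) and \(\bigl(\delta(1+1/n)+1\bigr)^2\le 4\) reproduces the bound \(\ge\tfrac{2}{9}\). This gives exactly \eqref{gunorm} and \eqref{gusnorm} with \(n\) subject only to \eqref{eq:n delta con}; taking square roots produces the two inequalities in \eqref{gradientnorm}, and finally \(n\ge 1/\delta^2\) forces \(n\delta\ge 1/\delta\gg 1\) and \(n\ge 1/\delta^2\gg 1\), so both ratios blow up.

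The step I expect to be the crux is not a genuine analytic obstacle but a matter of careful bookkeeping: one must confirm that the surface-resonance computation inside the proof of Theorem~\ref{thm:gradient_u} nowhere invokes the boundary-localization Theorems~\ref{3dthm1}--\ref{3dthm2}, so that removing the conditions \(n\ge n_1\) and \(n\ge n_2\) costs only boundary localization while leaving the gradient blow-up intact. Once this is verified, the corollary follows immediately: with \(n\) governed solely by \eqref{eq:n delta con}, the internal total field \(u_n|_\Omega\) and the external scattered field \(u^s_n|_{\mathbb{R}^d\setminus\overline{D}}\) still satisfy \eqref{gradientnorm}, hence exhibit surface resonance, although they need no longer be boundary localized.
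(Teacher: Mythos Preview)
Your proposal is correct and takes essentially the same approach as the paper: the paper's proof simply states that by an argument analogous to Theorem~\ref{thm:gradient_u} one obtains \eqref{gunorm} and \eqref{gusnorm}, and then substitutes \(n\ge 1/\delta^2\) to conclude \(\tfrac{n\delta}{3}\ge\tfrac{1}{3\delta}\gg 1\) and \(\tfrac{n}{3}\ge\tfrac{1}{3\delta^2}\gg 1\). Your write-up is in fact more careful than the paper's in that you explicitly verify why the lower bound \(\ge\tfrac{2}{9}\) for the prefactor survives without the hypotheses \(n\ge n_1,n_2\), namely because \(n\ge 1/\delta^2\gg 1\) already forces \(\gamma_1^{2n+1}\) and \(\gamma_2^{-(2n+1)}\) to be negligible.
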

	\begin{proof}
		Employing an approach analogous to the proof of Theorem \ref{thm:gradient_u}, we derive the inequalities \eqref{gunorm} and \eqref{gusnorm}. In view of condition \eqref{eq:n delta con}, the following results hold
		\begin{equation*}
			\frac{\|\nabla u_n(\mathbf{x})\|_{L^2\left(\mathcal{N}_{1,1-\gamma_1}\left(\partial\Omega\right)\right)}}{\| u^{i}_{n}(\mathbf{x})\|_{L^2\left(\Omega\right)}}\ge \frac{1}{3\delta}\gg 1, \quad \frac{\|\nabla u^s_n(\mathbf{x})\|_{L^2\left(\mathcal{N}_{2,\gamma_2-1}\left(\partial\Omega\right)\right)}}{\| u^{i}_{n}(\mathbf{x})\|_{L^2\left(\Omega\right)}}\ge \frac{1}{3\delta^2}\gg 1
		\end{equation*}
		since $\delta \ll 1$. 
		
		The proof is complete.
	\end{proof}

	In Theorem \ref{thm:gradient_u}, from Remark \ref{rem:32}, when the high-contrast density parameter $\delta$ is fixed, by modulating the incident wave $u_n^i$ given by \eqref{eq:3d_ui}, the corresponding high-contrast medium characterized by \((\Omega;\rho_b, \kappa_b)\)  is made to be a quasi-Minnaert resonator. Similarly, we can adjust the high-contrast density parameter $\delta$ and chose the corresponding the index $n$ defining $u_n^i$ to achieve quasi-Minnaert resonances. 
	In fact, according to Corollary \ref{cor:3.2} and Corollary \ref{cor:gbr}, we can adjust the high-contrast density parameter \(\delta\) to induce the quasi-Minnaert resonance. We summarize this as the following theorem.

	\begin{prop}\label{pro:4.2}  
		Under the same assumptions as Theorem \ref{thm:gradient_u}, for the incident wave defined in \eqref{eq:3d_ui} interacting with the scattering system \eqref{eq:system}, when the high contrast density parameter \(\delta\) satisfies \eqref{eq:327} and \eqref{eq:n delta con}, the high-contrast medium $\Omega$ behaves as a quasi-Minnaert resonator.
	\end{prop}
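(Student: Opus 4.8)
The plan is to obtain Proposition~\ref{pro:4.2} as a direct synthesis of Corollary~\ref{cor:3.2} and Corollary~\ref{cor:gbr}, since, by Definition~\ref{df:quasi-minnaert}, a quasi-Minnaert resonator is precisely a high-contrast medium $\Omega$ for which some non-trivial incident wave makes the internal total field $u_n|_{\Omega}$ and the external scattered field $u^s_n|_{\mathbb{R}^3\setminus\overline{\Omega}}$ simultaneously boundary localized (Definition~\ref{df:blocalization}) and surface resonant (Definition~\ref{df:bresonance}). Accordingly, I would fix $\gamma_1\in(0,1)$, $\gamma_2\in(1,2)$ and a sufficiently small target localization level $\varepsilon\ll 1$, take the density contrast $\delta$ to satisfy \eqref{eq:327}, and choose the index $n$ of the incident wave $u^i_n$ in \eqref{eq:3d_ui} so that the condition \eqref{eq:n delta con}, i.e.\ $n\ge 1/\delta^2$, holds.

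The first step is the elementary observation that the two index requirements appearing in the two corollaries are compatible: since $\delta\ll 1$ we have $1/\delta^2\ge 1/\delta$, so $n\ge 1/\delta^2$ automatically forces $n\ge 1/\delta$, which is exactly the condition \eqref{eq:n27}. Hence, with $\delta$ as in \eqref{eq:327} and $n$ as in \eqref{eq:n delta con}, all hypotheses of Corollary~\ref{cor:3.2} are met, and that corollary yields that both $u_n|_{\Omega}$ and $u^s_n|_{\mathbb{R}^3\setminus\overline{\Omega}}$ are boundary localized in the sense of Definition~\ref{df:blocalization} at level $\varepsilon$.

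For the second step, I note that $n\ge 1/\delta^2$ is precisely the hypothesis \eqref{eq:n delta con} of Corollary~\ref{cor:gbr}, so that corollary applies and delivers the gradient blow-up estimates $\|\nabla u_n\|_{L^2(\mathcal{N}_{1,1-\gamma_1}(\partial\Omega))}/\|u^i_n\|_{L^2(\Omega)}\ge 1/(3\delta)$ and $\|\nabla u^s_n\|_{L^2(\mathcal{N}_{2,\gamma_2-1}(\partial\Omega))}/\|u^i_n\|_{L^2(\Omega)}\ge 1/(3\delta^2)$, both of which are $\gg 1$ because $\delta\ll 1$; this is exactly surface resonance in the sense of Definition~\ref{df:bresonance}. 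Combining the two steps, the fields generated by $u^i_n$ and $\Omega$ satisfy Definitions~\ref{df:blocalization} and~\ref{df:bresonance} at once, so by Definition~\ref{df:quasi-minnaert} a quasi-Minnaert resonance occurs and $(\Omega;\rho_b,\kappa_b)$ is a quasi-Minnaert resonator, as claimed.

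There is no genuine analytic obstacle in this proof: all the asymptotic expansions of the layer potential densities and of the $L^2$- and gradient norms were already carried out in Theorems~\ref{3dthm1}, \ref{3dthm2} and~\ref{thm:gradient_u} and packaged into Corollaries~\ref{cor:3.2} and~\ref{cor:gbr}. The only point that needs (trivial) checking is the consistency of the two lower bounds on $n$ — namely that $n\ge 1/\delta^2$ subsumes $n\ge 1/\delta$ whenever $\delta\ll 1$ — which is exactly what permits a single choice of incident wave to realize both the boundary localization and the surface resonance, and thereby the full quasi-Minnaert resonance.
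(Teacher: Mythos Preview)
Your proposal is correct and follows exactly the route the paper indicates: the proposition is stated as a direct synthesis of Corollary~\ref{cor:3.2} and Corollary~\ref{cor:gbr}, and the only thing to check is the compatibility $n\ge 1/\delta^2 \Rightarrow n\ge 1/\delta$ for $\delta\ll 1$, which you identify explicitly. The paper itself does not write out a separate proof beyond the sentence preceding the proposition, so your argument is in fact more detailed than the original.
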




	
	\section{Quasi-Minnaert resonances in $\mathbb R^2$}\label{sec4}

	In this section, we establish the main results in $\mathbb{R}^2$. The proofs follow a similar approach to those presented for the three-dimensional case in the preceding section. To avoid redundancy, we provide only the essential ingredients required to complete the proofs. 
	In what follows, we denote  $J_{n}(t)$ and $H_{n}^{(1)}(t)$, $n\in \mathbb{N}$, as the Bessel function of order $n$ and the Hankel function of the first kind of order $n$, respectively. For a fixed $n$ with $0<|t|\ll1$, the following asymptotic expansions for $J_n(t)$ and $H_n^{(1)}(t)$ holds:
	\begin{equation}\label{eq:jn} 
		J_n(t)= \frac{1}{\Gamma(n+1)}\Big(\frac{t}{2}\Big)^n\left(1-\frac{1}{n+1}\Big(\frac{t}{2}\Big)^2\\ +\mathcal{O}\left(\frac{1}{2(n+1)(n+2)}\Big(\frac{t}{2}\Big)^4\right)\right),\quad\text{for} \quad n \ge 1
	\end{equation}
	and
	\begin{equation}\label{eq:hn} 
		H_n^{(1)}(t)=-i\frac{\Gamma(n)}{\pi}\Big(\frac{2}{t}\Big)^n\left(1+\frac{1}{n-1}\Big(\frac{t}{2}\Big)^2 +
		\mathcal{O}\left(\frac{1}{(n-1)(n-2)}\Big(\frac{t}{2}\Big)^3\right)\right),\quad \text{for}\quad n \ge 2.
	\end{equation}
	where $\Gamma$ is the Gamma function and $\gamma=0.5772...$ is the Euler-Mascheroni constant.

	To facilitate exposition, we assume that the acoustic high-contrast medium $\Omega$ exhibits radial geometry. After a suitable coordinate transformation, we may take $\Omega$ to be the unit disk in $\mathbb{R}^2$. The spectral properties of layer potential operators in $\mathbb{R}^2$ are detailed in \cite{ammari} and are summarized in the following lemma. Let $(r,\theta)$ be the polar coordinate of $\mathbf x=(x_1,x_2) \in \mathbb R^2$.

	\begin{lem}\label{lem:2d_s_k}
		For the single-layer potential operator \(\mathcal{S}_{\partial \Omega}^{k}\) defined on \eqref{eq:S_k} and the N-P operator \((K_{\partial \Omega}^k)^*\) defined on \eqref{K_k*}, it holds that 
		
		\begin{equation}
			\mathcal{S}_{\partial \Omega}^{k}[e^{in\theta}](\mathbf{x})=-\frac{i\pi}{2}J_{n}(k)H_{n}^{(1)}(k\vert \mathbf{x}\vert)e^{in\theta},\quad \vert \mathbf{x}\vert>1
		\end{equation}
		and
		\begin{equation}
			(K_{\partial \Omega}^k)^*[e^{in\theta}]=\Big(-\frac{1}{2}-\frac{i\pi}{2}kJ_{n}(k)H_{n}^{(1)^{\prime}}(k)\Big)e^{in\theta}=\Big(\frac{1}{2}-\frac{i\pi}{2}kJ'_{n}(k)H_{n}^{(1)}(k)\Big)e^{in\theta}.
		\end{equation}
	\end{lem}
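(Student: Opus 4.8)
The plan is to diagonalize both $\mathcal{S}_{\partial\Omega}^k$ and the Neumann--Poincar\'e operator $(K_{\partial\Omega}^k)^*$ on the Fourier basis $\{e^{in\theta}\}_{n\in\mathbb{Z}}$ of $L^2(\mathbb{S})$, by inserting Graf's addition formula for the Hankel function into the layer-potential integral and then transferring the result for $\mathcal{S}_{\partial\Omega}^k$ to $(K_{\partial\Omega}^k)^*$ via the jump relation \eqref{eq:jump}.

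First I would recall Graf's addition theorem: writing $\mathbf{x}=r(\cos\theta,\sin\theta)$ and $\mathbf{y}=(\cos\phi,\sin\phi)\in\mathbb{S}$, one has
\[
H_0^{(1)}(k|\mathbf{x}-\mathbf{y}|)=\sum_{m\in\mathbb{Z}}H_m^{(1)}(kr)\,J_m(k)\,e^{im(\theta-\phi)},\qquad r>1,
\]
with the roles of $H_m^{(1)}$ and $J_m$ interchanged when $r<1$. Since $\boldsymbol{G}^k(\mathbf{x}-\mathbf{y})=-\frac{i}{4}H_0^{(1)}(k|\mathbf{x}-\mathbf{y}|)$ and $\mathrm{d}s(\mathbf{y})=\mathrm{d}\phi$ on $\mathbb{S}$, I would substitute this expansion into \eqref{eq:S_k} with $\varphi(\mathbf{y})=e^{in\phi}$ and use the orthogonality $\int_0^{2\pi}e^{-im\phi}e^{in\phi}\,\mathrm{d}\phi=2\pi\delta_{mn}$; this collapses the sum to the single term $m=n$ and yields, for $|\mathbf{x}|>1$,
\[
\mathcal{S}_{\partial\Omega}^k[e^{in\theta}](\mathbf{x})=-\frac{i\pi}{2}J_n(k)H_n^{(1)}(k|\mathbf{x}|)e^{in\theta},
\]
which is the first claimed identity, and the same computation for $|\mathbf{x}|<1$ gives $\mathcal{S}_{\partial\Omega}^k[e^{in\theta}](\mathbf{x})=-\frac{i\pi}{2}J_n(k|\mathbf{x}|)H_n^{(1)}(k)e^{in\theta}$.

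For the NP operator I would use the jump relation \eqref{eq:jump}, noting that on $\mathbb{S}$ the outward normal is radial, so $\partial_\nu=\partial_r|_{r=1}$. Differentiating the exterior representation gives $\partial_\nu\mathcal{S}_{\partial\Omega}^k[e^{in\theta}]|_+=-\frac{i\pi}{2}kJ_n(k)H_n^{(1)\prime}(k)e^{in\theta}$, which by \eqref{eq:jump} equals $\bigl(\frac12\mathcal{I}+(K_{\partial\Omega}^k)^*\bigr)[e^{in\theta}]$; hence $(K_{\partial\Omega}^k)^*[e^{in\theta}]=\bigl(-\frac12-\frac{i\pi}{2}kJ_n(k)H_n^{(1)\prime}(k)\bigr)e^{in\theta}$. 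Differentiating the interior representation gives $\partial_\nu\mathcal{S}_{\partial\Omega}^k[e^{in\theta}]|_-=-\frac{i\pi}{2}kJ_n'(k)H_n^{(1)}(k)e^{in\theta}=\bigl(-\frac12\mathcal{I}+(K_{\partial\Omega}^k)^*\bigr)[e^{in\theta}]$, which produces the second form. The two forms coincide exactly because of the Wronskian identity $J_n(k)H_n^{(1)\prime}(k)-J_n'(k)H_n^{(1)}(k)=\frac{2i}{\pi k}$, which I would use as the consistency check.

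I do not anticipate a genuine obstacle: the only delicate point is justifying the interchange of summation, integration and radial differentiation, which follows from the locally uniform convergence of the Graf expansion on $\{r\neq1\}$ together with the classical continuity of $\mathcal{S}_{\partial\Omega}^k[\varphi]$ up to $\mathbb{S}$ from each side and the jump behaviour of its normal derivative. The result is, moreover, already recorded in \cite{ammari}, so in the write-up I would either cite it directly or include the short computation above.
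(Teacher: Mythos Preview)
Your proposal is correct and is exactly the standard derivation one would give for this lemma. Note, however, that the paper does not actually prove Lemma~\ref{lem:2d_s_k}: it simply cites the result from \cite{ammari} and states it without argument, so your write-up already contains more than the paper does, and your closing remark about citing \cite{ammari} directly matches the paper's own treatment.
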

	\begin{rem}
		Analogous  to the proof of Lemma $\ref{lem:3d_s}$, we have the following two identities hold
		\begin{equation}
			\mathcal{S}_{\partial \Omega}^{k}[e^{in\theta}](\mathbf{x})=-\frac{i\pi}{2}J_{n}(k)H_{n}^{(1)}(k)e^{in\theta_{x}},\quad \vert \mathbf{x}\vert=1
		\end{equation}
		and 
		\begin{equation}
			\mathcal{S}_{\partial B_1}^{k}[e^{in\theta}](\mathbf{x})=-\frac{i\pi}{2}J_{n}(k\vert \mathbf{x}\vert)H_{n}^{(1)}(k)e^{in\theta_{x}},\quad \vert \mathbf{x}\vert<1.
		\end{equation}
	\end{rem}

	In order to generate the boundary localization and surface resonance for the internal total wave field and external scattered wave in $\mathbb R^2$, we choose the incident wave \(u^i_n\) as
	\begin{equation}\label{eq:2d_ui}
		{u}^i_n(\mathbf{x})=\beta_{n}J_{n}(k|\mathbf{x}|) {e}^{{i}n\theta}, 
	\end{equation}
	where $\beta_{n}$ is non-zero constant.


	\begin{thm}\label{2dthm1}
		Consider the acoustic scattering problem \eqref{eq:system} in $\mathbb{R}^2$. Let $B_R$ be a disk centered at the origin with radius $R$ satisfying \(\Omega \Subset B_R\), where  \(R \in (2, \infty)\). Suppose that  $\mathcal{N}_{1,1-\gamma_1}\left(\partial\Omega\right)$ and $\mathcal{N}_{2,\gamma_2-1}\left(\partial\Omega\right)$ are defined in  \eqref{df:bregion}, where $\gamma_1\in(0,1)$ and $\gamma_2\in(1,2)$ are constants. Under the assumptions \eqref{eq:sw}-\eqref{eq:wn1}, for a fixed sufficiently small $\epsilon > 0$, if the  incident wave defined in \eqref{eq:2d_ui} with index $n$ satisfying the condition
		\begin{equation}\label{eq:thm 41 n1n2}
			n \geq \max\left(\widetilde{n}_{1},\widetilde{n}_2\right)
		\end{equation} 
		with 
		\begin{align}\label{2dn}
			\widetilde{n}_1=\lceil\frac{1}{2}\frac{\ln{\epsilon}}{\ln{\gamma_1}} - 3 \rceil + 2, \quad \widetilde{n}_2=-\lceil\frac{1}{2}\frac{\ln{\epsilon}}{\ln{\gamma_2}} + 3 \rceil + 2, 
		\end{align}
		then the following estimates hold
		\begin{align}\label{eq:thm 41 ratio}
			r_{u_n,\gamma_1 }:=\frac{\|u_n\|^2_{L^2\left(\Omega \setminus \mathcal{N}_{1,1-\gamma_{1}}\left(\partial\Omega\right)\right)}}{\|u_n\|^2_{L^2\left(\Omega\right)}} &\leq  \mathcal{O}\left(\epsilon\right)+\mathcal{O}\left(\epsilon\omega^2\right)\ll 1,\\
			r_{u_n^s,\gamma_2 }:=\frac{\|u^s_n\|^2_{L^2\left(B_{R}\setminus\left(\mathcal{N}_{2,\gamma_{2}-1}\left(\partial\Omega\right)\cup \Omega\right)\right)}}{\|u^s_n\|^2_{L^2\left(B_{R}\setminus \Omega\right)}}&\leq \mathcal{O}\left(\epsilon\right)+\mathcal{O}\left(\epsilon\omega^2\right)\ll 1. \notag
		\end{align}
	\end{thm}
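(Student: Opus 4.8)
The plan is to adapt, almost verbatim, the combined argument of Theorems~\ref{3dthm1} and~\ref{3dthm2} to the planar setting, with the spherical harmonics $Y_n^m$ replaced by the single Fourier mode $e^{\mathrm{i}n\theta}$ and the spherical functions $j_n,h_n$ replaced by $J_n,H_n^{(1)}$. First I would look for the densities in \eqref{eq:layer} on that mode, $\varphi_b=\varphi_{b,n}e^{\mathrm{i}n\theta}$ and $\varphi=\varphi_n e^{\mathrm{i}n\theta}$, substitute the incident wave \eqref{eq:2d_ui} into the boundary integral system \eqref{eq:5}--\eqref{eq:integraleq}, and use Lemma~\ref{lem:2d_s_k} together with the jump relation \eqref{eq:jump} to diagonalize $\mathcal{S}_{\partial\Omega}^{k_b}$, $\mathcal{S}_{\partial\Omega}^{k}$ and the two Neumann--Poincar\'e operators on $e^{\mathrm{i}n\theta}$. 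This collapses \eqref{eq:5} to a $2\times2$ linear system $\mathbf{A}_n\mathbf{X}_n=\mathbf{b}_n$, where $\mathbf{X}_n=(\varphi_{b,n},\varphi_n)^{\top}$, $\mathbf{b}_n=\bigl(\beta_n J_n(k),\,\delta\beta_n k J_n'(k)\bigr)^{\top}$, and the entries of $\mathbf{A}_n$ are $-\tfrac{\mathrm{i}\pi}{2}J_n(k_b)H_n^{(1)}(k_b)$, $\tfrac{\mathrm{i}\pi}{2}J_n(k)H_n^{(1)}(k)$, $-\tfrac{\mathrm{i}\pi}{2}k_bJ_n'(k_b)H_n^{(1)}(k_b)$ and $\tfrac{\mathrm{i}\pi\delta}{2}kJ_n(k)H_n^{(1)\prime}(k)$, with $k_b=\tau k$, $\tau=\mathcal{O}(1)$.

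Next, using the small-argument expansions \eqref{eq:jn}--\eqref{eq:hn} and the recurrences $J_n'(t)=J_{n-1}(t)-\tfrac{n}{t}J_n(t)=-J_{n+1}(t)+\tfrac{n}{t}J_n(t)$, I would expand $\mathbf{A}_n=\mathbf{A}_0^{n}+k^2\mathbf{A}_2^{n}+\mathcal{O}(k^3)$ exactly as in \eqref{eq:2.31}--\eqref{eq:2.32}; the leading matrix is $\mathbf{A}_0^{n}=\tfrac{1}{2n}\left(\begin{smallmatrix}-1 & 1\\ -n & -n\delta\end{smallmatrix}\right)$, whose determinant $\tfrac{1+\delta}{4n}$ is nonzero, so $\mathbf{A}_0^{n}$ is invertible. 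A Neumann series expansion of the inverse then gives $\mathbf{X}_n=(\mathbf{A}_0^{n})^{-1}\mathbf{b}_n-k^2(\mathbf{A}_0^{n})^{-1}\mathbf{A}_2^{n}(\mathbf{A}_0^{n})^{-1}\mathbf{b}_n+\mathcal{O}(k^3)$, and substituting \eqref{eq:jn} yields density coefficients of the schematic form $\varphi_{b,n}\sim -\tfrac{c_1\delta\beta_n(k/2)^n}{(1+\delta)\Gamma(n)}(1+\mathcal{O}(k^2))$ and $\varphi_n\sim -\tfrac{c_2(1-\delta)\beta_n(k/2)^n}{(1+\delta)\Gamma(n)}(1+\mathcal{O}(k^2))$ --- the planar analogues of \eqref{eq:3d_density}. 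Feeding these back into \eqref{eq:layer}, applying the interior and exterior forms of Lemma~\ref{lem:2d_s_k} once more, and re-expanding the Bessel/Hankel factors, I expect to obtain $u_n(\mathbf{x})=A_n|\mathbf{x}|^n e^{\mathrm{i}n\theta}(1+\mathcal{O}(k^2))$ in $\Omega$ and $u_n^s(\mathbf{x})=B_n|\mathbf{x}|^{-n}e^{\mathrm{i}n\theta}(1+\mathcal{O}(k^2))$ in $B_R\setminus\overline{\Omega}$, with $A_n,B_n$ explicit scalars proportional to $\beta_n(k/2)^n$.

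It then remains to integrate in polar coordinates. For the unit disk, $\Omega\setminus\mathcal{N}_{1,1-\gamma_1}(\partial\Omega)=B_{\gamma_1}$ and $B_R\setminus(\mathcal{N}_{2,\gamma_2-1}(\partial\Omega)\cup\Omega)=B_R\setminus B_{\gamma_2}$, and the area element $r\,\mathrm{d}r\,\mathrm{d}\theta$ gives $\|u_n\|_{L^2(B_t)}^2=\widetilde K_n(t)(1+\mathcal{O}(k^2))$ with $\widetilde K_n(t)\propto t^{2n+2}$, so that $r_{u_n,\gamma_1}=\gamma_1^{2n+2}(1+\mathcal{O}(\omega^2))$; likewise $\|u_n^s\|_{L^2(B_{t_1}\setminus B_{t_2})}^2=\widetilde L_n(t_1,t_2)(1+\mathcal{O}(k^2))$ with $\widetilde L_n(t_1,t_2)\propto t_2^{2-2n}-t_1^{2-2n}$ for $n\ge 2$, so that $r_{u_n^s,\gamma_2}\le \tfrac{\gamma_2^{2-2n}}{1-R^{2-2n}}(1+\mathcal{O}(\omega^2))$, where $R>2$ keeps $1-R^{2-2n}\ge \tfrac34$. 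The thresholds in \eqref{eq:thm 41 n1n2}--\eqref{2dn} are chosen precisely so that $n\ge\widetilde n_1$ forces $\gamma_1^{2n+2}\le\epsilon$ while $n\ge\widetilde n_2$ forces $\gamma_2^{2-2n}\le C\epsilon$ for an absolute constant $C$; together with the $\mathcal{O}(\omega^2)$ corrections this yields \eqref{eq:thm 41 ratio}.

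I expect the only genuine work --- and the place where the argument really departs from Section~\ref{sec3} --- to be the bookkeeping in the two middle steps: the 2D expansions \eqref{eq:jn}--\eqref{eq:hn} carry $\Gamma$-factors and powers of $2$ rather than double factorials, and \eqref{eq:hn} is valid only for $n\ge 2$, which is exactly why the final summand in \eqref{2dn} is $2$ rather than the $1$ of \eqref{n_1}; moreover the planar area element shifts the radial exponents to $2n+2$ and $2-2n$, in contrast with the $2n+3$ and $1-2n$ of the three-dimensional proof. Keeping these consistent through the inversion of $\mathbf{A}_n$ and through the final integrations is where the care lies; everything else is routine.
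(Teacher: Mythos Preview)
Your proposal is correct and follows essentially the same route as the paper: the paper also writes $\varphi_b=\varphi_{b,n}e^{\mathrm{i}n\theta}$, $\varphi=\varphi_n e^{\mathrm{i}n\theta}$, plugs into \eqref{eq:5} via Lemma~\ref{lem:2d_s_k}, expands with \eqref{eq:jn}--\eqref{eq:hn} to obtain the 2D analogue of \eqref{eq:3d_density}, and then computes $\|u_n\|_{L^2(B_t)}^2=M_{1,n}(t)(1+\mathcal{O}(k^2))$ with $M_{1,n}(t)\propto t^{2n+2}$, so that $r_{u_n,\gamma_1}=\gamma_1^{2n+2}(1+\mathcal{O}(k^2))\le\mathcal{O}(\epsilon)+\mathcal{O}(\epsilon\omega^2)$ once $n\ge\widetilde n_1$; the exterior bound is only stated to follow ``by a similar argument''. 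Your write-up is in fact more explicit than the paper's in two places---you spell out the leading matrix $\mathbf{A}_0^n=\tfrac{1}{2n}\left(\begin{smallmatrix}-1&1\\-n&-n\delta\end{smallmatrix}\right)$ and its invertibility, and you actually carry through the exterior computation with the $t^{2-2n}$ exponent and the harmless factor $(1-R^{2-2n})^{-1}$---but the strategy is identical.
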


	\begin{proof}
		The proof of this theorem is similar to the counterpart of Theorem \ref{3dthm1} and \ref{3dthm2}. In what follows, we mainly
		give several major ingredients of the proof. In view of \(u^i_n\) given by \eqref{eq:2d_ui}, for the internal total wave field $u_n $ and external scattered wave field $u_n^s $ given by \eqref{eq:layer}, the associated  density functions $\varphi_b(\hat{\mathbf{x}}) $ and $\varphi(\hat{\mathbf{x}})$, corresponding to $u_n $ and $u_n^s $  respectively,  can be written as
		\begin{equation}\label{eq:2d_phi}
			\varphi_b(\hat{\mathbf{x}}) = \varphi_{b, n} \mathrm{e}^{\mathrm{i}n\theta},\quad
			\varphi(\hat{\mathbf{x}}) = \varphi_{n} \mathrm{e}^{\mathrm{i}n\theta}.
		\end{equation}
		where $\varphi_{b, n}$ and $\varphi_{n}$ are constants to be determined.
		By virtue of Lemma \ref{lem:2d_s_k}, substituting \eqref{eq:2d_phi} into \eqref{eq:5}, using \eqref{eq:jn} and \eqref{eq:hn}, for $n>2$ we obtain the asymptotic expansions of $\varphi_{b, n}$ and $\varphi_{n}$:
		\begin{equation}\label{eq:2d_density}
			\begin{pmatrix}
				\varphi_{b,n}\\
				\varphi_{n}
			\end{pmatrix}=\begin{pmatrix}
				-\frac{\delta\beta_nk^{n}}{(\delta+1)2^{n-2}(n-1)!}\\
				\frac{(1-\delta)\beta_{n}k^n}{(\delta+1)2^{n-1}(n-1)!}
			\end{pmatrix}+                                     
			\begin{pmatrix}
				\mathcal{O}\left(\frac{\delta\beta_n k^{n+2}}{2^{n}(n+1)!}\right)\\
				\mathcal{O}\left(\frac{\delta\beta_n k^{n+2}}{2^{n}(n+1)!}\right)
			\end{pmatrix}.
		\end{equation}
		In the following, we only prove that $u_n$ is internally boundary-localized, and
		$u^s_n$ can be proved by following a similar argument.
		For any $t\in[0,1]$, similar to \eqref{eq:3du_internal}, it yields that
		
		
		\begin{align}
			\|u_n\|^2_{L^2\left(B_t\right)}=M_{1,n}(t)\left(1+\mathcal{O}\left(k^2\right)\right),
		\end{align}
		where
		\begin{equation}\label{eq:2d1M}
			\begin{aligned}
				M_{1,n}(t) &=  \frac{\pi\delta^2 \vert\beta_n\vert^2k^{2n}t^{2n+2}}{\left[(\delta+1)2^{n-1}n!\right]^2\left(n+1\right)}.
			\end{aligned}
		\end{equation}
		For any $\gamma_{1} \in(0,1)$, due to $n>\widetilde{n}_{1}$ and $\frac{M_{1,n}(\gamma_{1})}{M_{1,n}(1)}=\gamma_{1}^{2n+2}$, we know that $\frac{M_{1,n}(\gamma_{1})}{M_{1,n}(1)} \leq \epsilon$. Therefore, one has
		\begin{align*}
			\frac{\|u_n\|^2_{L^2\left(\Omega \setminus \mathcal{N}_{1,1-\gamma_{1}}\left(\partial\Omega\right)\right)}}{\|u\|^2_{L^2\left(\Omega\right)}}
			&=\frac{M_{1,n}(\gamma_{1})}{M_{1,n}(1)}\left(1+\mathcal{O}\left(k^{2}\right))(1-\mathcal{O}\left(k^{2}\right)\right)\\
			&\leq  \mathcal{O}\left(\epsilon\right)+\mathcal{O}\left(\epsilon\omega^2\right).
		\end{align*}
		
		The proof is complete.
	\end{proof}

	%
	%

	Analogous to Theorem \ref{thm:gradient_u}, we establish that both the internal total wave field and the external scattered field generated by the incident wave defined by \eqref{eq:2d_ui}, exhibit surface resonances.

	\begin{thm}\label{thm:2d_gradient}
		Consider the acoustic scattering problem \eqref{eq:system} in $\mathbb{R}^3$. Let $B_R$ be a region satisfying \(\Omega \Subset B_R\) with \(R \in (2, \infty)\). Assume that  $\mathcal{N}_{1,1-\gamma_1}\left(\partial\Omega\right)$ and $\mathcal{N}_{2,\gamma_2-1}\left(\partial\Omega\right)$ are defined in  \eqref{df:bregion}, where $\gamma_1\in(0,1)$ and $\gamma_2\in(1,2)$ are constants. Under the assumptions \eqref{eq:sw}-\eqref{eq:wn1}, for a given boundary localization level $\epsilon \ll 1$, we select the incident wave defined in $\eqref{eq:3d_ui}$ with index $n$ satisfying the condition:
		\begin{equation}\label{eq:n thm 42}
			n \geq \max\left(\tilde{n}_1, \tilde{n}_2, \frac{1}{\delta^2}\right),
		\end{equation} 
		where $\tilde{n}_1$ and $\tilde{n}_2$ are defined in \eqref{2dn}, then  we have
		\begin{align}
			e_{u_n,\gamma_1 }:=\frac{\|\nabla u_n(\mathbf{x})\|_{L^2\left(\mathcal{N}_{1,1-\gamma_1}\left(\partial\Omega\right)\right)}}{\| u^{i}_{n}(\mathbf{x})\|_{L^2\left(\Omega\right)}}\ge \frac{n\delta}{2}\gg 1, \label{2d:gu_ratio} \\
			e_{u^s_n,\gamma_2 }:=\frac{\|\nabla u^s_n(\mathbf{x})\|_{L^2\left(B_{R}\setminus\left(\mathcal{N}_{2,\gamma_{2}-1}\left(\partial\Omega\right)\cup \Omega\right)\right)}}{\| u^{i}_{n}(\mathbf{x})\|_{L^2\left(\Omega\right)}}\ge \frac{n}{2}\gg 1.\label{2d:gu_ratio 1}
		\end{align}
		This indicates the occurrence of the surface resonances.
	\end{thm}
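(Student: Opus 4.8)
The plan is to follow the strategy of the proof of Theorem~\ref{thm:gradient_u}, replacing the spherical ingredients by their cylindrical counterparts: the spectral identities of Lemma~\ref{lem:2d_s_k}, the density asymptotics \eqref{eq:2d_density} established in the proof of Theorem~\ref{2dthm1}, and the small-argument expansions \eqref{eq:jn}--\eqref{eq:hn} of $J_n$ and $H_n^{(1)}$. From the representation \eqref{eq:layer} together with Lemma~\ref{lem:2d_s_k}, the internal total field is $u_n(\mathbf{x}) = -\tfrac{i\pi}{2}H_n^{(1)}(k_b)\varphi_{b,n}\,J_n(k_b|\mathbf{x}|)e^{in\theta}$ for $|\mathbf{x}|<1$ and the external scattered field is $u_n^s(\mathbf{x}) = -\tfrac{i\pi}{2}J_n(k)\varphi_n\,H_n^{(1)}(k|\mathbf{x}|)e^{in\theta}$ for $|\mathbf{x}|>1$, with $\varphi_{b,n},\varphi_n$ as in \eqref{eq:2d_density}. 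Throughout, $k_b = \tau k$ with $\tau = \mathcal{O}(1)$, so that every power of $\tau$ cancels between the $H_n^{(1)}(k_b)$ and $J_n(k_b)$ factors.

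For the internal estimate I would compute the gradient in polar coordinates through $\nabla\!\big(f(r)e^{in\theta}\big) = f'(r)e^{in\theta}\hat e_r + \tfrac{in}{r}f(r)e^{in\theta}\hat e_\theta$, and use the recurrence $J_n'(z) = \tfrac{n}{z}J_n(z) - J_{n+1}(z)$ with \eqref{eq:jn} to see that the radial factor $k_bJ_n'(k_b r)$ and the angular factor $\tfrac{n}{r}J_n(k_br)$ coincide to leading order, both $\sim \tfrac{nk_b^n}{2^nn!}r^{n-1}$. Because $\hat e_r\perp\hat e_\theta$, the modulus squared $|\nabla u_n|^2$ is the sum of the two squared terms, so $|\nabla u_n|^2 = |C_n|^2\,\tfrac{2n^2k_b^{2n}r^{2n-2}}{(2^nn!)^2}\big(1+\mathcal{O}(k^2)\big)$ with $C_n = -\tfrac{i\pi}{2}H_n^{(1)}(k_b)\varphi_{b,n}$. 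Integrating over $\mathcal{N}_{1,1-\gamma_1}(\partial\Omega) = \{\gamma_1<|\mathbf{x}|<1\}$ with $\mathrm{d}\mathbf{x} = r\,\mathrm{d}r\,\mathrm{d}\theta$ introduces the factor $\int_{\gamma_1}^1 r^{2n-1}\,\mathrm{d}r = \tfrac{1-\gamma_1^{2n}}{2n}$, and substituting $|H_n^{(1)}(k_b)|^2$ from \eqref{eq:hn} and $|\varphi_{b,n}|^2$ from \eqref{eq:2d_density} yields $\|\nabla u_n\|^2_{L^2(\mathcal{N}_{1,1-\gamma_1}(\partial\Omega))} = c_1\tfrac{\delta^2 n|\beta_n|^2k^{2n}(1-\gamma_1^{2n})}{(\delta+1)^2(2^nn!)^2}\big(1+\mathcal{O}(k^2)\big)$ for an explicit absolute constant $c_1>0$. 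A direct computation from \eqref{eq:2d_ui} and \eqref{eq:jn} gives $\|u_n^i\|^2_{L^2(\Omega)} = c_0\tfrac{|\beta_n|^2k^{2n}}{(n+1)(2^nn!)^2}\big(1+\mathcal{O}(k^2)\big)$, so $e_{u_n,\gamma_1}^2 = \tfrac{c_1}{c_0}\tfrac{n(n+1)\delta^2(1-\gamma_1^{2n})}{(\delta+1)^2}\big(1+\mathcal{O}(k^2)\big)$; since $\delta\ll1$ and $n\geq\widetilde n_1$ forces $\gamma_1^{2n}$ to be negligible (exactly as in the proof of Theorem~\ref{2dthm1}), this exceeds $\tfrac{n^2\delta^2}{4}$, and taking square roots gives $e_{u_n,\gamma_1}\geq\tfrac{n\delta}{2}$; the constraint $n\geq1/\delta^2$ in \eqref{eq:n thm 42} then makes the right-hand side $\gg1$, which is \eqref{2d:gu_ratio}.

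The external estimate is strictly parallel. Writing $u_n^s = -\tfrac{i\pi}{2}J_n(k)\varphi_nH_n^{(1)}(k|\mathbf{x}|)e^{in\theta}$ and using $H_n^{(1)^{\prime}}(z) = \tfrac{n}{z}H_n^{(1)}(z) - H_{n+1}^{(1)}(z)$ with \eqref{eq:hn}, the field $H_n^{(1)}(k|\mathbf{x}|)$ decays like $|\mathbf{x}|^{-n}$, so $|\nabla u_n^s|^2$ concentrates at $|\mathbf{x}|=1$; on the exterior annular neighbourhood $\mathcal{N}_{2,\gamma_2-1}(\partial\Omega) = \{1<|\mathbf{x}|<\gamma_2\}$ appearing in Definition~\ref{df:bresonance}, integration produces the factor $\int_1^{\gamma_2} r^{-2n-1}\,\mathrm{d}r = \tfrac{1}{2n}\big(1-\gamma_2^{-2n}\big)$. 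Since the density $\varphi_n$ in \eqref{eq:2d_density} carries the factor $1-\delta$ in place of $\delta$, dividing by $\|u_n^i\|^2_{L^2(\Omega)}$ cancels the $\delta^2$ and gives $e_{u_n^s,\gamma_2}^2 = c\,\tfrac{n(n+1)(1-\delta)^2(1-\gamma_2^{-2n})}{(\delta+1)^2}\big(1+\mathcal{O}(k^2)\big)$ for an explicit $c>0$; with $\delta\ll1$ and $n\geq\widetilde n_2$ making $\gamma_2^{-2n}$ negligible this exceeds $\tfrac{n^2}{4}$, so $e_{u_n^s,\gamma_2}\geq\tfrac n2\gg1$, which is \eqref{2d:gu_ratio 1}.

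The main obstacle I anticipate is bookkeeping rather than conceptual. One must track the two-dimensional prefactors — the $-\tfrac{i\pi}{2}$ in the single-layer eigenvalues, the $-\tfrac{i\Gamma(n)}{\pi}$ and the powers $(2/t)^n$ in \eqref{eq:hn}, the distinct powers $2^n$, $2^{n-1}$, $2^{n-2}$ and the $(n-1)!$ versus $n!$ in \eqref{eq:2d_density}, and the different radial weights ($\tfrac{1}{2n}$ on the annuli versus $\tfrac{1}{2(n+1)}$ on the disk) — carefully enough that the constants $c_0,c_1$ (and $c$) combine to yield the stated lower bounds $\tfrac{n\delta}{2}$ and $\tfrac n2$; and one must verify, exactly as in the $\mathbb{R}^3$ proof, that the $\mathcal{O}(k^2)$ remainders coming from the subleading terms in \eqref{eq:jn}--\eqref{eq:hn} and in \eqref{eq:2d_density} are dominated uniformly in $n$ under the sub-wavelength scaling \eqref{eq:wn1}. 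Once these routine estimates are settled, \eqref{2d:gu_ratio}--\eqref{2d:gu_ratio 1} follow and the surface resonance in the sense of Definition~\ref{df:bresonance} is established.
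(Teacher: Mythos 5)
Your proposal follows essentially the same route as the paper's own proof: it uses the density asymptotics \eqref{eq:2d_density} with Lemma~\ref{lem:2d_s_k}, expands $\nabla u_n$ and $\nabla u_n^s$ in polar coordinates via the recurrences and \eqref{eq:jn}--\eqref{eq:hn}, exploits the orthogonality of $\hat e_r$ and $\hat e_\theta$, integrates radially over the annular neighbourhoods, and divides by $\|u_n^i\|^2_{L^2(\Omega)}$, using $n\geq\widetilde n_1,\widetilde n_2$ and $n\geq 1/\delta^2$ exactly as the paper does (the constants indeed combine to give the factor $8$ and hence the bounds $n\delta/2$ and $n/2$). The argument and its key ingredients are correct and match the paper's proof.
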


	
	\begin{proof}
		We first establish the surface localization of the internal total field $ u_n $. Following the approach used to derive \eqref{eq:3d_gun} in Theorem \ref{thm:gradient_u}, we substitute the density function \eqref{eq:2d_phi} of the internal total field $ u_n $ into Lemma \ref{lem:2d_s_k}. Combining this with  \eqref{eq:jn} and \eqref{eq:hn}, we obtain the asymptotic expression for $ \nabla u_n(\mathbf{x}) $ in $ \Omega $
		\begin{equation}\label{eq:2d_gu}
			\begin{aligned}
				\nabla	u_{n}(\mathbf{x})&=\frac{\delta\beta_nk^{n}\vert\mathbf{x} \vert^{n-1}}{2^{n-1}(n-1)!}
				\left(\frac{i}{\delta +1}+\mathcal{O}\left(k^2\right)\right)e^{in\theta}\hat{e}_{\theta}\\
				&+\frac{\delta\beta_nk^{n}\vert\mathbf{x}\vert^{n-1}}{2^{n-1}(n-1)!}
				\left(\frac{1}{\delta +1}+\mathcal{O}\left(k^2\right)\right)e^{in\theta}\hat{e}_r,
			\end{aligned}
		\end{equation}
		where 
		\begin{equation}
			\hat{e}_r=\left(\begin{array}{c}\cos\theta\\\sin\theta\\\end{array}\right),
			\hat{e}_{\theta}=\left(\begin{array}{c}-\sin\theta\\\cos\theta\\\end{array}\right).
		\end{equation}

		
		In view of \eqref{eq:2d_gu}, using the orthogonality  of \(\hat{e}_r\) and \(\hat{e}_\theta\), it yields that 
		
		\begin{align}\label{eq:2d_gu_n}
			\|\nabla u_{n}\|^2_{L^2\left(\mathcal{N}_{1,1-\gamma_1}\left(\partial\Omega\right)\right)}
			&=
			\frac{\pi\delta^2\vert\beta_n\vert^2k^{2n}(1-\gamma_{1}^{2n})}{\left[(\delta+1)(n-1)!\right]^22^{2n-3}n}\left(1+\mathcal{O}\left(k^2\right)\right).
		\end{align}
		Similarly, we can deduce that
		\begin{align}\label{eq:2d_gui_n}
			\| u^{i}_{n}\|^2_{L^2\left(\Omega\right)}
			&=
			\frac{\pi\vert\beta_n\vert^2k^{2n}}{(n!)^22^{2n}(n+1)}\left(1+\mathcal{O}\left(k^2\right)\right).
		\end{align}
		 By noting $n \geq \widetilde{n}_1$, $\delta \ll 1$ and $\epsilon \ll 1$, we know that $\frac{(1+\frac{1}{n})(1-\gamma^{2n}_{1})}{\left(1+\delta\right)^2} \ge \frac{1}{8}$. Under the sub-wavelength assumption \(k \ll 1\), according to \eqref{eq:2d_gu_n} and \eqref{eq:2d_gui_n}, it yields that
		\begin{align}
			\frac{\|\nabla u_n(\mathbf{x})\|^2_{L^2\left(\mathcal{N}_{1,1-\gamma_1}\left(\partial\Omega\right)\right)}}{\| u^{i}_{n}(\mathbf{x})\|^2_{L^2\left(\Omega\right)}}
			&=\frac{8\delta^2n(n+1)(1-\gamma^{2n}_{1})}{\left(1+\delta\right)^2}\left(1+\mathcal{O}\left(k^{2}\right)\right)\left(1-\mathcal{O}\left(k^{2}\right)\right)\label{eq:2d_gue1}\\
			&=\frac{8n^2\delta^2(1+\frac{1}{n})(1-\gamma^{2n}_{1})}{\left(1+\delta\right)^2}\left(1+\mathcal{O}\left(k^{2}\right)\right)\left(1-\mathcal{O}\left(k^{2}\right)\right)\notag\\
			&\ge n^2\delta^2\left(1+\mathcal{O}\left(k^{2}\right)\right)\left(1-\mathcal{O}\left(k^{2}\right)\right)\notag\\
			&= n^2\delta^2\left(1+\mathcal{O}\left(k^{2}\right)\right)\notag\\
			&\ge \frac{n^2\delta^2}{4}\notag. 
		\end{align}
		We prove  \eqref{2d:gu_ratio}.

		
		Next, we prove \eqref{2d:gu_ratio 1} to establish surface resonance of the external scattered field $ u^s_n $. By employing an argument analogous to \eqref{eq:2d_gu_n}, we derive
		

		\begin{align}\label{eq:2d_gus_n}
			\|\nabla u^s_n\|^2_{L^2\left(\mathcal{N}_{2,\gamma_2}\left(\partial\Omega\right)\right)}
			&= \frac{\pi(1-\delta)^2\vert \beta_{n}\vert^2k^{2n}\left(1-\frac{1}{\gamma^{2n}_{2}}\right)}{\left[(\delta+1)(n-1)!\right]^22^{2n-1}n}\left(1+\mathcal{O}\left(k^2\right)\right).
		\end{align}
		Combining \eqref{eq:2d_gus_n} and \eqref{eq:2d_gui_n}, we can deduce that
		\begin{align*}
			\frac{\|\nabla u^s_n(\mathbf{x})\|^2_{L^2\left(\mathcal{N}_{2,\gamma_2-1}\left(\partial\Omega\right)\right)}}{\| u^{i}_{n}(\mathbf{x})\|^2_{L^2\left(\Omega\right)}}
			&=\frac{2n(n+1)(1-\delta)^2\left(1-\frac{1}{\gamma^{2n}_{2}}\right)}{\left(1+\delta\right)^2}\left(1+\mathcal{O}\left(k^{2}\right)\right)\left(1-\mathcal{O}\left(k^{2}\right)\right)\\
			&=\frac{2n^2(1+\frac{1}{n})(1-\delta)^2\left(1-\frac{1}{\gamma^{2n}_{2}}\right)}{\left(1+\delta\right)^2}\left(1+\mathcal{O}\left(k^{2}\right)\right)\left(1-\mathcal{O}\left(k^{2}\right)\right)\\
			&\ge \frac{n^2}{2} \left(1+\mathcal{O}\left(k^{2}\right)\right)\left(1-\mathcal{O}\left(k^{2}\right)\right)\\
			&\ge \frac{n^2}{4},
		\end{align*}
			where the first inequality can be proved in a similar manner for \eqref{eq:2d_gue1} by noting $n \geq \widetilde{n}_2$. 
		
		The proof is complete.
	\end{proof}
	
	\begin{rem}
		When we choose the incident wave $u^i_n$ given by \eqref{eq:2d_ui}, for fixed parameters $\gamma_1\in(0,1)$ and $\gamma_2\in(1,2)$, recalling that $\tilde{n}_1$ and $\tilde{n}_2$ are defined in \eqref{2dn}, if  the boundary localization level $\epsilon$ is sufficient small and  $n$ satisfies \eqref{eq:n thm 42}, from Theorems \ref{2dthm1} and \ref{thm:2d_gradient}, one can directly conclude that both boundary localizations and surface resonances occur, which implies that quasi-Minnaert resonances are generated. 
	\end{rem}

	In Theorems \ref{2dthm1} and \ref{thm:2d_gradient}, for the high-contrast density parameter $\delta$ and the boundary localization level $\epsilon$, we can choose a specific incident wave that triggers the quasi-Minnaert resonance. Analogously to Corollary \ref{cor:3.2} and \ref{cor:gbr} in the three-dimensional case, in the two-dimensional scenario, the quasi-Minnaert resonance can also be induced by appropriately selecting the high-contrast parameter $\delta$.
	\begin{prop}\label{prop:3.2}
		Under the same assumptions as Theorem \ref{thm:2d_gradient}, consider the interaction between the incident wave \(u^i_n\) defined in \eqref{eq:3d_ui} and the scattering system \eqref{eq:system}. The high-contrast medium $\Omega$ behaves as a quasi-Minnaert resonator if the high-contrast parameter \(\delta\) satisfies:
		\begin{align}
			\delta \leqslant \beta =\min\left\{\frac{2\ln \gamma_1}{\ln\varepsilon - 3\ln \gamma_1}, \frac{2\ln \gamma_2}{3\ln{\gamma_2} - \ln \varepsilon} \right\},
		\end{align}
		and
		\begin{align}
			n \geqslant \frac{1}{\delta^2}.
		\end{align}Here, the first inequality guarantees the boundary localization as defined in \ref{df:blocalization}, while the second ensures the surface resonance specified in \ref{df:bresonance}.
	\end{prop}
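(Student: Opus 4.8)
The plan is to derive Proposition~\ref{prop:3.2} directly from Theorems~\ref{2dthm1} and~\ref{thm:2d_gradient}, by checking that the two hypotheses on $\delta$ and $n$ in the statement together force the index $n$ to satisfy the conditions \eqref{eq:thm 41 n1n2} and \eqref{eq:n thm 42} demanded by those theorems; once that is done, the conclusion is immediate by unwinding Definition~\ref{df:quasi-minnaert}. This parallels the way Proposition~\ref{pro:4.2} was obtained from Corollary~\ref{cor:3.2} and Corollary~\ref{cor:gbr} in the three-dimensional setting, the only difference being that in two dimensions the relevant index thresholds are $\widetilde{n}_1,\widetilde{n}_2$ from \eqref{2dn}.

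The first and essentially only substantive step would be to show that the bound $\delta \le \beta$ in the statement forces $1/\delta^{2} \ge \max\{\widetilde{n}_1,\widetilde{n}_2\}$ once $\epsilon$ is small enough; this is the elementary estimate used, without a detailed proof, in Corollary~\ref{cor:3.2}. Inverting $\delta \le \frac{2\ln\gamma_1}{\ln\epsilon-3\ln\gamma_1}$ --- and noting that for $\epsilon$ sufficiently small both numerator and denominator are negative, so the ratio is positive and reciprocation reverses the inequality --- yields $\frac{1}{\delta} \ge \frac{1}{2}\frac{\ln\epsilon}{\ln\gamma_1}-\frac{3}{2}$, while $\lceil x\rceil < x+1$ gives $\widetilde{n}_1 < \frac{1}{2}\frac{\ln\epsilon}{\ln\gamma_1}$. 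Since $\epsilon\ll1$ makes $\widetilde{n}_1$ large (in particular $\widetilde{n}_1\ge 4$), squaring the lower bound on $1/\delta$ absorbs the bounded additive gap and produces $1/\delta^{2}\ge \widetilde{n}_1$. The companion bound $1/\delta^{2}\ge 1/\delta \ge \widetilde{n}_2$ comes out the same way, and more comfortably, from $\delta \le \frac{2\ln\gamma_2}{3\ln\gamma_2-\ln\epsilon}$, now using $\ln\gamma_2>0$.

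With this in hand, the remaining hypothesis $n\ge 1/\delta^{2}$ immediately yields both $n \ge \max\{\widetilde{n}_1,\widetilde{n}_2\}$, which is \eqref{eq:thm 41 n1n2}, and $n \ge \max(\widetilde{n}_1,\widetilde{n}_2,1/\delta^{2})$, which is \eqref{eq:n thm 42}. Theorem~\ref{2dthm1} then supplies the boundary-localization estimates \eqref{eq:thm 41 ratio}, so that $u_n|_{\Omega}$ and $u_n^s|_{\mathbb R^2\setminus\overline{\Omega}}$ are internally and externally boundary localized in the sense of Definition~\ref{df:blocalization}, and Theorem~\ref{thm:2d_gradient} supplies the gradient blow-up estimates \eqref{2d:gu_ratio}--\eqref{2d:gu_ratio 1}, so both fields exhibit surface resonance in the sense of Definition~\ref{df:bresonance}. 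Since these two phenomena hold simultaneously for the non-trivial incident wave $u_n^i$ of \eqref{eq:2d_ui} in the sub-wavelength regime, Definition~\ref{df:quasi-minnaert} asserts precisely that $\Omega$ is a quasi-Minnaert resonator, which finishes the argument.

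I expect the main obstacle to be purely the bookkeeping in the second paragraph: one must pin down the constants entering $\beta$, and track the ceiling functions together with the signs of $\ln\gamma_1$, $\ln\gamma_2$, $\ln\epsilon$, carefully enough that the chain $n\ge 1/\delta^{2}\ge\max\{\widetilde n_1,\widetilde n_2\}$ genuinely holds for every sufficiently small $\epsilon$. No new asymptotic analysis of the Bessel or Hankel functions is needed beyond what is already carried out in Theorems~\ref{2dthm1} and~\ref{thm:2d_gradient}.
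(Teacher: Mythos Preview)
Your proposal is correct and follows exactly the approach the paper intends: the paper does not spell out a proof of Proposition~\ref{prop:3.2} but simply states it as the two-dimensional analogue of Proposition~\ref{pro:4.2}, to be obtained in the same way from Corollary~\ref{cor:3.2} and Corollary~\ref{cor:gbr} with $n_1,n_2$ replaced by $\widetilde n_1,\widetilde n_2$. Your write-up actually supplies more detail than the paper does, in particular the careful handling of the ceiling functions and the additive gap via $1/\delta^2\ge 1/\delta$; this is precisely the bookkeeping the paper glosses over with ``it can be verified that'' in the proof of Corollary~\ref{cor:3.2}.
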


	\section{Numerical examples}\label{sec6}

	In this section, we present extensive numerical examples to validate the theoretical findings of the preceding sections and to display that quasi-Minnaert resonance can induce the invisibility of the high-contrast medium $\Omega$. Specifically, for the incident wave $u^i_n$ defined by \eqref{eq:3d_ui} or \eqref{eq:2d_ui}, we examine various shapes of the high-contrast medium $\Omega$, although our primary analysis in previous sections focuses on radially symmetric geometries. Firstly, Our numerical results demonstrate that boundary localization and surface resonances for the internal total field $u|_{\Omega}$ and the external scattered field $u^s|_{\mathbb{R}^d \setminus \Omega}$ persist for general-shaped medium $\Omega$ when paired with a suitably chosen incident wave $u^i_n$. These findings underscore the critical role of high-contrast structures and the tuning of the incident wave in generating quasi-Minnaert resonances. Notably, as the index $n$ defining the incident wave increases, both the internal total field and the external scattered field exhibit increased boundary localizations. Similar observations hold for surface resonances. Furthermore, we demonstrate that under quasi-Minnaert resonance, as $n$ increases, the external scattered field decays rapidly, inducing a cloaking effect in the high-contrast medium $\Omega$.
	
	As noted in Remark~\ref{rem:21}, normalization of the $L^2$-norm of $u^i_n$ in $\Omega$ is advantageous, given our focus on the sub-wavelength regime. Accordingly, in our subsequent numerical experiments, we normalize the incident wave as $\frac{u^i_n(\mathbf{x})}{\|u^i_n\|_{L^2(\Omega)}}$.

	In two dimensions, the test domains for the high-contrast medium $\Omega$ include a disk, heart, Corner, and clover. In three dimensions, the validity of the theoretical results is verified exclusively through spherical geometries. The test domains are parameterized as follows:
	\begin{equation}
		\label{eq:test_domains}
		\begin{aligned}
			&\text{Circle:} \quad &\{(x_1,x_2) \mid x_1^2 + x_2^2 \leq 1\}, \\
			&\text{Heart:} \quad &\{(x_1,x_2) \mid \text{boundary parameterized by } s_1(t), \, 0 \leq t \leq 2\pi\}, \\
			&\text{Corner:} \quad &\{(x_1,x_2) \mid \text{boundary parameterized by } s_2(t), \, 0 \leq t \leq 2\pi\}, \\
			&\text{Clover:} \quad &\{(x_1,x_2) \mid \text{boundary parameterized by } s_3(t), \, 0 \leq t \leq 2\pi\}, \\
			&\text{Sphere:} \quad &\{(x_1,x_2,x_3) \mid x_1^2 + x_2^2 + x_3^2 \leq 1\},
		\end{aligned}
	\end{equation}
	where the boundary parameterizations are given by:
	\begin{equation}
		\label{eq:boundary_param}
		\begin{aligned}
			&s_1(t) = (2 \cos 2t + 3) (\cos t, -\sin t), \\
			&s_2(t) = (1 - \cos t) (\cos t, -\sin t), \\
			&s_3(t) = 4 (1 + \cos 3t + 3 \sin^2 3t) (\cos t, -\sin t).
		\end{aligned}
	\end{equation}
	
	In $\mathbb{R}^2$, triangular meshes for all domains are generated using FreeFem++. The acoustic scattering problem \eqref{eq:system} is discretized using the boundary finite element method. In $\mathbb{R}^3$, we employ COMSOL Multiphysics to generate high-quality tetrahedral meshes through its advanced meshing algorithms, followed by the application of the finite element method to numerically solve the scattering problem \eqref{eq:system}.

	\subsection{Boundary localizations} 
	In this subsection, we adopt the following physical parameters for the high-contrast medium $(\Omega; \rho_b, \kappa_b)$ and the homogeneous background $(\mathbb{R}^d \setminus \overline{\Omega}; \rho, \kappa)$:
	\begin{equation}
		\label{eq:physical_params}
		\kappa_b = \rho_b = 1, \quad \kappa = \rho = 1000, \quad \omega = 0.01, \quad \delta = \frac{\rho_b}{\rho} = 0.001,
	\end{equation}
	where $\omega$ denotes the incident frequency, and $\delta$ represents the high-contrast density ratio.
	
	We demonstrate through numerical examples that an appropriately selected incident wave $u^i_n$ can induce boundary localization in high-contrast medium $\Omega$ of general geometric shapes, paired with the incident wave $u^i_n$. In $\mathbb{R}^2$, the incident wave is given by:
	\begin{equation}\label{eq:uni 5-2}
		u^i_n(\mathbf{x})=J_{n}(k|\mathbf{x}|) \mathrm{e}^{{i}n\theta}.
	\end{equation}
	In $\mathbb{R}^3$, the incident wave is defined as:
	\begin{equation}\label{eq:in mn}
		u^i_n(\mathbf{x})=j_n(k|\mathbf{x}|) Y_n^m(\hat{\mathbf{x}}),\quad m\in \{0,\pm 1, \ldots, \pm n\}. 
	\end{equation}

	\begin{exm}\label{exm1}
		In the first example, we consider the high-contrast medium $\Omega$ to be a unit disk in $\mathbb{R}^2$. Figures~\ref{fig:circlei} and~\ref{fig:circlee} illustrate the modulus of the internal total field $|u_n(\mathbf{x})|$ and the external scattered field $|u_n^s(\mathbf{x})|$ associated with the incident wave $u_n^i$, for $n=10, 20, 40$, respectively. As depicted in Figure~\ref{fig:circlei}, the modulus of the internal total field $u_n$ is approximately zero far from the boundary $\partial \Omega$, while the red regions indicate that $|u_n|$ is predominantly concentrated near $\partial \Omega$. Similarly, Figure~\ref{fig:circlee} shows the modulus of the external scattered field $|u_n^s|$ for the same values of $n$. From left to right in Figures~\ref{fig:circlei} and~\ref{fig:circlee}, we observe the boundary localization of $|u_n|$ in $\Omega$ and $|u_n^s|$ in $B_2 \setminus \Omega$, where $B_2$ denotes the ball of radius 2 centered at the origin, with respect to the index $n$ defining the incident wave $u_n^i$.
		
		It is evident that as the index $n$ increases, the regions of boundary localization become narrower. Specifically, the red regions corresponding to the maximum values of $|u_n|$ in $\Omega$ and $|u_n^s|$ in $B_2 \setminus \Omega$ become thinner. For given parameters $\gamma_1 = 0.1$ and $\gamma_2 = 1.1$, the boundary localization ratios, as defined in \eqref{eq:thm 41 ratio}, are reported for different values of $n$ in Table~\ref{tab:localization_ratios}.
		
		\begin{table}[h]
			\centering
			\caption{Boundary localization ratios $r_{u_n,\gamma_1}$ and $r_{u_n^s,\gamma_2}$ for different values of $n$, and given $\gamma_1=0.1$ and $\gamma_2=1.1$. }
			\label{tab:localization_ratios}
			\begin{tabular}{@{}ccc@{}}
				\toprule
				$n$ & $r_{u_n,\gamma_1}$ & $r_{u_n^s,\gamma_2}$ \\
				\midrule
				10 & 0.351158945548951 &  0.402881968027400 \\
				20 & 0.121947281445778 & 0.150714507543758 \\
				40 & 0.014794589503143 & 0.022200079874123 \\
				\bottomrule
			\end{tabular}
		\end{table}
		
		The values in Table~\ref{tab:localization_ratios} indicate that as $n$ increases, the boundary localization levels, namely $r_{u_n,\gamma_1}$ and $r_{u_n^s,\gamma_2}$, correspondingly decrease. We emphasize that the index $n$, as defined in \eqref{eq:thm 41 n1n2} in Theorem~\ref{2dthm1}, is a decreasing function of the the boundary localization level $\varepsilon$, as derived in \eqref{2dn}. Consequently, as $n$ increases, the generated wave fields $u_n$ and $u_n^s$ associated with $u_n^i$ exhibit increasingly pronounced boundary localization, consistent with the conclusions of Theorem~\ref{2dthm1}. These numerical results, as presented in Table~\ref{tab:localization_ratios}, provide robust validation of our theoretical findings in Theorem~\ref{2dthm1}.
	\end{exm}

	\begin{exm}
		For high-contrast medium $\Omega$ shaped as a heart, Cassini, and clover, as parameterized in \eqref{eq:test_domains}, we illustrate the boundary localization of the internal total wave field $u_n$ and the external scattered wave field $u_n^s$ in Figures~\ref{fig:corneri}, \ref{fig:cornere}, \ref{fig:cassinii}, \ref{fig:cassinie}, \ref{fig:cloveri}, and \ref{fig:clovere}, respectively, for different choices of the index $n=10, 20, 40$. Similar observations and conclusions can be drawn as those presented in Example~\ref{exm1} for the unit disk, including the narrowing of boundary localization regions as $n$ increases, indicating enhanced localization near $\partial \Omega$. 
		
	\end{exm}

	\begin{exm}
		In this example, we numerically validate the boundary localization properties established by Theorems~\ref{3dthm1} and~\ref{3dthm2}   for a spherical high-contrast medium $  \Omega  $. As illustrated in Figures~\ref{fig:spherei} and~\ref{fig:spheree}, both the internal total field $  u_n  $ and the external scattered field $  u^s_n  $ exhibit pronounced boundary localization near the plane $  z = 0  $ for the incident waves $  u^i_n  $ with indices $  n = 10, 20, 40  $ and corresponding $  m = n  $. 
	\end{exm}
	
	\subsection{Surface resonances}

		In this subsection, we conduct numerical experiments to validate the surface resonance established in Theorems~\ref{thm:gradient_u} and~\ref{thm:2d_gradient}, where the high-contrast medium $\Omega$ is either a unit disk or a corner in $\mathbb{R}^2$.

	The physical parameters for the high-contrast medium $\Omega$ and the homogeneous background $\mathbb{R}^2 \setminus \overline{\Omega}$ are set as follows:
	\begin{equation}
		\label{eq:physical_params_surface}
		\quad \kappa = \rho = 2,\quad \kappa_b = \delta\kappa, \quad \rho_b = \delta \rho
	\end{equation}
	where $\delta$ denotes the high-contrast ratio. Similar to \eqref{eq:physical_params}, the incident frequency is denoted by $\omega$. By varying the high-contrast parameter $\delta$, the density and bulk modulus of the medium $\Omega$ are changed accordingly.

	\begin{exm}\label{exm:4}

	For a high-contrast medium \(\Omega\), modeled as a unit disk, we numerically verify the occurrence of surface resonances using the incident wave \(u_n^i\) given by \eqref{eq:uni 5-2}, as defined in Definition~\ref{df:bresonance}. For parameters \(\delta = 0.1\), \(\omega = 0.01\), \(\gamma_1 = 0.1\), and \(\gamma_2 = 1.1\), the surface resonance ratios, as defined in \eqref{tab:resonance_ratios1}, are reported for different values of \(n\) in Table~\ref{tab:resonance_ratios1}. The quantities \(e_{u_n,\gamma_1}\) and \(e_{u_n^s,\gamma_2}\) characterize the highly oscillatory behavior of the internal total field \(u_n\) and the external scattered field \(u_n^s\) near the boundary of the high-contrast medium \(\Omega\), respectively. Additional numerical results for high-contrast parameter values \(\delta = 0.01\) and \(\delta = 0.001\), with varying \(n\) and fixed parameters \(\omega = 0.01\), \(\gamma_1 = 0.9\), and \(\gamma_2 = 1.1\), as in Table~\ref{tab:resonance_ratios1}, are presented in Tables~\ref{tab:resonance_ratios2} and~\ref{tab:resonance_ratios3}, respectively. Furthermore, Table~\ref{tab:resonance_ratios4} provides results for varying incident frequencies \(\omega\), with all other parameters fixed as in Table~\ref{tab:resonance_ratios2}.

		\begin{table}[h]
			\centering
			\caption{Surface resonance ratios $  e_{u_n, \gamma_1}  $ and $  e_{u_n^s, \gamma_2}  $ for varying values of $  n  $, with fixed parameters $  \delta = 0.1  $, $  \omega = 0.01  $, $  \gamma_1 = 0.9  $, and $  \gamma_2 = 1.1  $.}
			\label{tab:resonance_ratios1}
			\begin{tabular}{@{}ccc@{}}
				\toprule
				\(n\) & \(e_{u_n,\gamma_1}\) & \(e_{u_n^s,\gamma_2}\) \\
				\midrule
				10 & 1.147178455352589 & 10.985366617640477 
				
				 \\
				15 & 1.756433810650922
				 & 17.139246855660900  \\
				20 & 2.361301865270747 & 23.629359015615890
				 \\
				25 & 3.252528585448301 & 33.493521692552847 \\
				30 & 3.655900622080926
				 & 38.707162049879436 \\
				35 & 4.389561180031250 &  47.596569066804250 \\
				40 & 5.201989781299396 & 57.476053847873516 \\
				\bottomrule
			\end{tabular}
		\end{table}

		\begin{table}[h]
			\centering
			\caption{Surface resonance ratios \(e_{u_n,\gamma_1}\) and \(e_{u_n^s,\gamma_2}\) for different values of \(n\), with fixed parameters $  \delta = 0.01  $, $  \omega = 0.01  $, $  \gamma_1 = 0.9  $, and $  \gamma_2 = 1.1  $.}
			\label{tab:resonance_ratios2}
			\begin{tabular}{@{}ccc@{}}
				\toprule
				\(n\) & \(e_{u_n,\gamma_1}\) & \(e_{u_n^s,\gamma_2}\) \\
				\midrule
				10 &  0.124939746325538 & 11.964214562647447 \\
				15 & 0.191294034991458 & 18.666434610099657 \\
				20 & 0.257170501559110 & 25.734846520614006 \\
				25 & 0.325252858544830 & 33.493521692552847 \\
				30 & 0.398165869728157 & 42.156153667713298 \\
				35 &  0.448236731551825 & 51.839172544045205 \\
				40 & 0.566551184932064 & 62.597444018897221 \\
				\bottomrule
			\end{tabular}
		\end{table}

		\begin{table}[h]
			\centering
			\caption{Surface resonance ratios \(e_{u_n,\gamma_1}\) and \(e_{u_n^s,\gamma_2}\) for different values of \(n\), with fixed parameters $  \delta = 0.001  $, $  \omega = 0.01  $, $  \gamma_1 = 0.9  $, and $  \gamma_2 = 1.1  $.}
			\label{tab:resonance_ratios3}
			\begin{tabular}{@{}ccc@{}}
				\toprule
				\(n\) & \(e_{u_n,\gamma_1}\) & \(e_{u_n^s,\gamma_2}\) \\
				\midrule
				10 & 0.012606302732214 & 12.071779810809083 \\
				15 & 0.019301387972241 & 18.834256723156997
				 \\
				20 & 0.025948261422412 & 25.966217786629809
				 \\
				25 & 0.032817337921849 & 33.794267773392093
				 \\
				30 & 0.040174561358143 & 42.535162124719747 \\
				35 &  0.048236731551825 & 52.303700316366907 \\
				40 & 0.057164481250159 & 63.160231768805282 \\
				\bottomrule
			\end{tabular}
		\end{table}
		
\begin{table}[h]
	\centering
	\caption{Surface resonance ratios \(e_{u_n,\gamma_1}\) and \(e_{u_n^s,\gamma_2}\) for different values of \(\omega\), with fixed parameters \(\delta=0.01\) \(n=30\),\(\gamma_1 = 0.9\) and \(\gamma_2 = 1.1\).}
	\label{tab:resonance_ratios4}
	\begin{tabular}{@{}ccc@{}}
		\toprule
		\(\omega\) & \(e_{u_n,\gamma_1}\) & \(e_{u_n^s,\gamma_2}\) \\
		\midrule
		1e-1 & 0.398166679443199
		& 42.156225788946784 \\
		5e-2 & 0.398160912815661 & 42.155625551462421
		\\
		1e-2 &  0.398165869728157 & 42.156153667713298
		\\
		5e-3 & 0.398168747619778 & 42.156458470054325
		\\
		1e-3 & 0.398166183121177
		& 42.156186984549379 \\
		5e-4 &  0.398168825968573 & 42.156466799320640 \\
		\bottomrule
	\end{tabular}
\end{table}

Table \ref{tab:resonance_ratios1} presents numerical results demonstrating that, as the index \( n \) increases, the surface resonance ratios \( e_{u_n, \gamma_1} \) and \( e_{u_n^s, \gamma_2} \) exhibit a consistent monotonic increase. The values of \( e_{u_n, \gamma_1} \) in the second column are consistently \( \mathcal{O}(\delta) \) times larger than those of \( e_{u_n^s, \gamma_2} \) in the third column, consistent with the lower bounds established in equations \eqref{2d:gu_ratio} and \eqref{2d:gu_ratio 1}. Specifically, \( e_{u_n, \gamma_1} \) is of order \( n \delta \), while \( e_{u_n^s, \gamma_2} \) is of order \( n \). Theorem \ref{thm:2d_gradient} establishes that generating surface resonance for \( u_n \) requires \( n > \frac{1}{\delta^2} \). For \( \delta = 0.1 \), this implies \( n > 100 \) for the incident wave \( u_n^i \). As $n$  increases, solving the forward problem necessitates finer grid refinement, leading to a substantial escalation in computational resource requirements.  Due to computational resource constraints, we restrict \( n \leq 40 \) in the numerical examples presented below. Consequently, the values of \( e_{u_n^s, \gamma_2} \) in Table \ref{tab:resonance_ratios1} remain moderate, as \( n \leq 40 \).

		Similar conclusions apply to Tables \ref{tab:resonance_ratios2} and \ref{tab:resonance_ratios3}, which reveal additional observations. As the high-contrast parameter \( \delta \) decreases from 0.1 to 0.001 across Tables \ref{tab:resonance_ratios1}--\ref{tab:resonance_ratios3}, the numerical values of \( e_{u_n, \gamma_1} \) exhibit a significant decaying trend, while those of \( e_{u_n^s, \gamma_2} \) remain relatively stable with respect to \( \delta \). Specifically, the values of \( e_{u_n, \gamma_1} \) in the second column of Tables \ref{tab:resonance_ratios2} and \ref{tab:resonance_ratios3} are of order \( n \delta \), consistent with \eqref{2d:gu_ratio}, and the values of \( e_{u_n^s, \gamma_2} \) in the third column are of order \( n \), consistent with \eqref{2d:gu_ratio 1}. Additionally, Table \ref{tab:resonance_ratios4} demonstrates that both \( e_{u_n, \gamma_1} \) and \( e_{u_n^s, \gamma_2} \) are independent of the incident frequency \( \omega \). By Theorem \ref{thm:2d_gradient}, the lower bound for \( e_{u_n, \gamma_1} \) depends on \( n \) and \( \delta \), whereas the lower bound for \( e_{u_n^s, \gamma_2} \) depends solely on \( n \).

		\end{exm}

		\begin{exm}\label{ex:high ossi}
				
				In this example, we adopt the physical parameter settings from Table \ref{tab:resonance_ratios1} and fix the index \( n = 35 \) for the incident wave \( u_n^i \), defined in \eqref{eq:uni 5-2}. The high-contrast medium \( \Omega \) is a unit disk in \( \mathbb{R}^2 \). We aim to illustrate the high oscillatory behavior of \( \text{Re}(\nabla u_n) \) and \( \text{Re}(\nabla u_n^s) \), the real parts of the gradients of the total field \( u_n \) and scattered field \( u_n^s \), respectively, when surface resonance occurs, as shown in Table \ref{tab:resonance_ratios1}. 

Figures \ref{fig:gu_circle} and \ref{fig:gus_circle} display global and local vector field plots of \( \text{Re}(\nabla u_n) \) and \( \text{Re}(\nabla u_n^s) \), respectively, using arrows to represent vector magnitudes. Longer arrows indicate larger magnitudes, with red arrows denoting large magnitudes and blue arrows denoting small magnitudes. Figure \ref{fig:gu_circle_B} provides a local illustration of \( \text{Re}(\nabla u_n) \) in the region \( [0.5, 1] \times [0.5, 1] \), where \( \text{Re}(\nabla u_n) \) attains a maximum magnitude of 1.1832 in \( \mathcal{N}_{1, 1-\gamma_1} \). Similarly, Figure \ref{fig:gus_circle} shows that \( \text{Re}(\nabla u_n^s) \) reaches a maximum magnitude of 118.3301 in \( \mathcal{N}_{2, \gamma_2-1} \). Notably, the oscillations of \( \text{Re}(\nabla u_n^s) \) are significantly stronger than those of \( \text{Re}(\nabla u_n) \), as evidenced by comparing Figures \ref{fig:gu_circle} and \ref{fig:gus_circle}. This aligns with the numerical results in Table \ref{tab:resonance_ratios1}, where the lower bound of the surface resonance ratio \( e_{u_n^s, \gamma_2} \) is independent of the high-contrast parameter \( \delta = 0.01 \), while the lower bound of \( e_{u_n, \gamma_1} \) depends on \( \delta \), as established in Theorem \ref{thm:2d_gradient}.

	\end{exm}

	\begin{exm}
	
	In this example, we illustrate the oscillatory behavior of \( \text{Re}(\nabla u_n) \) and \( \text{Re}(\nabla u_n^s) \), the real parts of the gradients of the total field \( u_n \) and scattered field \( u_n^s \), respectively, for a high-contrast medium \( \Omega \) with a corner, as defined in \eqref{eq:test_domains}. The parameters are set as in Example \ref{ex:high ossi}. Analogous observations to those presented in Example \ref{ex:high ossi} hold. Notably, the maximum magnitude of \( \text{Re}(\nabla u_n^s) \) in \( \mathcal{N}_{2, \gamma_2-1} \) is 1078.962, approximately ten times larger than the corresponding maximum for a unit disk \( \Omega \). This suggests that geometric singularities, such as a corner, induce stronger oscillations in \( \text{Re}(\nabla u_n^s) \), which need to be further investigated in future studies.

	\begin{table}[h]
		\centering
		\caption{The quantity \(f_{u_n^s,\gamma_2}\) for different values of \(n\), with fixed parameters \(\delta=0.001\),  \(\omega=0.01\) and \(\gamma_2 = 1.1\) when $\Omega$ is a unit disk.}
		\label{tab:cloaking}
		\begin{tabular}{@{}ccc@{}}
			\toprule
			\(n\) & \(f_{u_n^s,\gamma_2}\) \\
			\midrule
			10 & 2716.801056817247
			 \\
			15 & 559.9223667039771 
			\\
			20 &  104.0501802886530 
			\\
			25 & 18.22848378817822 
			\\
			30 & 3.069068703133348
			 \\
			35 &  0.501958876145285
			 \\
			40 &  0.080304000018998
			\\
			\bottomrule
		\end{tabular}
	\end{table}
	\end{exm}

	\subsection{Applications to invisibility cloaking}\label{sec5}

In this subsection, we demonstrate through numerical examples that quasi-Minnaert resonance can induce an invisibility effect for the high-contrast medium $\Omega$. We consider the interaction of an incident wave $u_n^i$, as defined in \eqref{eq:uni 5-2}, with the scattering system described by \eqref{eq:system}. The physical parameters for the high-contrast medium $\Omega$ and the homogeneous background $\mathbb{R}^2 \setminus \overline{\Omega}$ are specified in \eqref{eq:physical_params_surface}.

To quantify the invisibility cloaking effect for the high-contrast medium $\Omega$ under the influence of the incident wave $u_n^i$, we evaluate the following quantity:
\begin{equation}
\label{eq:invisibility_metric}
f_{u_n^s,\gamma_2} := \frac{\| u_n^s(\mathbf{x}) \|_{L^2(B_R \setminus (\mathcal{N}_{2,\gamma_2-1}(\partial \Omega) \cup \Omega))}}{\omega^2},
\end{equation}
where $u_n^s(\mathbf{x})$ is the scattered wave field associated with $u_n^i$ and $\Omega$, $R \in \mathbb{R}_+$ is the radius of a ball or disk $B_R$ centered at the origin, and $\gamma_2 \in \mathbb{R}_+$ with $\gamma_2 > 1$ is a predefined parameter such that $\mathcal{N}_{2,\gamma_2-1}(\partial \Omega) \cup \Omega \Subset B_R$. Here $\mathcal{N}_{2,\gamma_2-1}(\partial \Omega)$ denotes a $(\gamma_2-1)$-neighborhood of the boundary $\partial \Omega$ in $\mathbb{R}^n$.

A smaller value of $f_{u_n^s,\gamma_2}$ indicates a stronger invisibility cloaking effect. Specifically, $f_{u_n^s,\gamma_2}$ represents the ratio of the $L^2$-norm of the scattered field $u_n^s$ in the region $B_R \setminus (\mathcal{N}_{2,\gamma_2-1}(\partial \Omega) \cup \Omega)$ to $\omega^2$. Given that $\omega \ll 1$, a small $f_{u_n^s,\gamma_2}$ implies that the $L^2$-norm $\| u_n^s(\mathbf{x}) \|_{L^2(B_R \setminus (\mathcal{N}_{2,\gamma_2-1}(\partial \Omega) \cup \Omega))}$ is very small, confirming the occurrence of invisibility cloaking.

\begin{exm}\label{exm:7}

In this example, we consider the high-contrast medium $\Omega$ to be a unit disk in $\mathbb{R}^2$ and numerically report the values of $f_{u_n^s,\gamma_2}$, as defined in \eqref{eq:invisibility_metric}, with parameters $R=2$ and $\gamma_2=1.1$. As demonstrated in Example~\ref{exm1}, we have established boundary localization for the internal total wave field $u_n|_\Omega$ and the external scattered wave field $u_n^s$ associated with the incident wave $u_n^i$ and $\Omega$, where the physical parameters of the homogeneous background $\mathbb{R}^2 \setminus \overline{\Omega}$ are set as $\kappa = \rho = 1$ and the high-contrast ratio $\delta = 0.001$. The surface resonances of $u_n|_\Omega$ and $u_n^s$ are numerically demonstrated in Table~\ref{tab:resonance_ratios3} when the parameters $\kappa = \rho = 2$ for the homogeneous background $\mathbb{R}^2 \setminus \overline{\Omega}$, as specified in \eqref{eq:physical_params_surface}, with $\delta = 0.001$. 

We emphasize that surface resonances also occur for the case $\kappa = \rho = 1$ and $\delta = 0.001$, as confirmed by numerical experiments. For brevity, we choose not to report these additional results. Therefore, when the incident wave $u_n^i$ is selected as given by \eqref{eq:uni 5-2} and the physical parameters are set according to \eqref{eq:physical_params_surface}, we conclude that quasi-Minnaert resonances occur for the unit disk high-contrast medium $\Omega$. 

 We present numerical values of the invisibility cloaking quantity $f_{u_n^s,\gamma_2}$, defined in \eqref{eq:invisibility_metric}, with $R=2$ and $\gamma_2=1.1$. Table~\ref{tab:cloaking} reports these values for $n=10, 20, 40$, where $n$ is the index defining the incident wave $u_n^i$ as given by \eqref{eq:uni 5-2}. As $n$ increases from 10 to 40, $f_{u_n^s,\gamma_2}$ decreases significantly from $\mathcal{O}(10^3)$ to $\mathcal{O}(10^{-2})$. For a fixed high-contrast parameter $\delta$, a larger $n$ reduces the boundary localization ratio $r_{u_n^s,\gamma_2}$, defined by \eqref{eq:thm 41 ratio}, as reported in Table~\ref{tab:localization_ratios}, while increasing the surface resonance ratio $e_{u_n^s,\gamma_2}$, defined by \eqref{2d:gu_ratio 1}, as shown in Table~\ref{tab:resonance_ratios3}. This indicates a stronger quasi-Minnaert resonance, which, as illustrated in Table~\ref{tab:cloaking}, enhances the invisibility cloaking effect for the high-contrast medium $\Omega$.

As shown in Figure~\ref{fig:circlee}, as $n$ increases, the boundary localization region of the scattered wave field $u_n^s$ narrows and approaches the boundary $\partial \Omega$. The blue regions, where $|u_n^s|$ is nearly vanishing, expand with increasing $n$. For $n=10$, there is an overlap between the boundary localization region of $u_n^s$ and the region $B_2 \setminus (\mathcal{N}_{2,\gamma_2-1}(\partial \Omega) \cup \Omega)$ with $\gamma_2=1.1$, resulting in large values of $|u_n^s|$ in this region. Given $\omega \ll 1$, the definition of $f_{u_n^s,\gamma_2}$ in \eqref{eq:invisibility_metric} yields a value of approximately 2800 for $n=10$, as reported in the first row of Table~\ref{tab:cloaking}. For $n=40$, Figure~\ref{fig:circlee} shows no overlap between the boundary localization region and $B_2 \setminus (\mathcal{N}_{2,\gamma_2-1}(\partial \Omega) \cup \Omega)$, indicating that $|u_n^s|$ is nearly vanishing in this region. Consequently, the value of $f_{u_n^s,\gamma_2}$ in the last row of Table~\ref{tab:cloaking} is of order $\mathcal{O}(10^{-2})$. This demonstrates that the high-contrast medium $\Omega$ is nearly invisible to the incident wave $u_n^i$ for large $n$.

While the cloaking effect is illustrated through numerical results in Table~\ref{tab:cloaking}, rigorous validation requires extensive and intricate analyses, which we defer to future research. This discovery opens a novel direction for the study and application of acoustic cloaking technologies.

\end{exm}

\begin{exm}

In this example, we consider the high-contrast medium $\Omega$ to be a general shape, such as a heart, Cassini oval, or clover, as parameterized in \eqref{eq:test_domains}. To compute the numerical values of the invisibility cloaking quantity $f_{u_n^s,\gamma_2}$, defined in \eqref{eq:invisibility_metric}, a refined mesh is required due to singularities, such as corners or high-curvature points, on the boundary $\partial \Omega$. However, due to computational resource limitations, we do not calculate the numerical values of $f_{u_n^s,\gamma_2}$ for these shapes in this example.

Based on observations of Figures~\ref{fig:corneri}, \ref{fig:cornere}, \ref{fig:cassinii}, \ref{fig:cassinie}, \ref{fig:cloveri}, and \ref{fig:clovere}, it is evident that the scattered wave field $u_n^s$, generated by the incident wave $u_n^i$ as defined in \eqref{eq:uni 5-2}, is nearly vanishing away from the high-contrast medium $\Omega$. This behavior suggests that an invisibility cloaking effect occurs for $\Omega$ when quasi-Minnaert resonances are induced, consistent with the findings for the unit disk in Example~\ref{exm1}. Rigorous analysis of this type of invisibility cloaking for general-shaped high-contrast media requires further investigation in future work.

\end{exm}

\bigskip
	\noindent\textbf{Acknowledgment.}
	The work of H. Diao is supported by National Natural Science Foundation of China  (No. 12371422) and the Fundamental Research Funds for the Central Universities, JLU. The work of H. Liu is supported by the Hong Kong RGC General Research Funds (projects 11311122, 11300821, and 11303125), the NSFC/RGC Joint Research Fund (project  N\_CityU101/21), the France-Hong Kong ANR/RGC Joint Research Grant, A-CityU203/19.

	\bigskip
	\noindent\textbf{CRediT authorship contribution statement.}  The authors declare that they have contributed equally to this work.

		\bigskip
	\noindent\textbf{Declaration of competing interest.}
	The authors declare that they have no known competing financial interests or personal relationships that could have appeared to
	influence the work reported in this paper.

	\bigskip
	\noindent\textbf{Data availability.}

	The numerical experimental data are not publicly available but are available from the authors under a reasonable request. 
	

	\begin{figure}[htbp]
		\centering
		\begin{minipage}[t]{0.3\textwidth}
			\centering
			\includegraphics[width=\linewidth]{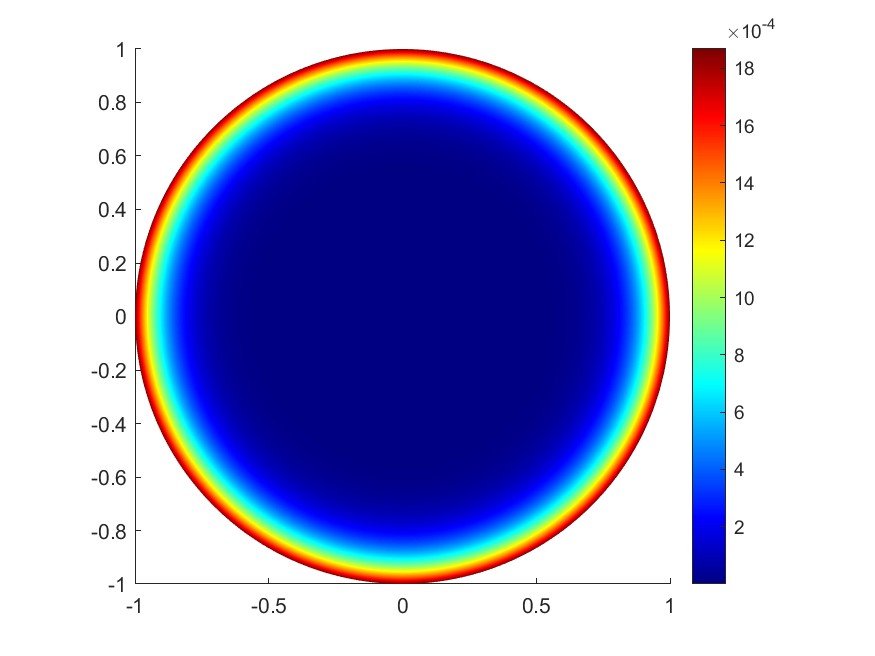}
		\end{minipage}%
		\begin{minipage}[t]{0.3\textwidth}
			\centering
			\includegraphics[width=\linewidth]{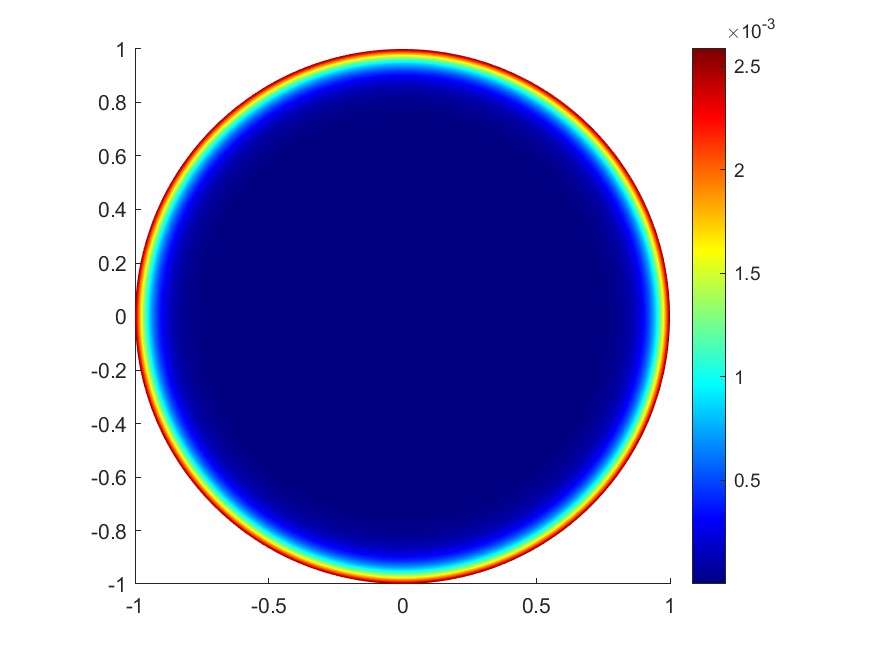}
		\end{minipage}%
		\begin{minipage}[t]{0.3\textwidth}
			\centering
			\includegraphics[width=\linewidth]{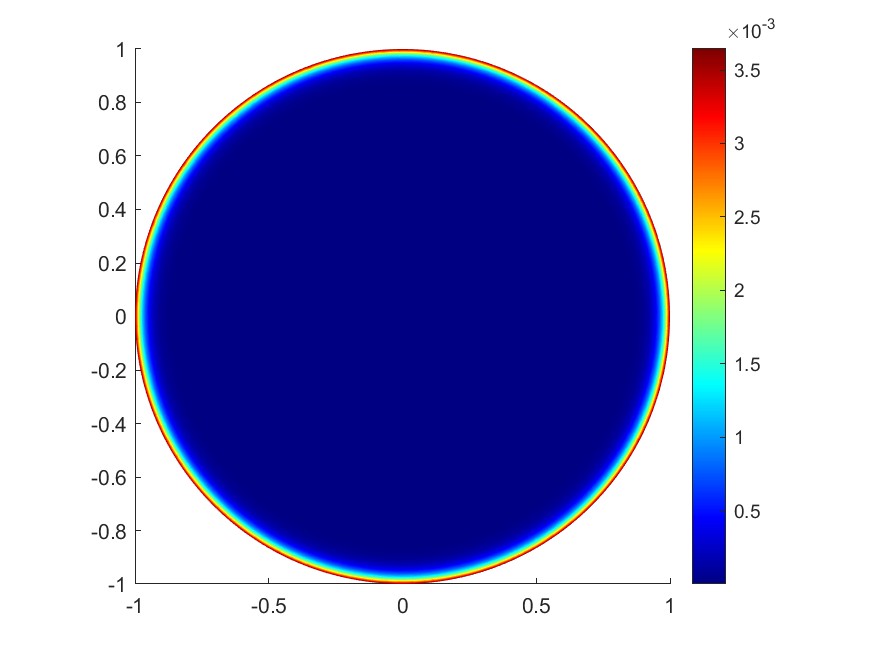}
		\end{minipage}%
		\caption{Boundary localization of the internal total field $|u_n(\mathbf{x})|$ for a unit disk high-contrast medium $\Omega$ in $\mathbb{R}^2$, with respect to different choices of the index $n=10, 20, 40$ defining the incident wave $u_n^i$. }
		\label{fig:circlei}
	\end{figure}

	\begin{figure}[htbp]
		\centering
		\begin{minipage}[t]{0.3\textwidth}
			\centering
			\includegraphics[width=\linewidth]{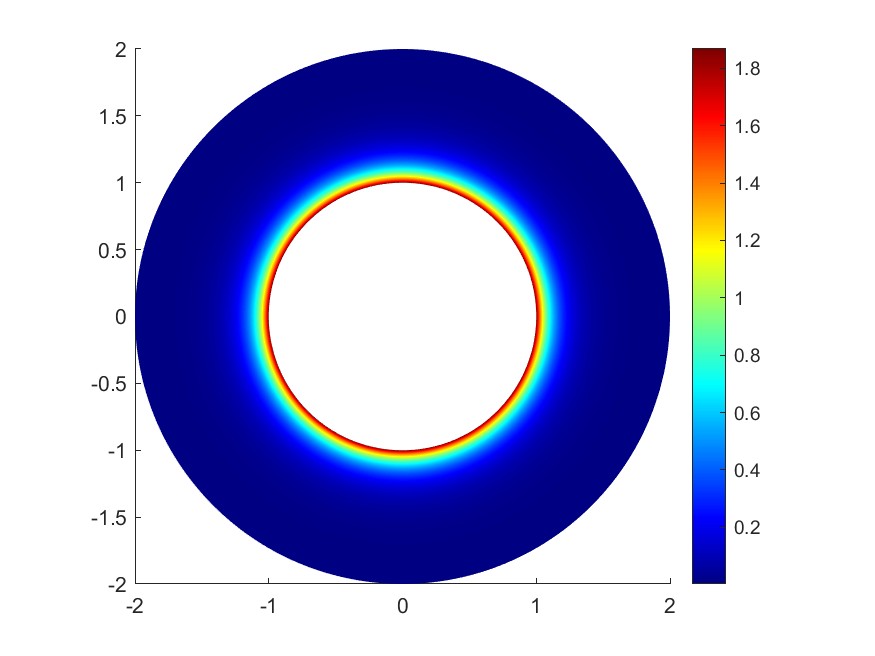}
		\end{minipage}%
		\begin{minipage}[t]{0.3\textwidth}
			\centering
			\includegraphics[width=\linewidth]{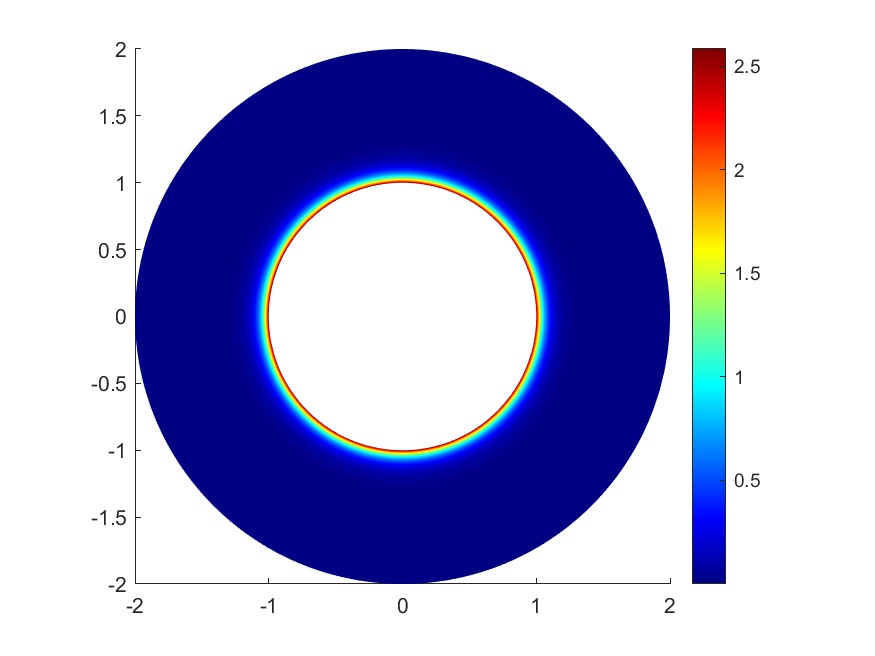}
		\end{minipage}%
		\begin{minipage}[t]{0.3\textwidth}
			\centering
			\includegraphics[width=\linewidth]{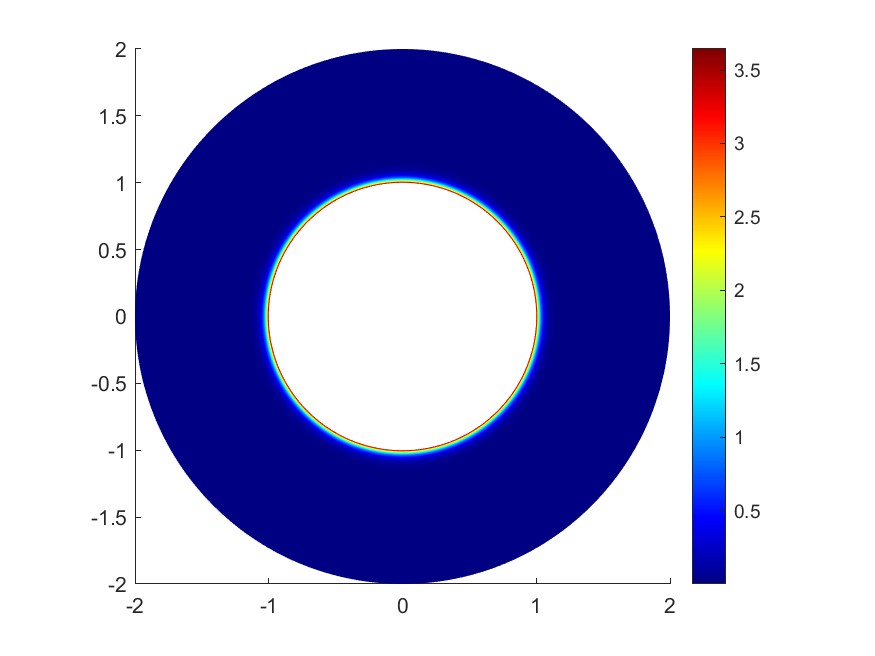}
		\end{minipage}%
		\caption{Boundary localization of the external scattered field $|u_n^s(\mathbf{x})|$ for a unit disk high-contrast medium $\Omega$ in $\mathbb{R}^2$, with respect to different choices of the index $n=10, 20, 40$ defining the incident wave $u_n^i$.}
		\label{fig:circlee}
	\end{figure}

	\begin{figure}[htbp]
		\centering
		\begin{minipage}[t]{0.3\textwidth}
			\centering
			\includegraphics[width=\linewidth]{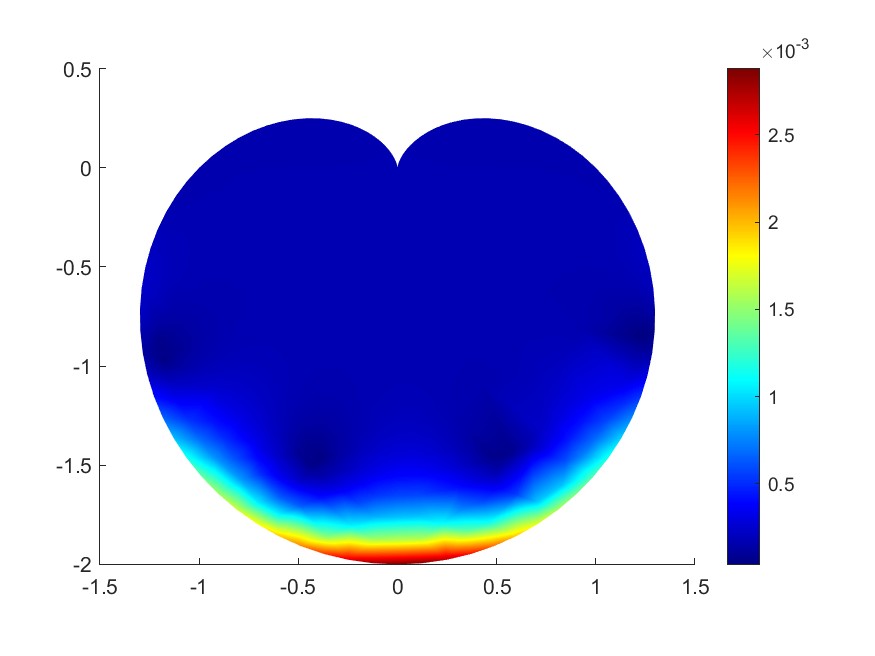}
		\end{minipage}%
		\begin{minipage}[t]{0.3\textwidth}
			\centering
			\includegraphics[width=\linewidth]{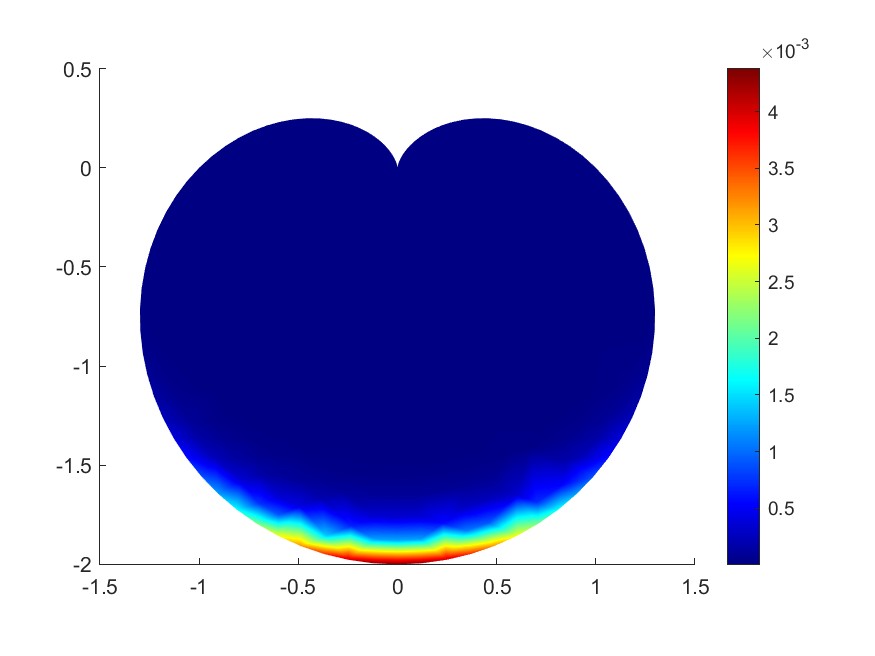}
		\end{minipage}%
		\begin{minipage}[t]{0.3\textwidth}
			\centering
			\includegraphics[width=\linewidth]{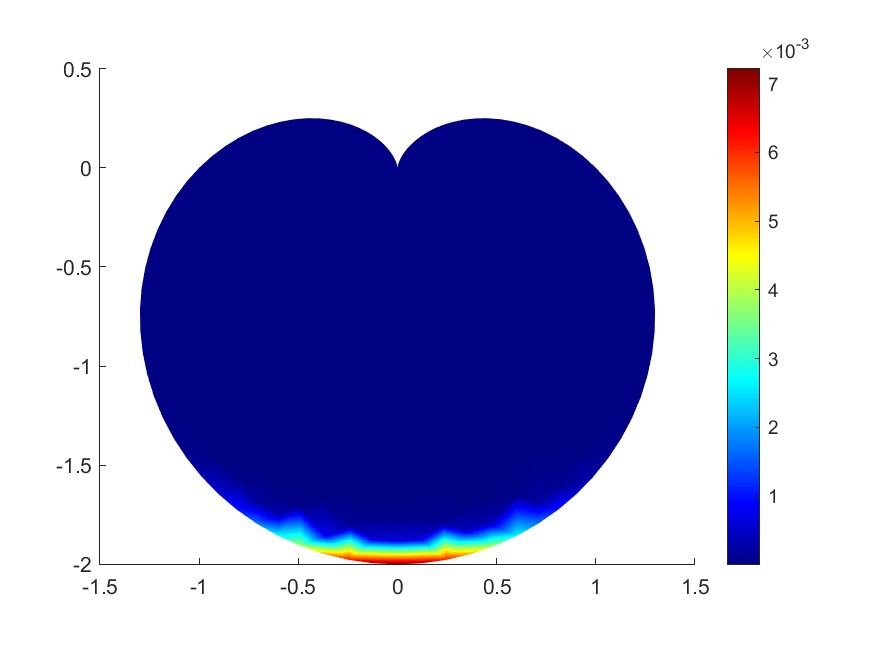}
		\end{minipage}%
		\caption{Boundary localization of the internal total field $|u_n(\mathbf{x})|$ for a heart-shaped high-contrast medium $\Omega$ in $\mathbb{R}^2$, with respect to different choices of the index $n=10, 20, 40$ defining the incident wave $u_n^i$.}
		\label{fig:corneri}
	\end{figure}
	
	\begin{figure}[htbp]
		\centering
		\begin{minipage}[t]{0.3\textwidth}
			\centering
			\includegraphics[width=\linewidth]{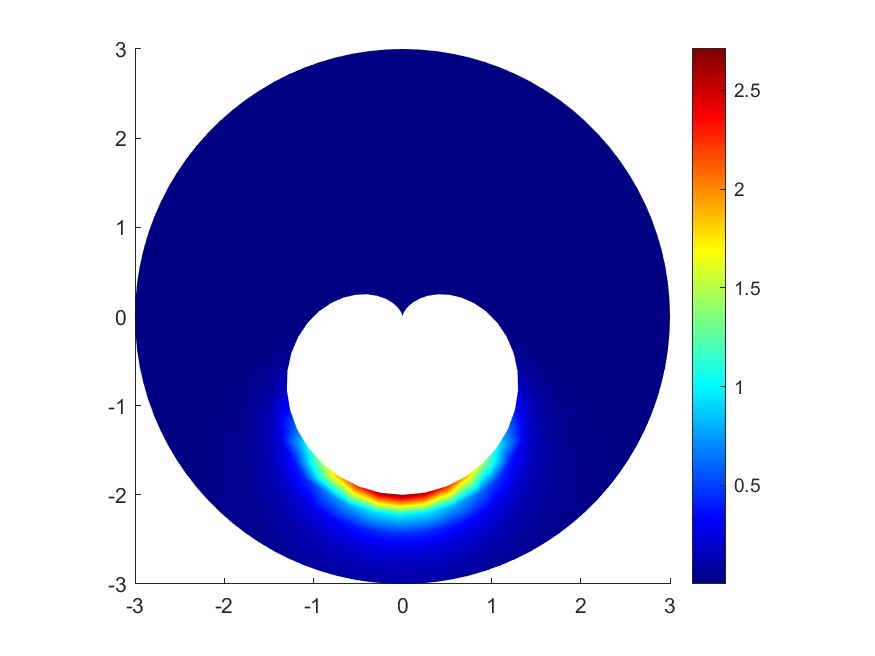}
		\end{minipage}%
		\begin{minipage}[t]{0.3\textwidth}
			\centering
			\includegraphics[width=\linewidth]{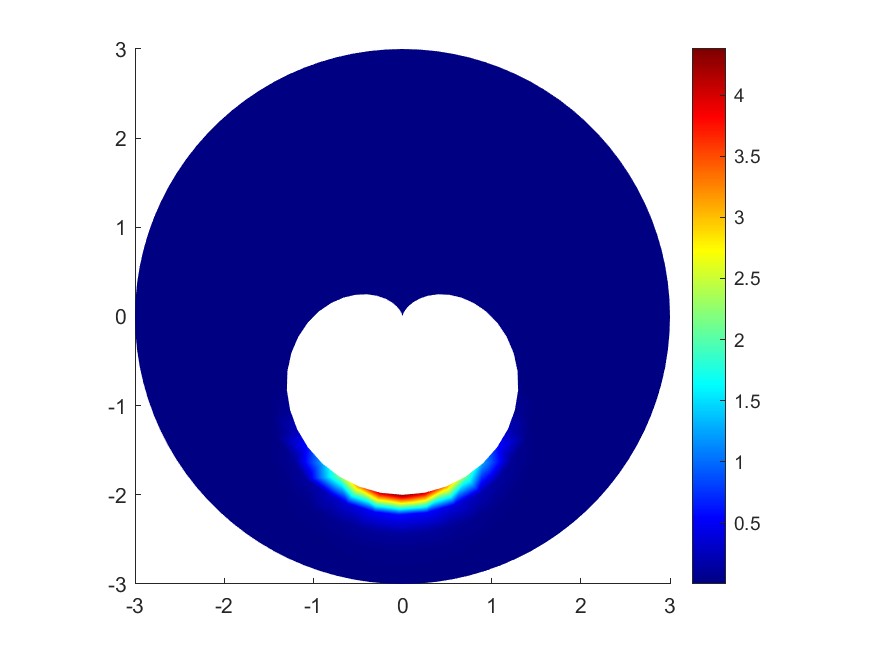}
		\end{minipage}%
		\begin{minipage}[t]{0.3\textwidth}
			\centering
			\includegraphics[width=\linewidth]{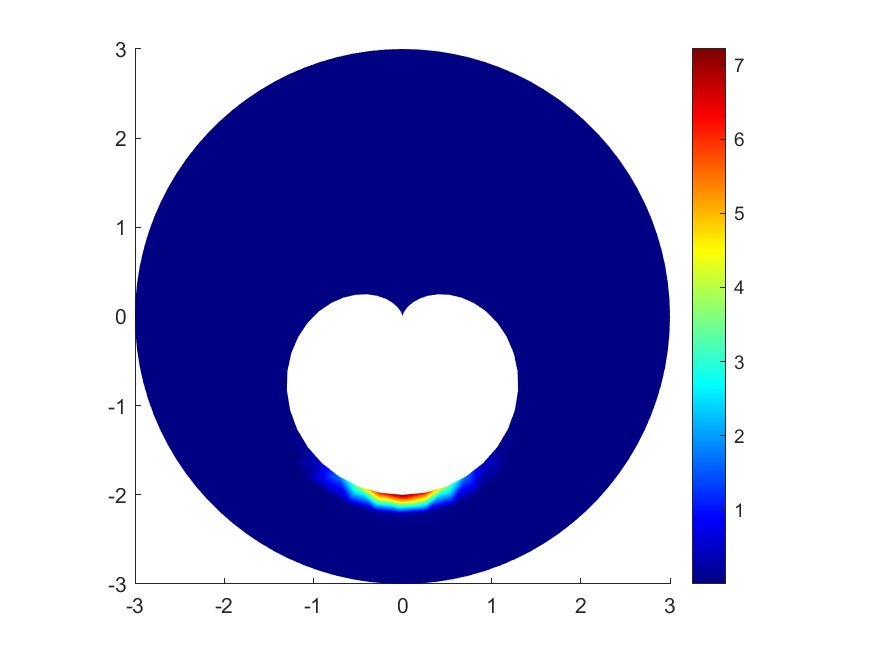}
		\end{minipage}%
		\caption{Boundary localization of the external scattered field $|u_n^s(\mathbf{x})|$ for a heart-shaped high-contrast medium $\Omega$ in $\mathbb{R}^2$, with respect to different choices of the index $n=10, 20, 40$ defining the incident wave $u_n^i$.}
		\label{fig:cornere}
	\end{figure}

	\begin{figure}[htbp]
		\centering
		\begin{minipage}[t]{0.3\textwidth}
			\centering
			\includegraphics[width=\linewidth]{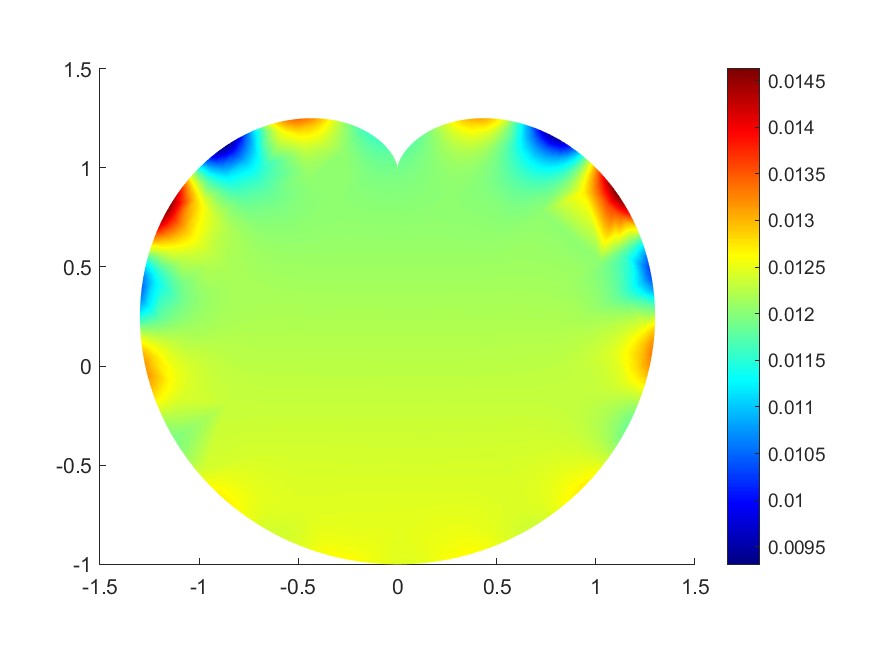}
		\end{minipage}%
		\begin{minipage}[t]{0.3\textwidth}
			\centering
			\includegraphics[width=\linewidth]{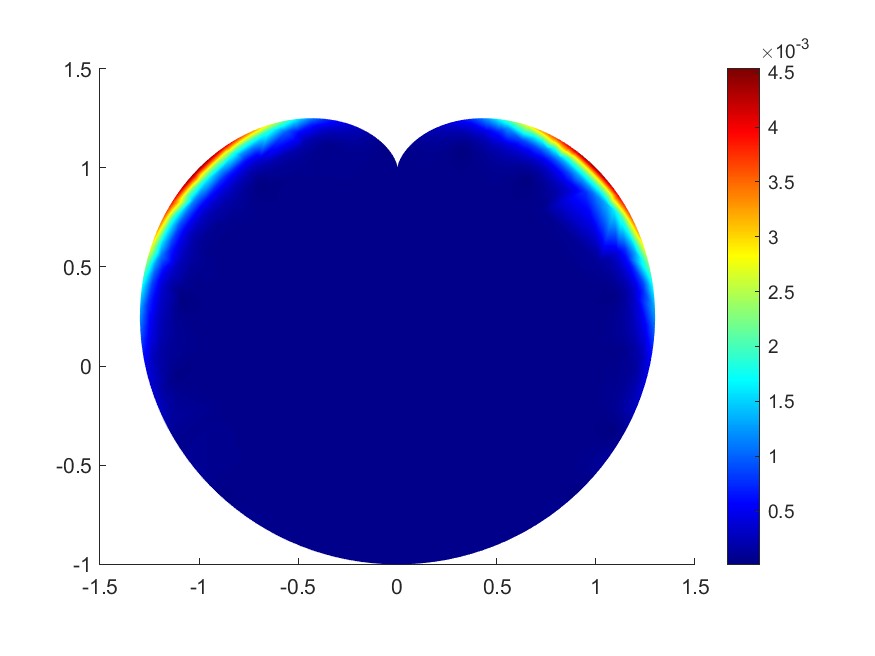}
		\end{minipage}%
		\begin{minipage}[t]{0.3\textwidth}
			\centering
			\includegraphics[width=\linewidth]{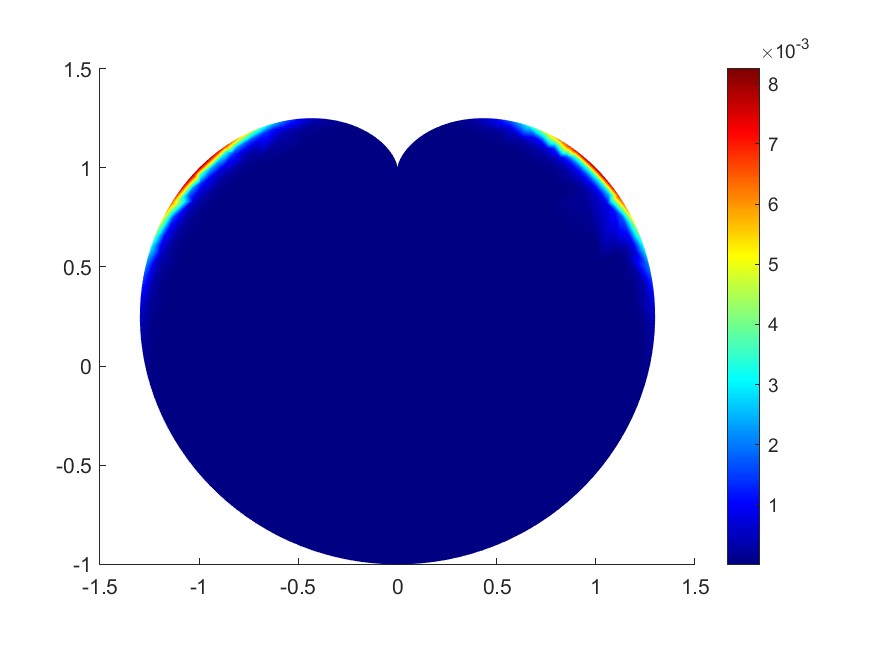}
		\end{minipage}%
		\caption{Boundary localization of the internal total field $|u_n(\mathbf{x})|$ for a heart-shaped high-contrast medium $\Omega$ in $\mathbb{R}^2$, with respect to different choices of the index $n=10, 20, 40$ defining the incident wave $u_n^i$.}
		\label{fig:cornerin}
	\end{figure}

	\begin{figure}[htbp]
		\centering
		\begin{minipage}[t]{0.3\textwidth}
			\centering
			\includegraphics[width=\linewidth]{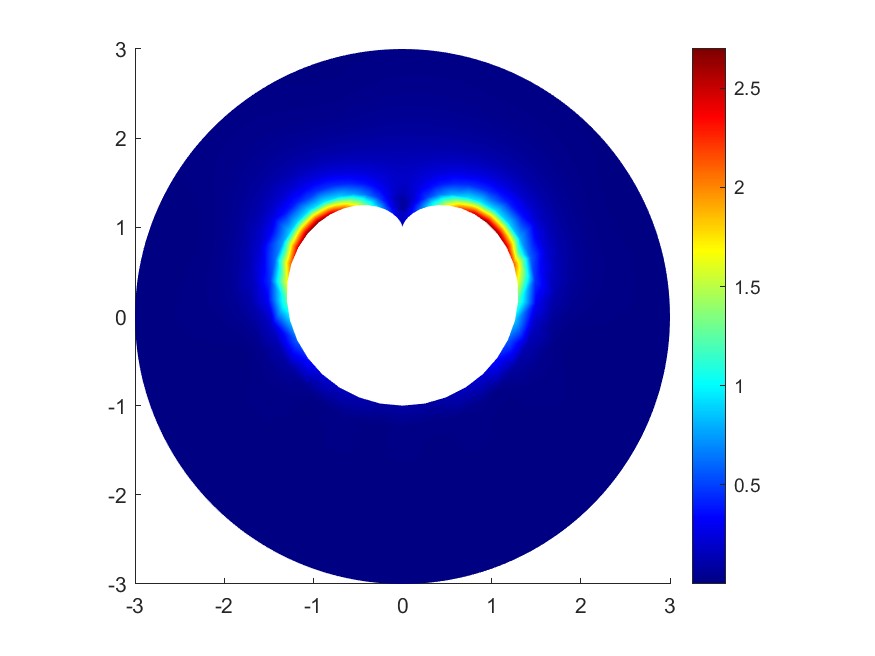}
		\end{minipage}%
		\begin{minipage}[t]{0.3\textwidth}
			\centering
			\includegraphics[width=\linewidth]{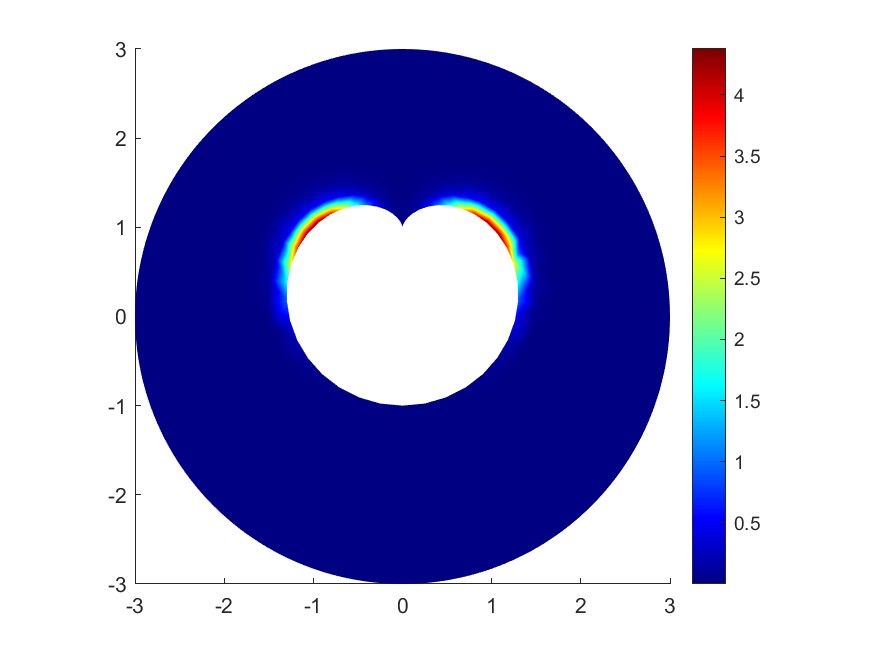}
		\end{minipage}%
		\begin{minipage}[t]{0.3\textwidth}
			\centering
			\includegraphics[width=\linewidth]{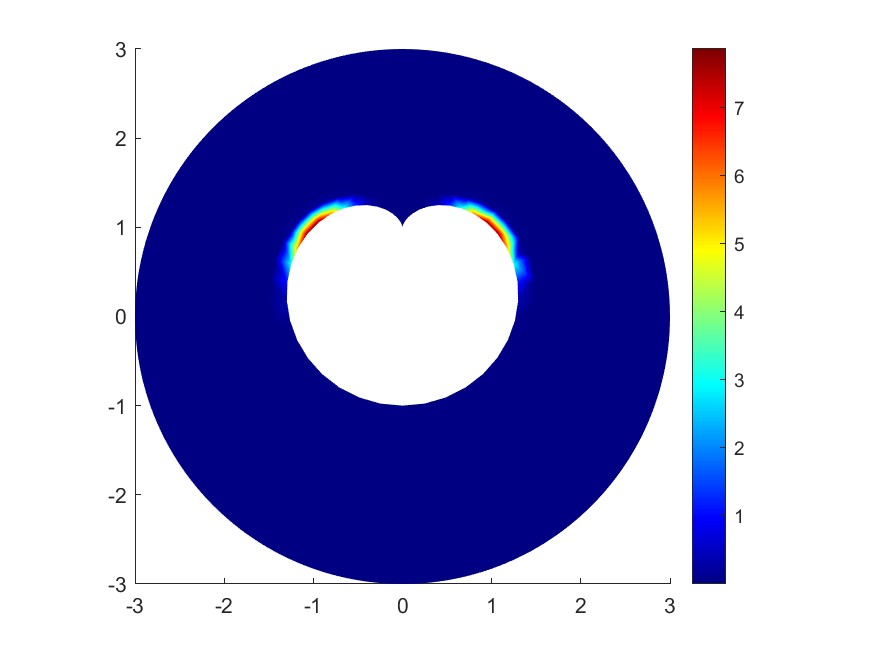}
		\end{minipage}%
		\caption{Boundary localization of the external scattered field $|u_n^s(\mathbf{x})|$ for a heart-shaped high-contrast medium $\Omega$ in $\mathbb{R}^2$, with respect to different choices of the index $n=10, 20, 40$ defining the incident wave $u_n^i$.}
		\label{fig:corneren}
	\end{figure}

	\begin{figure}[htbp]
		\centering
		\begin{minipage}[t]{0.3\textwidth}
			\centering
			\includegraphics[width=\linewidth]{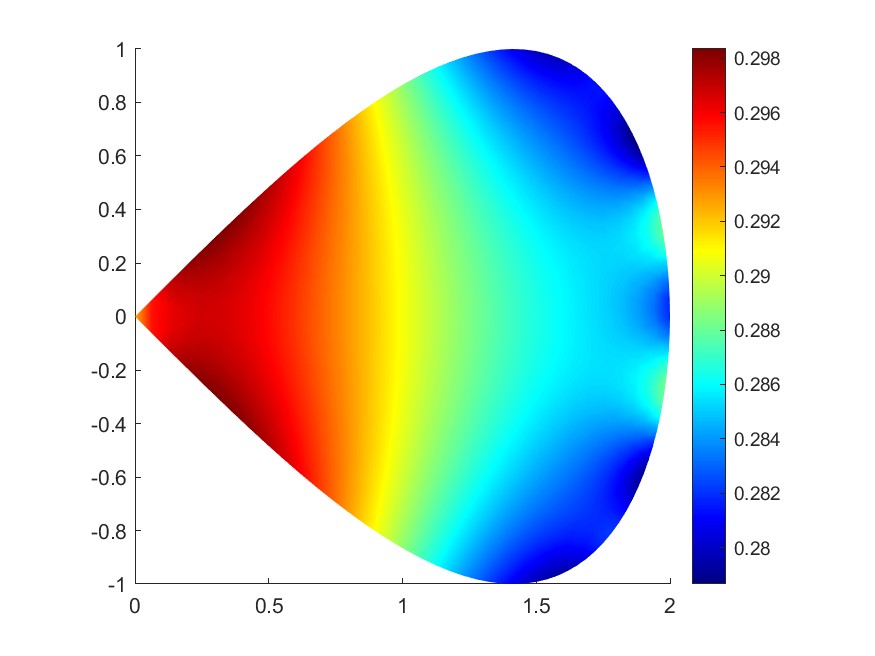}
		\end{minipage}%
		\begin{minipage}[t]{0.3\textwidth}
			\centering
			\includegraphics[width=\linewidth]{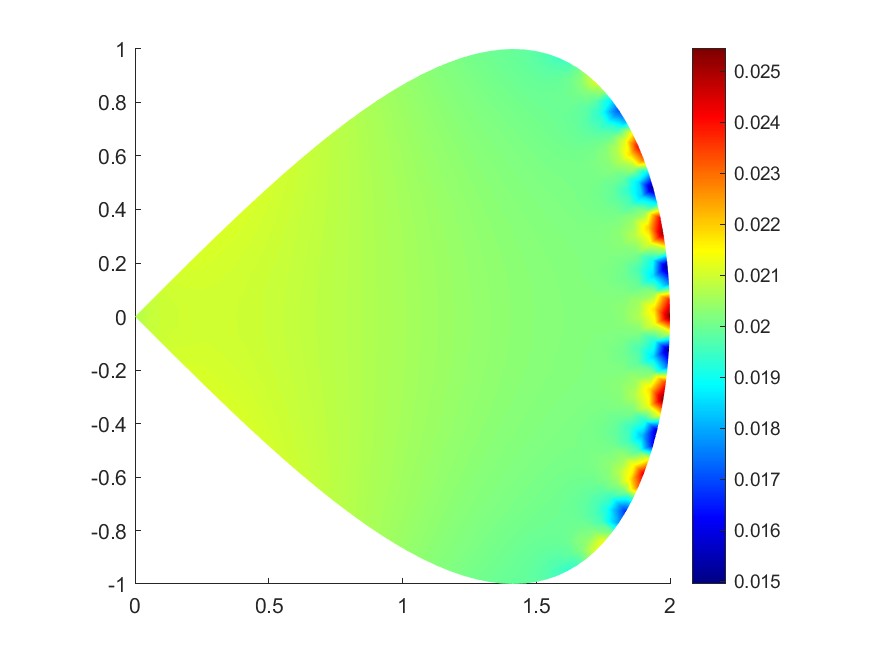}
		\end{minipage}%
		\begin{minipage}[t]{0.3\textwidth}
			\centering
			\includegraphics[width=\linewidth]{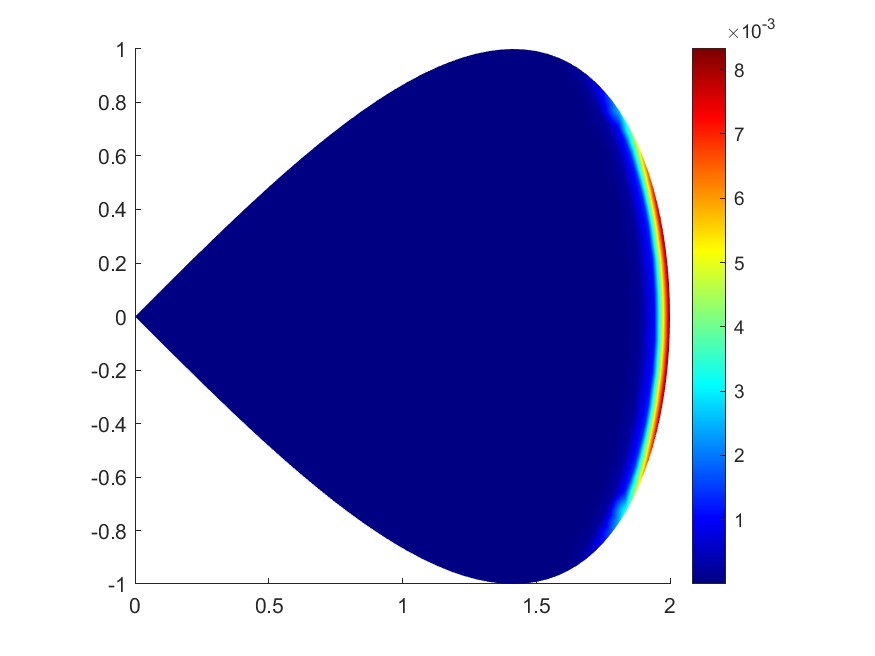}
		\end{minipage}%
		\caption{Boundary localization of the internal total field $|u_n(\mathbf{x})|$  for a corner-shaped high-contrast medium $\Omega$ in $\mathbb{R}^2$, with respect to different choices of the index $n=10, 20, 40$ and $m=0$ defining the incident wave $u_n^i$.}
		\label{fig:cassinii}
	\end{figure}
	
	\begin{figure}[htbp]
		\centering
		\begin{minipage}[t]{0.3\textwidth}
			\centering
			\includegraphics[width=\linewidth]{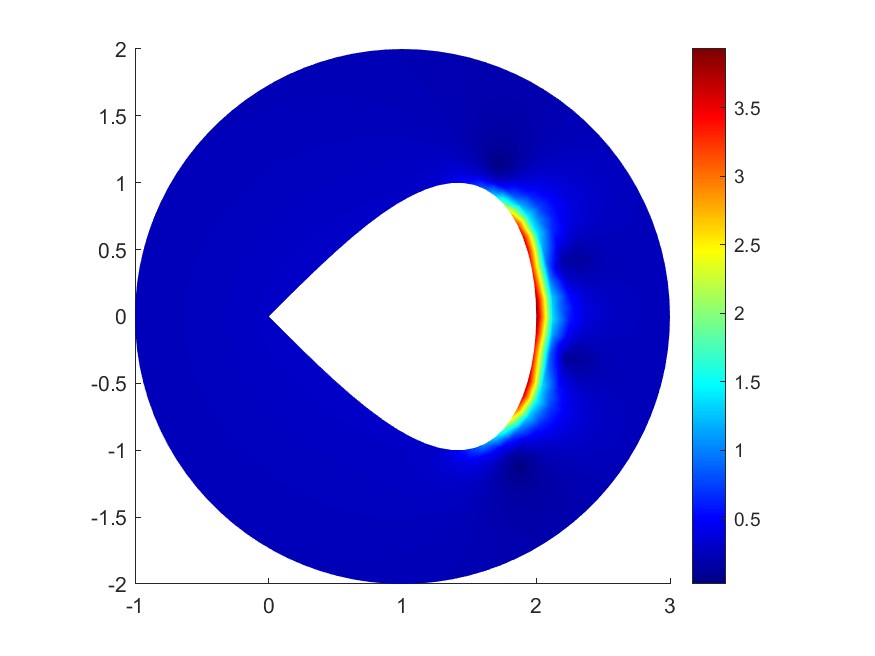}
		\end{minipage}%
		\begin{minipage}[t]{0.3\textwidth}
			\centering
			\includegraphics[width=\linewidth]{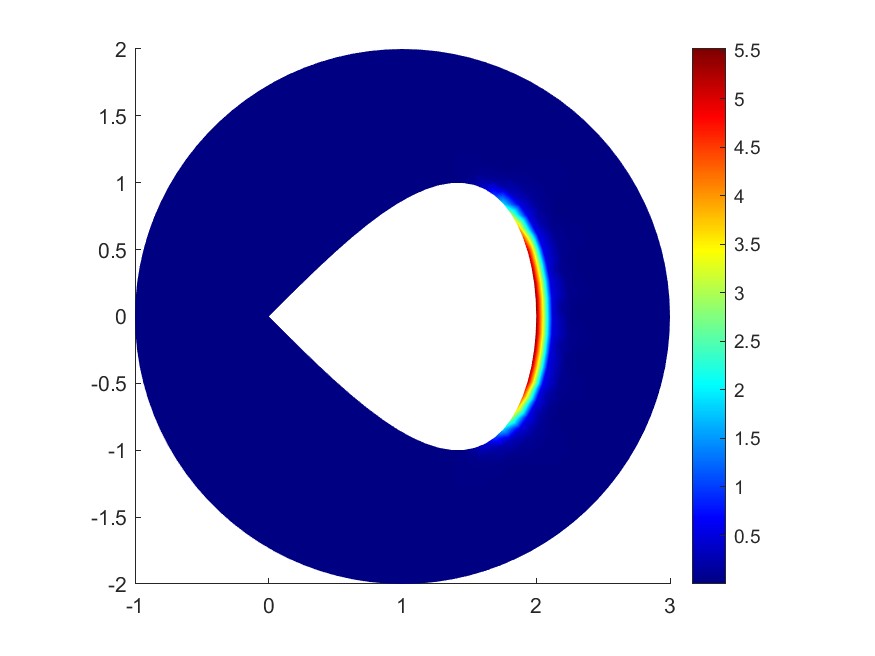}
		\end{minipage}%
		\begin{minipage}[t]{0.3\textwidth}
			\centering
			\includegraphics[width=\linewidth]{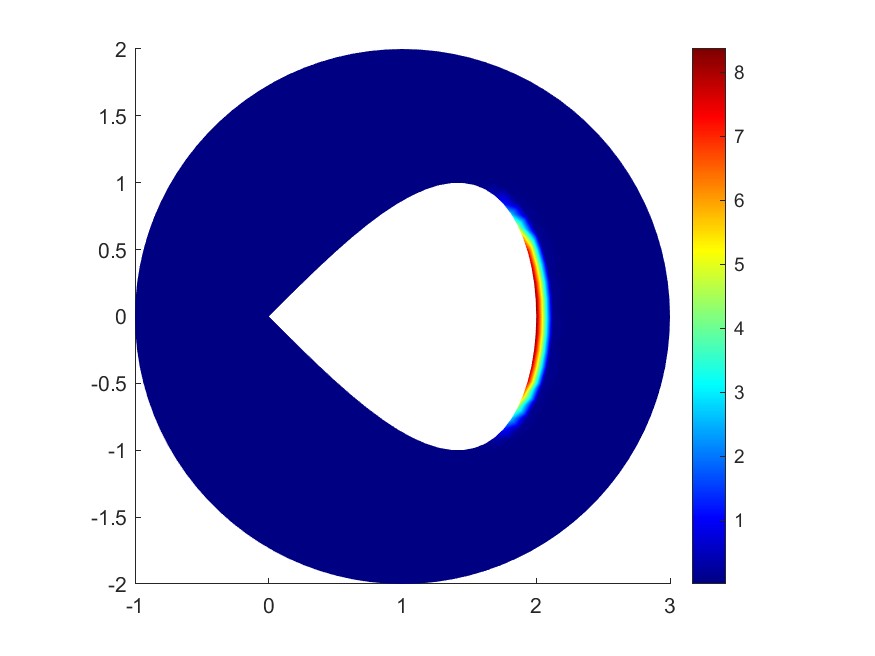}
		\end{minipage}%
		\caption{Boundary localization of the external scattered field $|u_n^s(\mathbf{x})|$ for a corner-shaped high-contrast medium $\Omega$ in $\mathbb{R}^2$, with respect to different choices of the index $n=10, 20, 40$ defining the incident wave $u_n^i$.}
		\label{fig:cassinie}
	\end{figure}
	
	\begin{figure}[htbp]
		\centering
		\begin{minipage}[t]{0.3\textwidth}
			\centering
			\includegraphics[width=\linewidth]{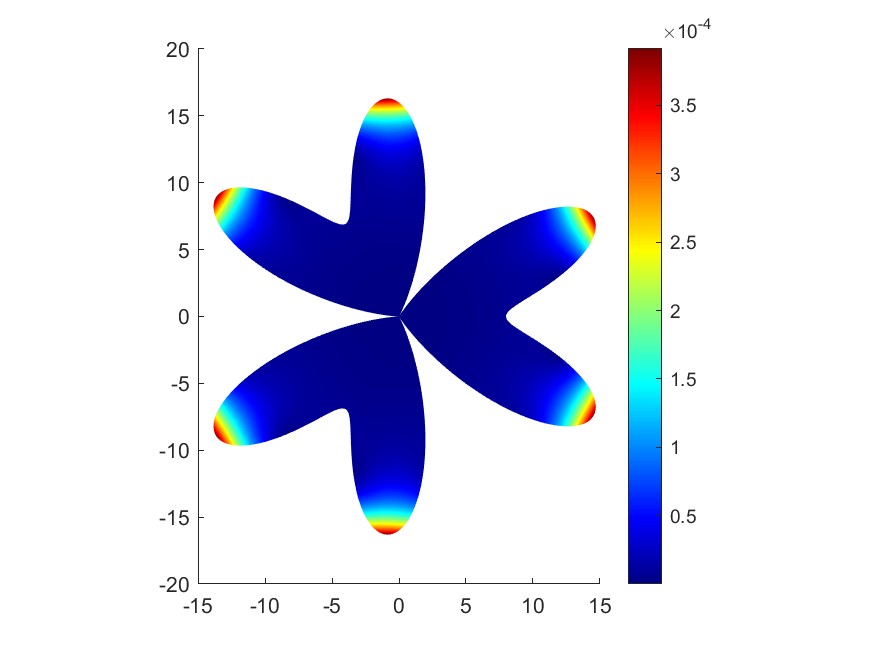}
		\end{minipage}%
		\begin{minipage}[t]{0.3\textwidth}
			\centering
			\includegraphics[width=\linewidth]{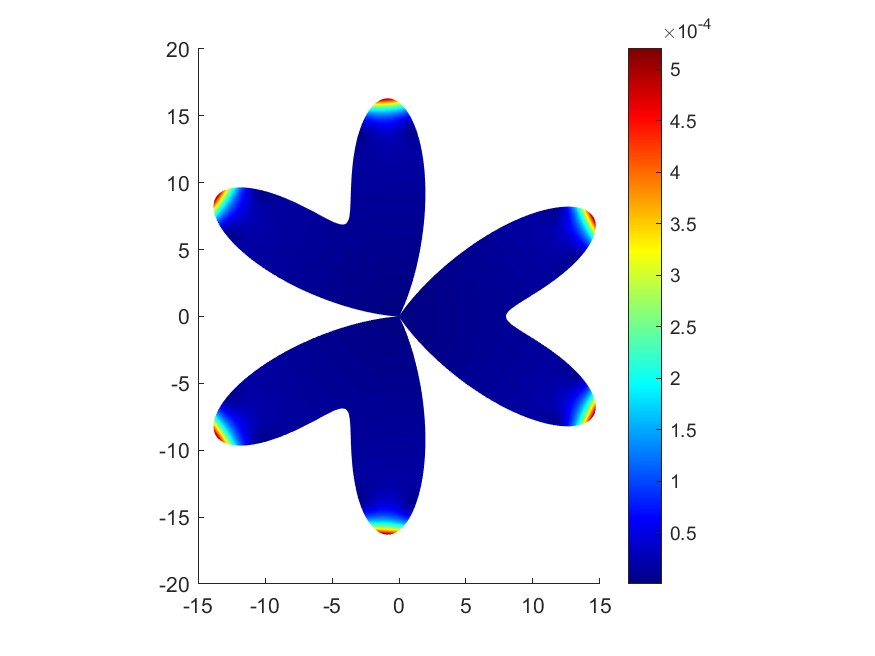}
		\end{minipage}%
		\begin{minipage}[t]{0.3\textwidth}
			\centering
			\includegraphics[width=\linewidth]{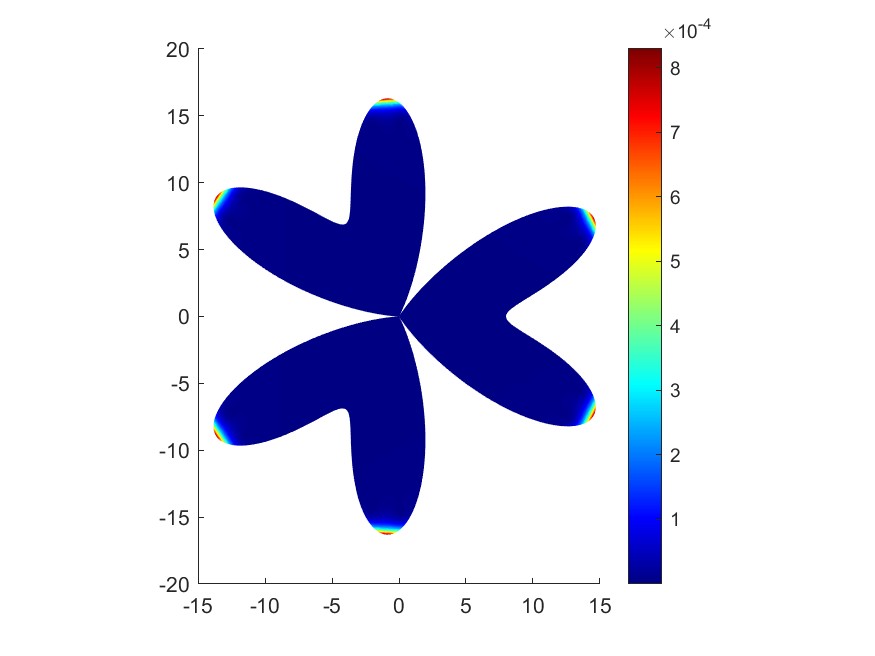}
		\end{minipage}%
		\caption{Boundary localization of the internal total field $|u_n(\mathbf{x})|$  for a clover-shaped high-contrast medium $\Omega$ in $\mathbb{R}^2$, with respect to different choices of the index $n=10, 20, 40$ defining the incident wave $u_n^i$.}
		\label{fig:cloveri}
	\end{figure}
	
	\begin{figure}[htbp]
		\centering
		\begin{minipage}[t]{0.3\textwidth}
			\centering
			\includegraphics[width=\linewidth]{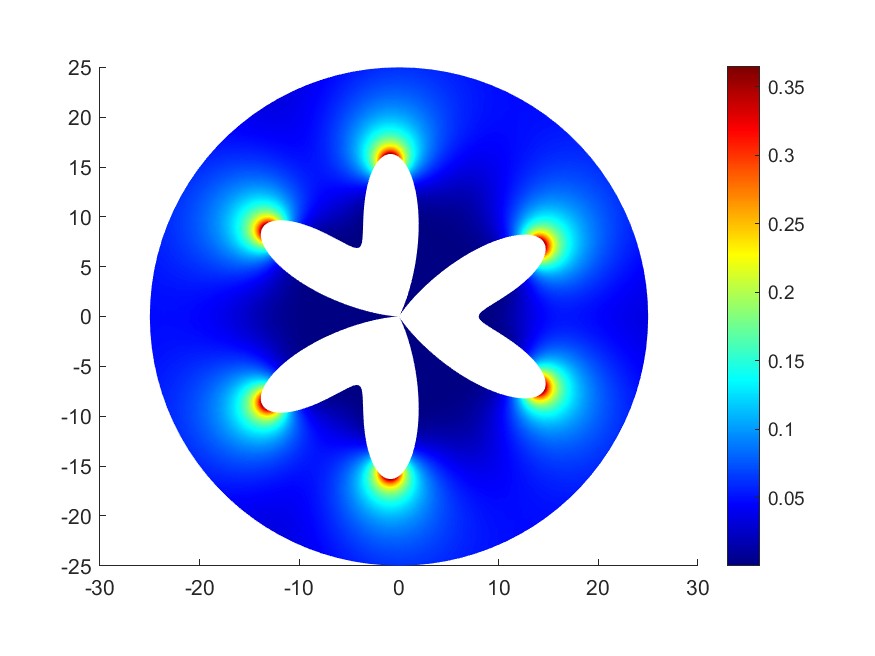}
		\end{minipage}%
		\begin{minipage}[t]{0.3\textwidth}
			\centering
			\includegraphics[width=\linewidth]{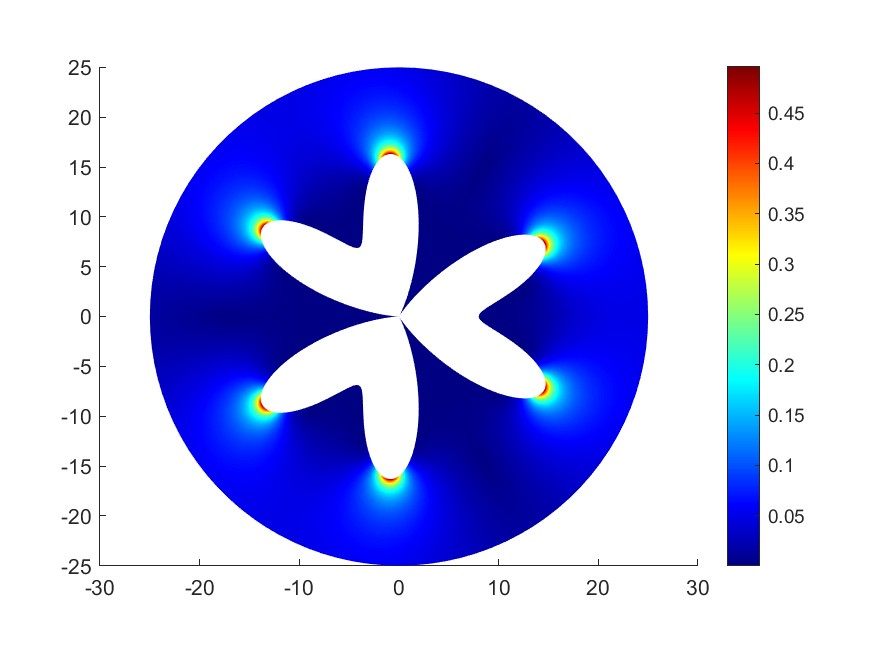}
		\end{minipage}%
		\begin{minipage}[t]{0.3\textwidth}
			\centering
			\includegraphics[width=\linewidth]{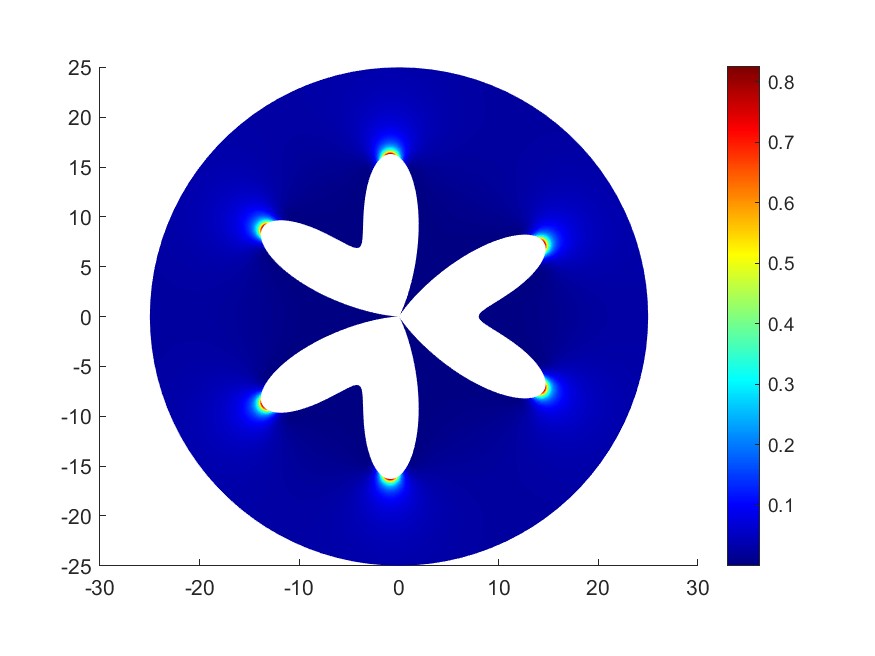}
		\end{minipage}%
		\caption{Boundary localization of the external scattered field $|u_n^s(\mathbf{x})|$ for a clover-shaped high-contrast medium $\Omega$ in $\mathbb{R}^2$, with respect to different choices of the index $n=10, 20, 40$ defining the incident wave $u_n^i$.}
		\label{fig:clovere}
	\end{figure}
	
	\begin{figure}[htbp]
		\centering
		\begin{minipage}[t]{0.3\textwidth}
			\centering
			\includegraphics[width=\linewidth]{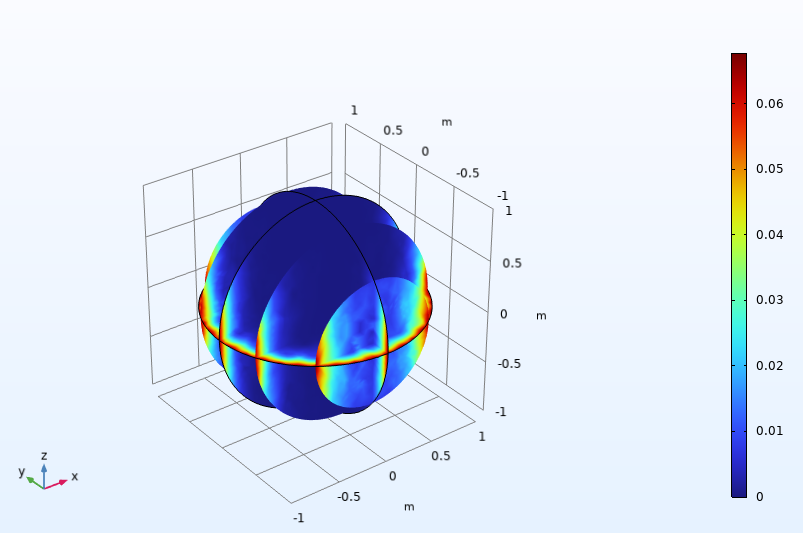}
		\end{minipage}%
		\begin{minipage}[t]{0.3\textwidth}
			\centering
			\includegraphics[width=\linewidth]{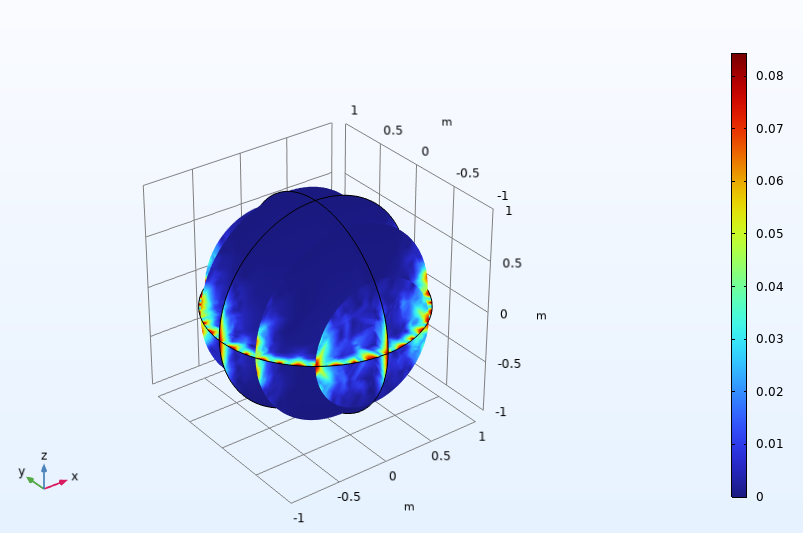}
		\end{minipage}%
		\begin{minipage}[t]{0.3\textwidth}
			\centering
			\includegraphics[width=\linewidth]{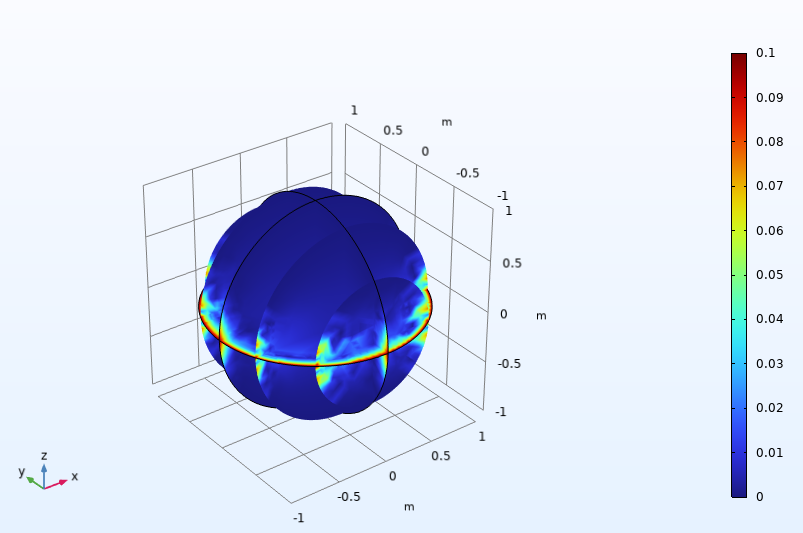}
		\end{minipage}%
		\caption{Boundary localization of the internal total field $|u_n(\mathbf{x})|$ for a ball high-contrast medium $\Omega$ in $\mathbb{R}^2$, with respect to different choices of the index $n=10, 20, 40$ and $m=n$ defining the incident wave $u_n^i$.}
		\label{fig:spherei}
	\end{figure}

	\begin{figure}[htbp]
		\centering
		\begin{minipage}[t]{0.3\textwidth}
			\centering
			\includegraphics[width=\linewidth]{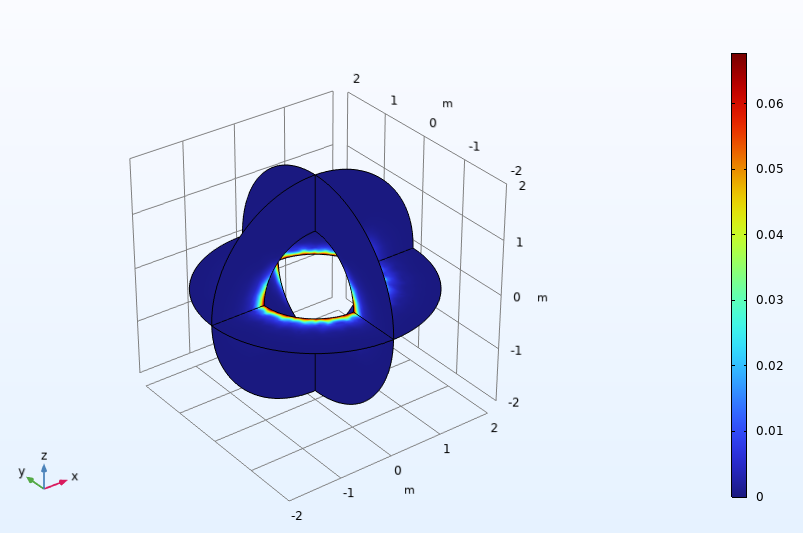}
		\end{minipage}%
		\begin{minipage}[t]{0.3\textwidth}
			\centering
			\includegraphics[width=\linewidth]{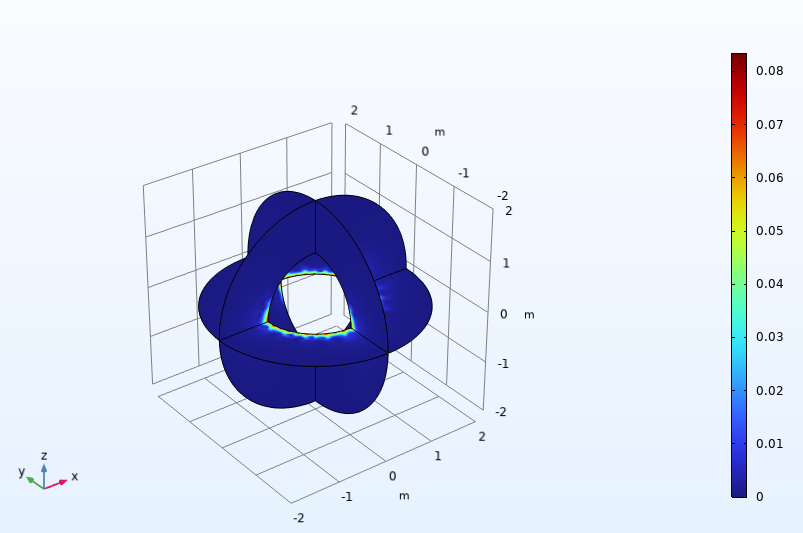}
		\end{minipage}%
		\begin{minipage}[t]{0.3\textwidth}
			\centering
			\includegraphics[width=\linewidth]{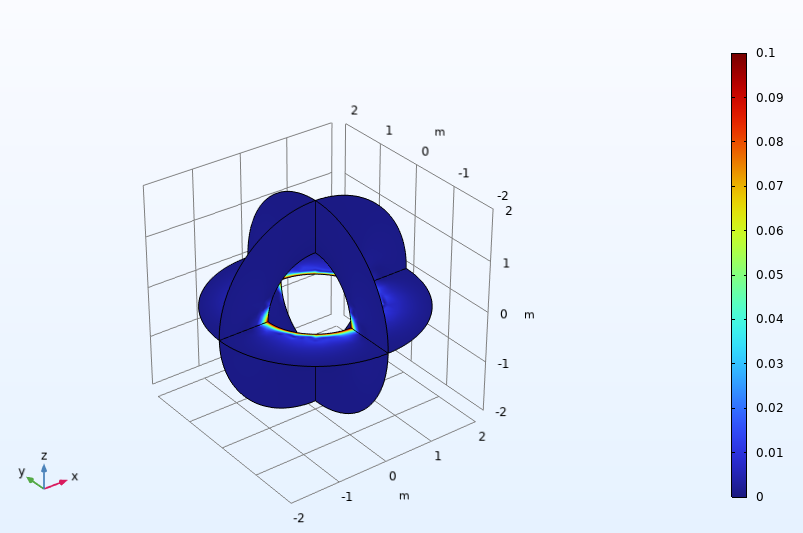}
		\end{minipage}%
		\caption{Boundary localization of the external scattered field $|u_n^s(\mathbf{x})|$ for a ball high-contrast medium $\Omega$ in $\mathbb{R}^2$, with respect to different choices of the index $n=10, 20, 40$ and $m=n$ defining the incident wave $u_n^i$.}
		\label{fig:spheree}
	\end{figure}

	\begin{figure}[h]
		\centering
		\begin{subfigure}{0.45\textwidth}
			\centering
			\includegraphics[width=\textwidth]{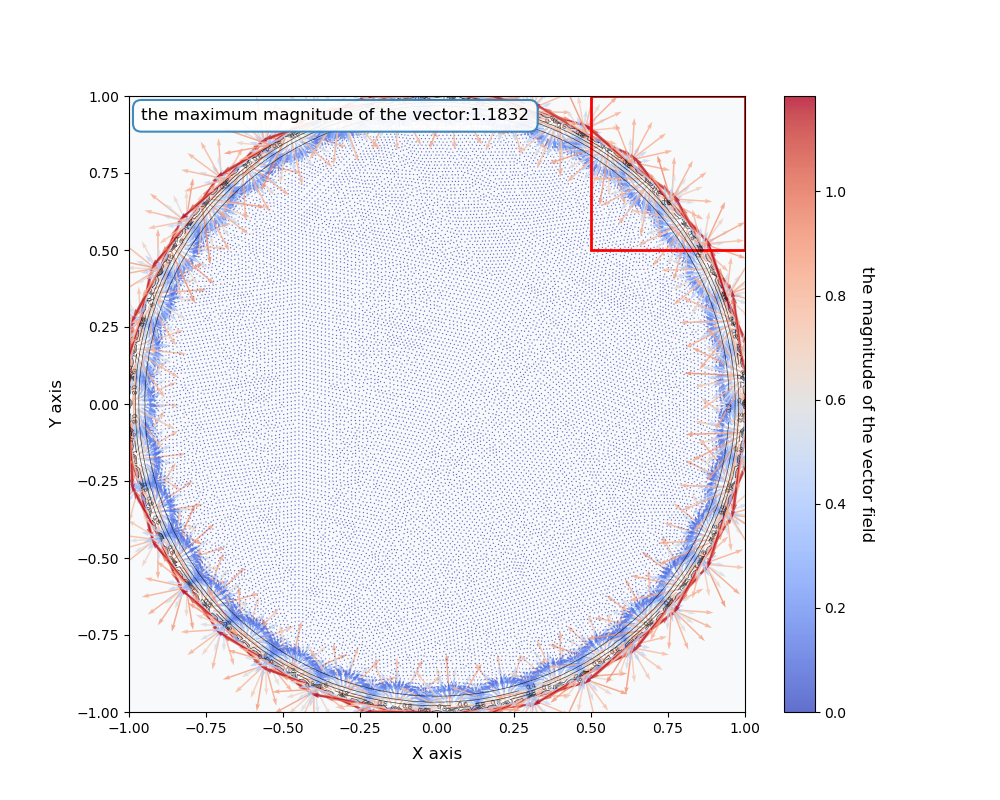}
			\caption{High oscillations of the real part of the gradient of the internal total field \(  u_n \).}  
			\label{fig:gu_circle_A}
		\end{subfigure}
		\hfill
		\begin{subfigure}{0.45\textwidth}
			\centering
			\includegraphics[width=\textwidth]{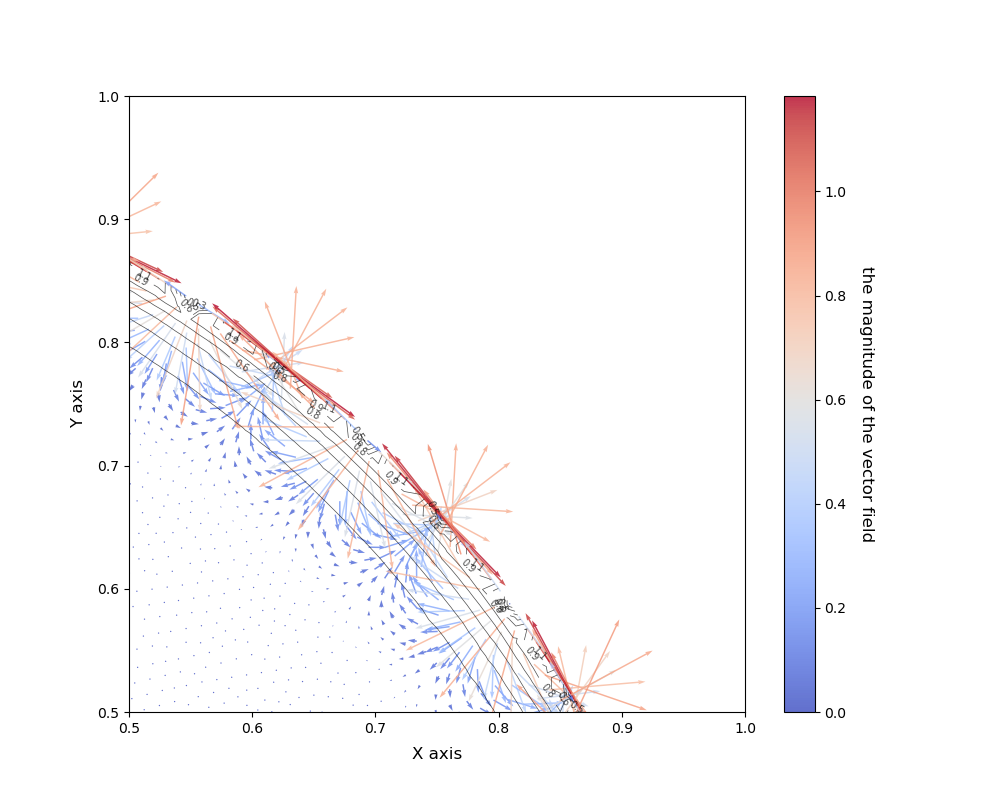}
			\caption{Local illustrations of \eqref{fig:gu_circle_A} in the region  \([0.5,1] \times [0.5,1]\).}
			\label{fig:gu_circle_B}
		\end{subfigure}
		\caption{Global and local illustrations of ${\rm Re }(\nabla u_n)$ for the occurrence of the surface resonance in  a high-contrast unit disk  $\Omega$ with the high-contrast parameter $\delta=0.01$ and the fixed index $n=35$.}
		\label{fig:gu_circle}
	\end{figure}

	\begin{figure}[h]
	\centering
	\begin{subfigure}{0.45\textwidth}
		\centering
		\includegraphics[width=\textwidth]{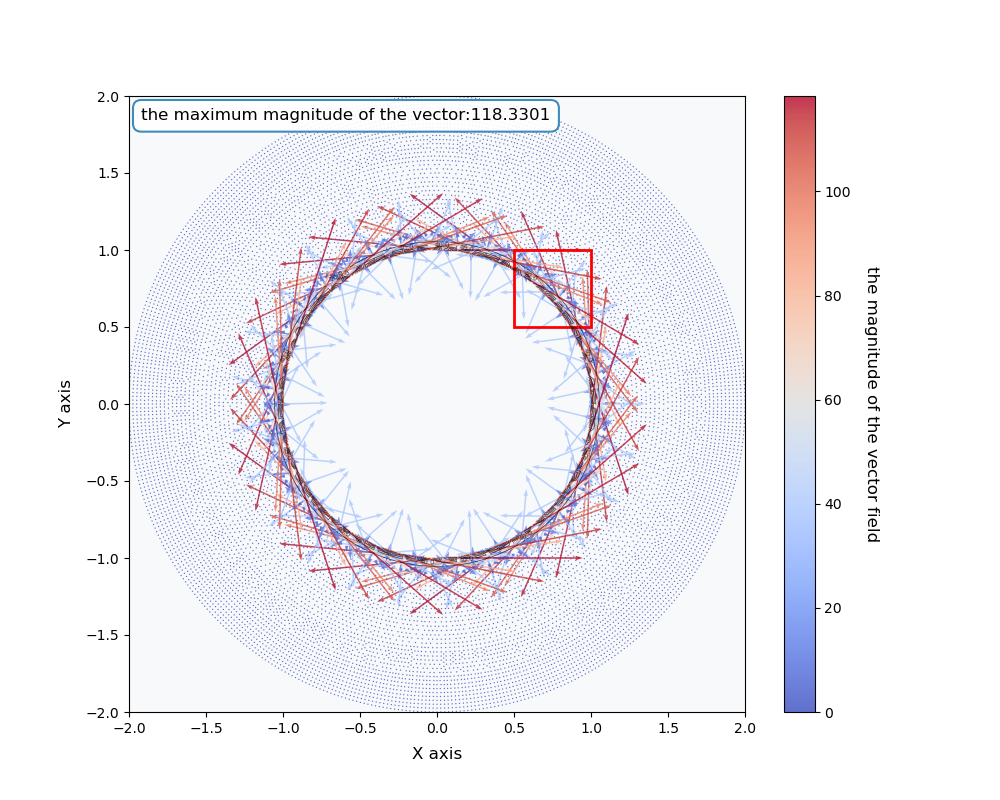}
		\caption{High oscillations of the real part of the gradient of  the external scattered field \( u_n^s \).}
		\label{fig:gus_circle_A}
	\end{subfigure}
	\hfill
	\begin{subfigure}{0.45\textwidth}
		\centering
		\includegraphics[width=\textwidth]{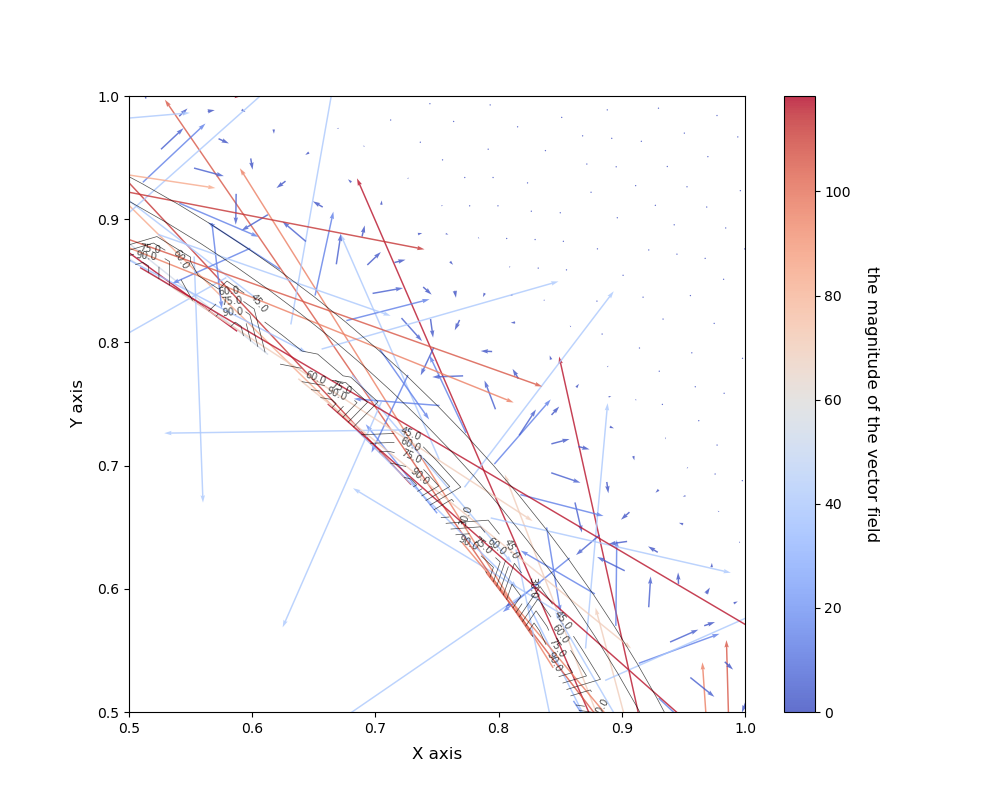}
		\caption{Local illustrations of \eqref{fig:gus_circle_A} in the region  \([0.5,1] \times [0.5,1]\).}
		\label{fig:gu_circle_B}
	\end{subfigure}
	\caption{Global and local illustrations of ${\rm Re }(\nabla u_n^s)$ for the occurrence of the surface resonance in  a high-contrast unit disk  $\Omega$ with the high-contrast parameter $\delta=0.01$ and the fixed index $n=35$.}
	\label{fig:gus_circle}
\end{figure}

\begin{figure}[h]
	\centering
	\begin{subfigure}{0.45\textwidth}
		\centering
		\includegraphics[width=\textwidth]{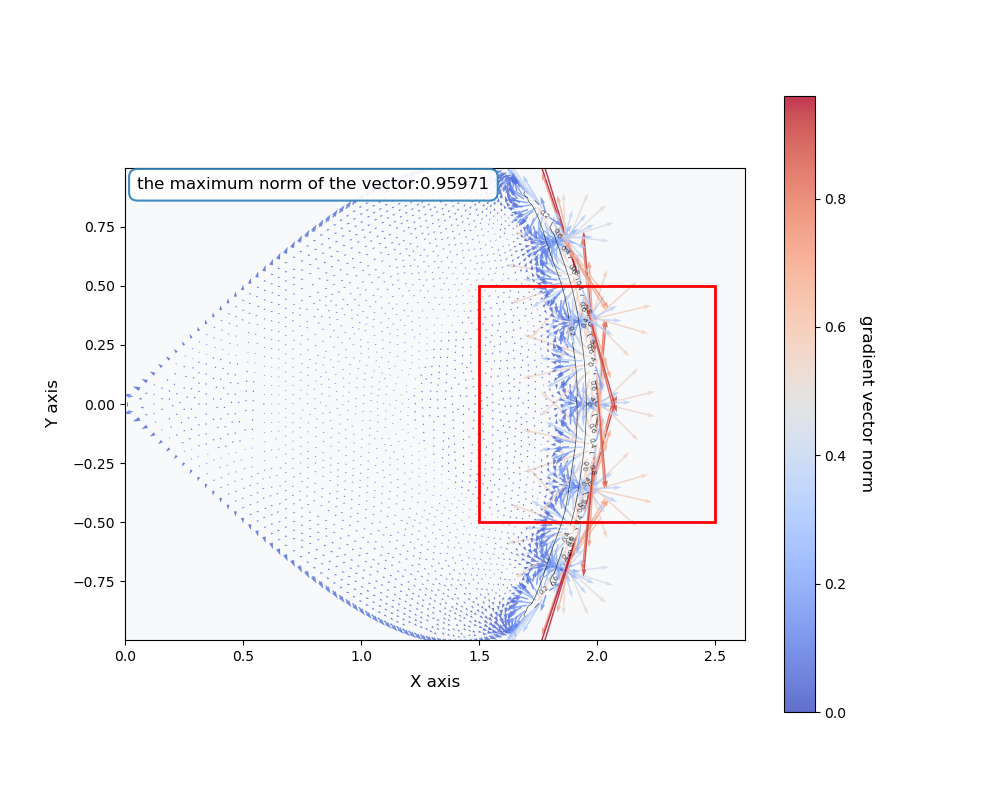}
		\caption{High oscillations of the real part of the gradient of  the internal total field \( u_n \).}
		\label{fig:gu_corner_A}
	\end{subfigure}
	\hfill
	\begin{subfigure}{0.45\textwidth}
		\centering
		\includegraphics[width=\textwidth]{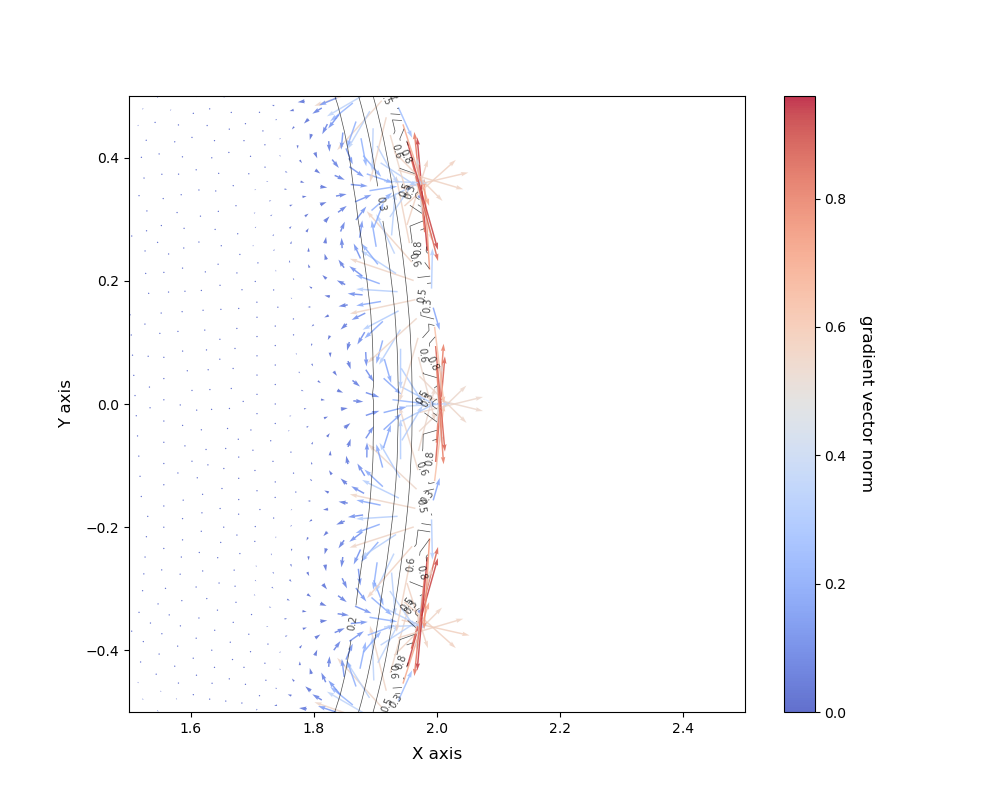}
		\caption{Local illustrations of \eqref{fig:gu_corner_A} in the region   \([1.5,2.5] \times [-0.5,0.5]\).}
		\label{fig:gu_corner_B}
	\end{subfigure}
	\caption{Global and local illustrations of ${\rm Re }(\nabla u_n)$ for  the corner high-contrast medium $\Omega$ in $\mathbb{R}^2$ with the high-contrast parameter $\delta=0.01$ and the fixed index $n=35$.}
	\label{fig:gu_corner}
\end{figure}

	\begin{figure}[h]
		\centering
		\begin{subfigure}{0.45\textwidth}
			\centering
			\includegraphics[width=\textwidth]{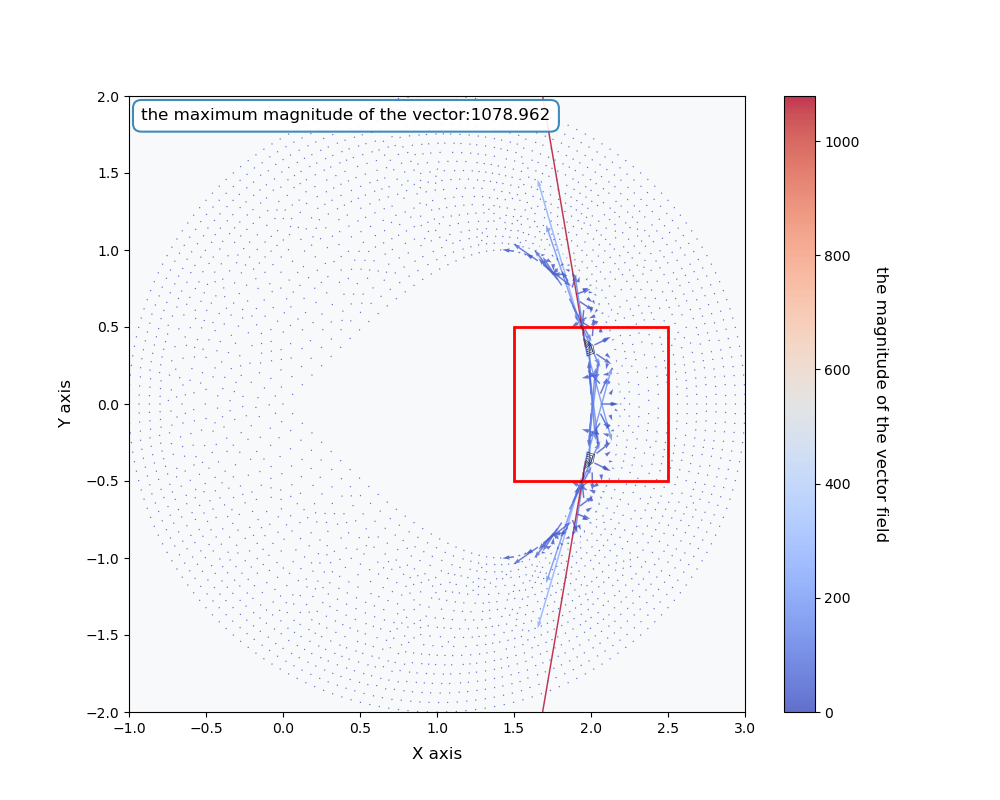}
			\caption{High oscillations of the real part of the gradient of  the external scattered field \( u_n^s \).}
			\label{fig:gus_corner_A}
		\end{subfigure}
		\hfill
		\begin{subfigure}{0.45\textwidth}
			\centering
			\includegraphics[width=\textwidth]{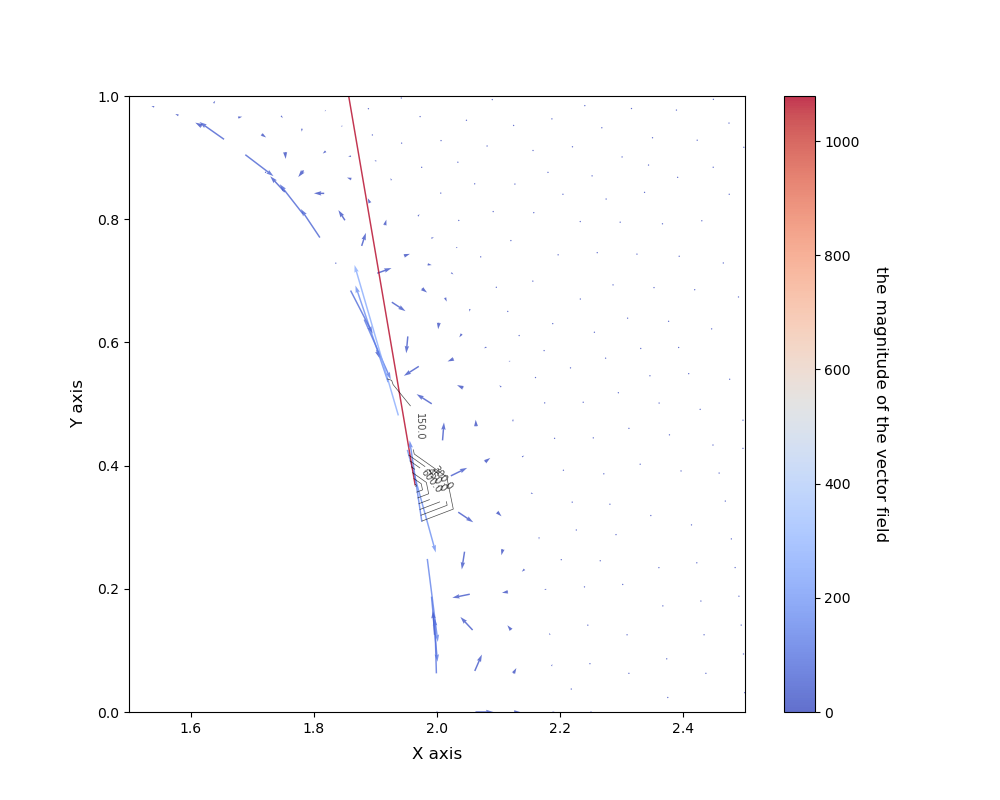}
			\caption{Local illustrations of \eqref{fig:gus_corner_A} in the region \([1.5,2.5] \times [0,1]\).}
			\label{fig:gus_corner_B}
		\end{subfigure}
		\caption{Global and local illustrations of ${\rm Re }(\nabla u_n^s)$ for  the corner high-contrast medium $\Omega$ in $\mathbb{R}^2$  with the high-contrast parameter $\delta=0.01$ and the fixed index $n=35$.}
		\label{fig:gus_corner}
	\end{figure}

\end{document}